\def\disp{\displaystyle}
\def\crr{\cr\noalign{\vskip2mm}}
\def\dref#1{(\ref{#1})}
\theoremstyle{plain}
\newtheorem{theorem}{Theorem}[section]
\newtheorem{lemma}{Lemma}[section]
\newtheorem{proposition}{Proposition}[section]
\newtheorem{corollary}{Corollary}[section]
\numberwithin{equation}{section}
\theoremstyle{definition}
\newtheorem{definition}{Definition}
\newtheorem{remark}{Remark}[section]
\newcommand{\R}{{\mathbb R}}
\def\A{\mathcal{A}}
\def\B{\mathcal{B}}
\begin{document}

\title{{\bf
Delay    Compensation  for  Regular   Linear Systems    \footnote{\small
This work is supported by  the National Natural Science
Foundation of China, No.  61873153. }} }

\author{ Hongyinping Feng
 \footnote{\small
 Email: fhyp@sxu.edu.cn.}
   \crr {\it  School of Mathematical Sciences
  Shanxi University,  Taiyuan, Shanxi, 030006,
China}
 }

 \maketitle

\begin{abstract}
This is the third part of four series papers,
 aiming at the    delay    compensation  for  the  abstract  linear system $(A,B,C)$.
 Both the input   delay and  output  delay are investigated.
   We first propose a full state feedback control    to  stabilize
 the  system   $(A,B)$
 with   input   delay and then   design   a Luenberger-like observer  for  the system $(A,C)$
 in terms of the delayed output.
 We  formulate the   delay  compensation  in  the framework of       regular linear systems.
The developed approach builds upon   an  upper-block-triangle transform that is associated with a Sylvester operator equation.
 It is found   that the controllability/observability  map of      system $(-A,B)$/$(-A,-C)$
  happens to be the solution of the corresponding  Sylvester equation. As an immediate  consequence,
    both the feedback law and the state  observer can be expressed  explicitly in the operator form.
The exponential  stability of the resulting closed-loop system and the exponential  convergence of the  observation error are
established without using the Lyapunov functional approach.
 The  theoretical results are validated through  the   delay  compensation  for a benchmark one-dimensional wave equation.

\vspace{0.3cm}

\noindent {\bf Keywords:}~ Delay,   Luenberger-like observer, regular linear system, observer, stabilization.

\vspace{0.3cm}

\end{abstract}
\section{Introduction}
It is well  known that
the time-delay is ubiquitous in   engineering practices.
Since  the Smith predictor was introduced in \cite{1959Smith},  a fair amount of research results
 about the  delay  compensation have been done  for finite-dimensional systems.
However,   the control  of   infinite-dimensional systems with  time-delay
  is still a challenging problem and
 the corresponding
  results
 are much less than that for finite-dimensional ones.
In \cite{Krstic2009SCLdelay}, \cite{SCLJieQi2019delay} and  \cite{JMWang2017},
the input  delay  is  compensated for the  reaction-diffusion equation by the
 method of partial differential equation (PDE) backstepping.  These results
 can be considered as   more or less the extensions of       delay compensation  for the
 ordinary differential equations (ODEs)  discussed in \cite{Krstic2008delay} and  \cite{KrsticAnnual-Control}.
 When   there are only finite
 unstable modes in the  open-loop system, the input  delay can be compensated by the
 finite-dimensional spectral truncation  technique.
 See for instance   \cite{PrieurandTrelat2019TAC} and \cite{Prieur2020}.


Although arbitrarily small
  delay in the feedback may  destroy the stability of the  system \cite{Datkodekay},
some delays are still  helpful to the system stability.
 When the output delay happens to be the   propagation time, the wave equation can be stabilized by a
 delayed non-collocated  boundary displacement feedback \cite{FengTACdisplacement}.
When the output delay  equals   even multiples of the  propagation time,
 a direct feedback can stabilize the wave equation exponentially \cite{WangSIAMdelay}.
Even the wave equation with nonlinear boundary condition can be stabilized by  the  positive effect of the  delay \cite{FengSIAM2}.

  Stabilizations for one-dimensional   wave and beam equations with arbitrarily long
output delays are discussed in \cite{XuCZ2012} and \cite{YangKYauto} where
the problem is  solved by  both   observer and predictor:
The  state is estimated in the time span where the observation is available;
and the state is predicted in the time interval where the observation is not available.
 Very recently, the   idea used in \cite{XuCZ2012} and \cite{YangKYauto} has been  extended to an abstract   linear systems in \cite{MeiZDTAC,Meizd}  and \cite{MeiZDSCL}.
 However,  the systems considered in \cite{MeiZDTAC} are only limited to
  the conservative  system   and even the common  unstable finite-dimensional linear systems
  do not belong  to such class.
Although the systems  studied in  \cite{MeiZDSCL} can be unstable,
the bounded    control operator must be required.

Since the   delay dynamics  are usually dominated by a transport equation \cite{XuGQ2006delay}, the problem of input or output delay  compensation for infinite-dimensional   systems can be  described by a PDE-PDE
cascade system. In contrast with the  ODE-PDE or PDE-ODE cascade, the
  control of   PDE-PDE cascade is  much  more complicated and the corresponding results are still   fairly
   scarce. Some results about this topic  can be found in the  monograph 	
\cite{KrsticDelaybook}.

  In this paper, we consider the      delay     compensation     for
 general  abstract linear systems.  Both  the input  delay  and output    delay are considered systematically.
  Let  $(A,B,C)$  be a linear system with the state space $Z$, input space $U$ and the output space
$Y$.
%
 The problem is described by
\begin{equation} \label{wxh20202131701}
 \left.\begin{array}{l}
\disp  \dot{z} (t) = Az(t)+B   u(t-\tau ),  \ \
\disp y(t)=Cz(t-\mu),
\end{array}\right.
\end{equation}
where   $y(t)$ is the measured output,  $u(t)$ is the control input, both of them  are
  delayed by $\mu$ and $\tau$ units of time, respectively.
We will study  the input and output    delay  compensation
separately. There are two key issues. The first one is about the stabilization
of system \dref{wxh20202131701}  by  the  state feedback, and the second one
is on the design of   state observer for
system \dref{wxh20202131701} in terms of the delayed output $y(t)$.
Thanks  to    the separation principle of
the linear systems, the output feedback law  of system \dref{wxh20202131701} is almost trivial once
 these two   key issues are addressed.
The developed  approach is systematic  and can  be applied to the general   regular linear systems which cover the common transport  equations,  reaction-diffusion equations, wave equations  and   the Euler-Bernoulli beam equations.


By writing the  delay dynamics as a transport equation,
the delay compensation  for system \dref{wxh20202131701}
then amount  to controlling  or observing a PDE-PDE cascade.
In this paper, the main idea of the  PDE-PDE cascade treatment comes   from  the well known fact that the  upper-block-triangle matrix
 can be decoupled  as  a  block-diagonal  matrix  by an  upper-block-triangle transformation
 that is associated with a
Sylvester matrix  equation. More precisely,
\begin{equation} \label{20207131306}
\begin{pmatrix}
I&S\\
0&I
\end{pmatrix}
\begin{pmatrix}
A_1&Q \\
0&A_2
\end{pmatrix}\begin{pmatrix}
I & S\\
0&I
\end{pmatrix}^{-1}
=
\begin{pmatrix}
A_1&0 \\
0&A_2
\end{pmatrix},
\end{equation}
where  $A_1$, $A_2$ and $Q$ are matrices with  appropriate dimensions,
   $I$ is the  identity matrix
 on  appropriate dimensional   spaces
and
 $S$ is the solution of  the Sylvester equation $A_1S-SA_2=Q$.
 Owing to the block-diagonal structure, either stabilization or observer design of  the transformed system  ${\rm diag}(A_1,A_2)$ is much  simpler than the original upper-block-triangle matrix.
  We will  treat the   delay  compensation  for general regular linear systems by following
 this idea.
 In our previous studies  \cite{FPart1} and \cite{FPart2},
  this idea
    has been used  to compensate the actuator dynamics and sensor dynamics  for abstract linear systems.
  Different from \cite{FPart1} and \cite{FPart2} where at least one of $A_1$ and $A_2$ is required to be bounded,
   the   delay  compensation for infinite-dimensional systems considered in this paper
  always  leads to a   PDE-PDE cascade system.
 Generally speaking, the Sylvester operator equation
with unbounded operators  is hard to be solved.
Fortunately, we find that
the  controllability/observability  map of system $(-A,B)$/$(-A,-C)$    happens to be the
solution of  corresponding    Sylvester operator equation.
As a result, the upper-block-triangle transformation that decouples the  PDE-PDE cascade  system    can be obtained explicitly.


The paper is organized as follows. In Section \ref{Preliminary}, we present  some preliminaries on
the regular linear systems.  Section \ref{shiftsemigroup} investigates the
vanishing shift semigroup which is used to  describe the   delay dynamics.
Section  \ref{Sylvesterinput} considers  a Sylvester operator equation that is crucial  to the input  delay compensation. The state feedback is proposed to
stabilize   system  $(A,B)$ with the input  delay in Section \ref{Actuator}.
 Sections \ref{OupputSyl} and \ref{observer} are devoted to the sensor delay  compensation.
The    Sylvester operator equation  that is used to output
  delay compensation is considered in Section \ref{OupputSyl} and the Luenberger-like   observer is designed in terms of the delayed output in Section \ref{observer} where the exponentially   convergence
of the observer is also proved.  In Section \ref{Application}, the developed approaches are  applied to a one-dimensional wave equation to validate the theoretical results.
 For easy  readability, some results that are less relevant to the delay   compensator design are arranged  in the Appendix.



\section{Background on regular linear systems} \label{Preliminary}
This section presents  a brief overview of the regular linear system theory.
We only summarize  the results that will be used in the  sections thereafter.
We refer  the  interested reader to   the references
 \cite{TucsnakWeiss2009book,Weiss1989regular,Weiss1994MCSS,Weiss1994TAMS} and \cite{Weiss1997TAC} for more details.
  We  first  introduce the definition of   dual   space    with
respect to a pivot space   that has been discussed extensively in \cite{TucsnakWeiss2009book}  and is
crucial  in the theory of
unbounded control and observation.

Suppose that $X$ is a Hilbert space and  $A : D(A )\subset X \to X $ is a densely defined operator  with $\rho(A) \neq \emptyset$.
The operator $A$ can determine two Hilbert spaces:
$(D(A), \|\cdot\|_1)$ and $([D(A^*)]', \|\cdot\|_{-1})$, where
$ [D(A^*)]' $ is the dual space of $ D(A^*) $ with respect to  the  pivot space $X$, and the norms $\|\cdot\|_1$ and $\|\cdot\|_{-1}$ are defined   by
\begin{equation} \label{20191141722}
 \left\{\begin{array}{l}
 \disp \|x\|_1=\|(\beta-A)x\|_X,\ \ \forall\ x\in D(A),\crr
 \disp \|x\|_{-1}=\disp \|(\beta-A)^{-1}x\|_X,   \ \ \forall\ x\in X,
 \end{array}\right. \ \ \beta\in\rho(A).
\end{equation}
These two spaces are independent of the choice of $\beta\in\rho(A)$ since   different  choices
of $\beta$ lead  to  equivalent norms.
 For brevity, we denote
the two spaces   as
$ D(A) $ and $ [D(A^*)]' $ in the sequel.
 The adjoint of $A^*\in \mathcal{L}(D(A^*),X)$, denoted by  $\tilde{A}$, is defined   as
 \begin{equation} \label{20191121602}
 \disp     \langle \tilde{A} x,y\rangle_{ [D(A^*)]', D(A^*)}=
 \langle x,A ^*y\rangle_{X },\ \ \forall\ x\in X,\  y\in D(A ^*).
\end{equation}
It is evident that $\tilde{A} x=Ax$ for any $x\in D(A)$. So
  $\tilde{A}\in \mathcal{L}(X,  [D(A^*)]')$ is an extension  of $A $. Since $A$ is
densely defined,  such an extension is unique. By \cite[Proposition 2.10.3]{TucsnakWeiss2009book},
 we have  $(\beta-\tilde{A})\in \mathcal{L}(X,[D(A^*)]')$  and
$(\beta-\tilde{A})^{-1}\in \mathcal{L}( [D(A^*)]',X)$ which imply that
$\beta-\tilde{A}$ is an isomorphism from $X$ to $[D(A^*)]' $.

Suppose that $Y$ is the output Hilbert space and  $C\in \mathcal{L}(D(A),Y)$. The
$\Lambda$-extension of $C $  with respect to $A $  is defined by
   \begin{equation} \label{20205281601}
 \left. \begin{array}{l}
 \disp C_{ \Lambda}x=\lim\limits_{\lambda\rightarrow +\infty}
C \lambda(\lambda  - {A} )^{-1}x,\ \ \forall\ x\in  D (C_{ \Lambda})=\{x \in X \ |\  \mbox{the   limit exists}\}.
\end{array}\right.
\end{equation}
Define the norm
\begin{equation} \label{20205281609}
  \|x\|_{D(C_{\Lambda})}= \|x\|_X+\sup_{\lambda\geq\lambda_0}
  \|C \lambda(\lambda  - {A} )^{-1}x\|_Y,\ \ \forall\ x\in D(C_{\Lambda}),
\end{equation}
where $\lambda_0\in \R$ such that $[\lambda_0, \infty ) \subset \rho(A) $.
Then, it follows from  \cite[Proposition 5.3]{Weiss1994MCSS} that  $D(C_{ \Lambda})$ with norm   $\|\cdot\|_{D(C_{\Lambda})}$ is
   a Banach space and
$C_{ \Lambda}\in \mathcal{L}(  D(C_{ \Lambda}),Y)$. Moreover,
  we have the continuous embeddings:
 \begin{equation} \label{20205281604}
 D(A) \hookrightarrow  D(C_{ \Lambda})\hookrightarrow  X   \hookrightarrow [ D (A ^*)]'.
\end{equation}

\vskip 0.5cm
The following results  are brought  from \cite{Weiss1997TAC}:
\begin{proposition}\label{Regular}
Let $X$, $U$ and $Y$ be the state space, input space and the output space, respectively.
The triple
 $(A,B,C)$ is said to be  a regular linear system    if and only if
the following assertions hold  true:

  (i)~ $A$ generates a $C_0$-semigroup $e^{At}$ on $X$;

 (ii)~ $B\in \mathcal{L}(U,[D(A^*)]')$ and  $C\in \mathcal{L}(D(A), Y)$ are admissible for the $C_0$-semigroup $e^{At}$;

 (iii)~ $C_{\Lambda}(s -\tilde{A})^{-1}B$ exists for
some (hence, for every ) $s\in \rho(A)$;

(iv)~
 $s\rightarrow \|C_{\Lambda}(s -\tilde{A})^{-1}B\|$ is bounded on some right half-plane.
  \end{proposition}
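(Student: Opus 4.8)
The plan is to read this statement as an equivalence between the abstract notion of a regular linear system --- described by the four operator families consisting of the state-evolution semigroup, the input map $\Phi$, the output map $\Psi$ and the input-output map $\mathbb{F}$, subject to the well-posedness and regularity axioms --- and the concrete generator/admissibility conditions (i)--(iv). I would prove each direction by matching a time-domain property of these families with its resolvent- or Laplace-domain counterpart.

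First I would dispose of (i): by the definition of a well-posed system the state evolution is a strongly continuous semigroup with generator $A$, so this is a matter of unwinding definitions, while conversely (i) furnishes the semigroup $e^{At}$ from which the spaces in \dref{20191141722}, the extension $\tilde A$ of \dref{20191121602} and the embeddings \dref{20205281604} are built. Next I would treat (ii). Admissibility of $B\in\mathcal{L}(U,[D(A^*)]')$ is equivalent to the boundedness into $X$ of the input map $u\mapsto\int_0^t e^{\tilde A(t-s)}Bu(s)\,ds$, and admissibility of $C\in\mathcal{L}(D(A),Y)$ is equivalent to the boundedness of the output map $x\mapsto C_{\Lambda}e^{A\cdot}x$ on $L^2_{\mathrm{loc}}$; these are exactly the boundedness requirements on $\Phi$ and $\Psi$, so this step reduces to the standard admissibility characterizations.

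The substantive content lies in (iii) and (iv), which jointly encode well-posedness and regularity of the input-output behaviour. Using the Laplace transform together with the Paley--Wiener/Plancherel theorem, I would show that boundedness of $\mathbb{F}$ on $L^2$ is equivalent to the transfer function $s\mapsto G(s)$ being a bounded analytic function on some right half-plane, which is precisely (iv) once one identifies $G(s)=C_{\Lambda}(s-\tilde A)^{-1}B+D$. Regularity then amounts to the existence of the feedthrough $D=\lim_{\mathrm{Re}\,s\to+\infty}G(s)$ together with the pointwise output representation $y(t)=C_{\Lambda}x(t)+Du(t)$; condition (iii), the existence of $C_{\Lambda}(s-\tilde A)^{-1}B$ for one --- hence every --- $s\in\rho(A)$, is what makes $G$ well-defined through the $\Lambda$-extension \dref{20205281601} and guarantees that this feedthrough limit exists.

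The hard part will be the regularity step, namely passing rigorously between the resolvent-domain condition (iii) --- the mere existence of the limit defining $C_{\Lambda}(s-\tilde A)^{-1}B$ --- and the time-domain statement that the output is given almost everywhere by $C_{\Lambda}x(t)+Du(t)$. This demands the Lebesgue-extension machinery underlying $C_{\Lambda}$ and a careful limiting argument showing that the behaviour of $G(s)$ as $\mathrm{Re}\,s\to+\infty$ controls the almost-everywhere behaviour of $\Psi$ and $\mathbb{F}$ on step inputs, after which one extends by density. I expect the full force of Weiss's representation theorems to be needed precisely here; the remaining implications are comparatively routine consequences of the admissibility characterizations and the analyticity of the resolvent on $\rho(A)$.
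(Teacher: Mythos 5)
The first thing to note is that the paper does not prove this statement at all: Proposition \ref{Regular} is introduced with ``The following results are brought from \cite{Weiss1997TAC}'' and is used as an imported characterization of regular linear systems, so there is no in-paper argument to compare yours against. Your outline is consistent with how the result is actually established in the cited literature (Weiss's representation theory of well-posed systems and the transfer-function papers \cite{Weiss1994MCSS,Weiss1994TAMS}): identifying (i) with the semigroup axiom, (ii) with the boundedness of the input and output maps, and (iii)--(iv) with the Laplace-domain description of the input-output map is the right dictionary.

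As a proof, however, the proposal has a genuine gap, and you have in effect located it yourself: the entire content of the proposition lives in the step you defer to ``the full force of Weiss's representation theorems.'' Concretely, three things are named but not carried out. First, the converse direction requires you to \emph{construct} a well-posed system from (i)--(iv), i.e.\ to define the input-output family from the candidate transfer function $G(s)=C_{\Lambda}(s-\tilde A)^{-1}B+D$ and verify the composition (concatenation) axioms linking it to the state trajectories; boundedness of $G$ on a half-plane via Paley--Wiener gives an $L^2$-bounded shift-invariant operator, but the compatibility with $\Phi$ and $\Psi$ is a separate verification. Second, the equivalence between existence of the limit in \dref{20205281601} applied to $(s-\tilde A)^{-1}B$ and the almost-everywhere output representation $y(t)=C_{\Lambda}x(t)+Du(t)$ is precisely the regularity theorem, not a consequence of it, so invoking it is circular at the level of a self-contained proof. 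Third, the parenthetical ``for some (hence, for every) $s\in\rho(A)$'' in (iii) itself needs the resolvent-identity argument showing that $(s_1-\tilde A)^{-1}B-(s_2-\tilde A)^{-1}B$ maps $U$ into $D(C_{\Lambda})$ (indeed into $X$ composed with a resolvent), and the related subtlety that (i)--(iv) as literally stated characterize \emph{weak} regularity unless one is careful about the sense in which the limit defining $C_{\Lambda}$ is taken. None of this affects the paper, which legitimately cites the result, but your proposal should either carry out these steps or be presented as a reduction to the specific theorems of \cite{Weiss1994TAMS} and \cite{Weiss1997TAC} rather than as a proof.
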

 \begin{definition} \label{De20204202110}
  Let $X$ and $U$  be Hilbert spaces,    let $A$  be
the generator of a $C_0$-semigroup $e^{At}$ on $X$
and  let $B\in \mathcal{L}(U,[D(A^*)]')$. Then,
$F\in \mathcal{L}(D(A), U)$ stabilizes system  $(A,B)$ exponentially
   if  the following assertions hold:

(i)~  $(A, B, F)$ is a regular triple;

(ii)~  there exists  an $s\in\rho(A)$ such that $I$   is an admissible feedback operator for    $F_{\Lambda}(s-\tilde{A})^{-1}B$;

(iii)~  $A+B F_{\Lambda}$ generates an exponentially stable $C_0$-semigroup $e^{(A+B F_{\Lambda})t}$ on $X$.
\end{definition}
\begin{definition}\label{De20204191058}
  Let $X$ and $Y$  be the Hilbert spaces, and let   $A$  be
the generator of a $C_0$-semigroup $e^{At}$ on $X$ and
  $C\in \mathcal{L}(D(A),Y)$. Then,  $L\in \mathcal{L}(Y,[D(A^*)]')$
 detects system $(A,C)$  exponentially
 if  the following assertions hold  true:

(i)~  $(A,L,C)$ is a regular triple;

(ii)~  there exists an $s\in\rho(A)$ such that $I$   is an admissible feedback operator for    $C_{\Lambda}(s-\tilde{A})^{-1}L$;

(iii)~  $A+L C_{\Lambda}$ generates an exponentially stable $C_0$-semigroup $e^{(A+L C_{\Lambda})t}$ on $X$.
\end{definition}

\section{Vanishing shift semigroup} \label{shiftsemigroup}
It is well known that  the  time-delay  dynamics can be modeled   as  a transport equation which is  usually  associated with
 a vanishing shift semigroup. In this section,  we introduce  some preliminaries on
shift semigroup that is  useful for       delay compensations.

Let $U$ be a Hilbert space with norm $\|\cdot\|_U$.  For any $\alpha>0$,
we denote by  $ L^2([0,\alpha];U)$
the Hilbert space of
the  measurable and square integrable functions from $[0,\alpha]$ to $ U$.
The inner product is
 \begin{equation} \label{20205302222}
  \langle \phi_1,\phi_2\rangle_{ L^2([0,\alpha];U)  }
 =\int_0^{\alpha}\langle  \phi_1(x) ,\phi_2(x)\rangle_{U}dx,\ \ \forall\ \phi_1,\phi_2\in L^2([0,\alpha];U)  .
\end{equation}
 Define the operator $ G_{\alpha}  :D( G_{\alpha}  )\subset  L^2([0,\alpha];U)  \to  L^2([0,\alpha];U)  $ by
\begin{equation} \label{20205292012}
 \left.\begin{array}{l}
 \disp (G_{\alpha}   f)(x)=- \frac{d}{dx}f(x) ,\ \ \forall\ f\in D( G _{\alpha} )=\left\{f\in H^1([0,\alpha];U)\ |\ f (0)=0\right\}.
\end{array}\right.
\end{equation}
Then,  $G_{\alpha}$ generates a vanishing  right shift  semigroup $e^{G_{\alpha}t}$ on
$ L^2([0,\alpha];U)  $,   given by
\begin{equation} \label{202062716}
 \left(e^{G_{\alpha}t} f \right)(x)=\left\{\begin{array}{ll}
  f(x-t),&  x-t\geq0,\\
  0,&x-t<0,
\end{array}\right.\ \ \forall\ f\in L^2([0,\alpha];U).
\end{equation}
The adjoint of $G_{\alpha} $ is
\begin{equation} \label{20205302138}
 \left.\begin{array}{l}
\disp  (G_{\alpha}^*   f)(x)=  \frac{d}{dx}f(x) ,\ \ \forall\ f\in D( G _{\alpha}^* )=\left\{f\in H^1([0,\alpha];U)\ |\ f (\alpha)=0\right\},
\end{array}\right.
\end{equation}
which generates a vanishing  left  shift  semigroup
\begin{equation} \label{202062754}
 \left(e^{G_{\alpha}^*t} f \right)(x)=\left\{\begin{array}{ll}
  f(x+t),&  x+t\leq\alpha,\\
  0,&x+t>\alpha,
\end{array}\right.\ \ \forall\ f\in L^2([0,\alpha];U).
\end{equation}
Obviously, both $e^{G_{\alpha}t}$ and $e^{G_{\alpha}^*t}$   are exponentially stable on $L^2([0,\alpha];U)$.

Let $[H^1([0,\alpha];U)]'$ be the dual  space of $ H^1([0,\alpha];U) $ with respect to the pivot space
$L^2([0,\alpha];U)$.
Define  the operator $B_{\alpha}: U \to [H^1([0,\alpha];U)]'$ by
 \begin{equation} \label{20205302223}
  \langle B_{\alpha} u,f \rangle_{[H^1([0,\alpha];U)]', H^1([0,\alpha];U)}=
  \langle  u,f(0)\rangle_{U },\ \ \forall\ f \in H^1([0,\alpha];U), \ \forall\ u\in U
\end{equation}
and the operator $C_{\alpha}:   D(C_{\alpha})\subset  L^2([0,\alpha];U)   \to U  $ by
  \begin{equation} \label{20205302118}
 C_{\alpha}f  =f (\alpha),\ \ \forall\ f \in D(C_{\alpha})=H^1([0,\alpha];U).
\end{equation}
\begin{lemma} \label{lm202062747}
 For any $\alpha>0$, let $G_{\alpha}$, $B_{\alpha}$ and $C_{\alpha}$ be defined by \dref{20205292012}, \dref{20205302223}
 and \dref{20205302118}, respectively.
 Then, both
  $B_{\alpha}\in \mathcal{L}(U ,[D(G_{\alpha} ^*)]')$ and
  $C_{\alpha}\in \mathcal{L}( D(G_{\alpha}  ),U)$   are admissible for the vanishing right shift semigroup $e^{G_{\alpha}t}$ and
   \begin{equation} \label{202062750}
 (\lambda-\tilde{G}_{\alpha} )^{-1}B_{\alpha}=E_{\lambda},\ \
 \lambda\in  \mathbb{C},
\end{equation}
where
the operator  $E_{\lambda}\in \mathcal{L}( U,  D(C_{\alpha}) )$  is given  by
 \begin{equation} \label{2020732059}
 E_{\lambda}u=e^{- \lambda  x}u,\ x\in [0,\alpha],\ \  \forall\ u\in U.
\end{equation}
Moreover,
$(G_{\alpha}, B_{\alpha},C_{\alpha})$ is a regular linear system.
\end{lemma}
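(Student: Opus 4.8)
The plan is to establish the four assertions in turn, exploiting the fact that all the relevant objects are given by explicit formulas on the finite interval $[0,\alpha]$. I would begin with the resolvent identity \dref{202062750}, since it underpins everything that follows. Noting that $G_{\alpha}$ has empty spectrum (for every $\lambda$ the equation $\lambda f+f'=g$ with $f(0)=0$ is uniquely solvable by variation of constants, so $(\lambda-\tilde G_{\alpha})^{-1}$ is entire), I would verify $(\lambda-\tilde G_{\alpha})E_{\lambda}u=B_{\alpha}u$ in $[D(G_{\alpha}^{*})]'$ by pairing both sides against an arbitrary $f\in D(G_{\alpha}^{*})$. Using the definition \dref{20191121602} of $\tilde G_{\alpha}$ and the adjoint formula $G_{\alpha}^{*}f=f'$ from \dref{20205302138}, an integration by parts of $\langle E_{\lambda}u,f'\rangle$ produces the bulk term $\lambda\langle E_{\lambda}u,f\rangle$ plus a boundary contribution; the endpoint at $x=\alpha$ vanishes because $f(\alpha)=0$, while the endpoint at $x=0$ yields exactly $-\langle u,f(0)\rangle_{U}=-\langle B_{\alpha}u,f\rangle$ by \dref{20205302223}. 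Rearranging gives the identity.

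Next I would settle admissibility. For $C_{\alpha}$, an observation operator for $e^{G_{\alpha}t}$, the estimate is immediate from the explicit shift \dref{202062716}: since $(e^{G_{\alpha}t}f)(\alpha)=f(\alpha-t)$ for $t\le\alpha$ and $0$ otherwise, the change of variable $s=\alpha-t$ gives $\int_{0}^{T}\|C_{\alpha}e^{G_{\alpha}t}f\|_{U}^{2}\,dt\le\|f\|_{L^{2}}^{2}$ for every $T$, which is admissibility. For $B_{\alpha}$ I would pass to the dual: from \dref{20205302223} the adjoint observation operator is $B_{\alpha}^{*}f=f(0)$, associated with the left shift $e^{G_{\alpha}^{*}t}$ of \dref{202062754}; since $(e^{G_{\alpha}^{*}t}f)(0)=f(t)$ for $t\le\alpha$, the same one-line bound $\int_{0}^{T}\|f(t)\|_{U}^{2}\,dt\le\|f\|_{L^{2}}^{2}$ shows $B_{\alpha}^{*}$ is admissible for $e^{G_{\alpha}^{*}t}$, hence $B_{\alpha}$ is admissible for $e^{G_{\alpha}t}$ by the control/observation admissibility duality.

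Finally, regularity is checked against Proposition \ref{Regular}. Condition (i) is the generation property already recorded for the shift semigroup, and (ii) is the admissibility just established. For (iii) and (iv) I would compute the transfer function from the resolvent identity: $(s-\tilde G_{\alpha})^{-1}B_{\alpha}=E_{s}$ by \dref{202062750}, and since $E_{s}u=e^{-sx}u$ lies in $H^{1}=D(C_{\alpha})$, the limit defining $C_{\alpha,\Lambda}$ in \dref{20205281601} exists and returns the point value $(E_{s}u)(\alpha)=e^{-s\alpha}u$. Hence the transfer function is the entire, scalar-valued map $s\mapsto e^{-s\alpha}I_{U}$, which exists for every $s$ and satisfies $\|e^{-s\alpha}I_{U}\|=e^{-\alpha\operatorname{Re}s}$, bounded on every right half-plane. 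This verifies (iii) and (iv) and completes the proof.

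The steps are not deep, since the interval is finite and every operator is explicit; the single place that demands care is the resolvent identity, which must be carried out in the extended space $[D(G_{\alpha}^{*})]'$ rather than in $L^{2}([0,\alpha];U)$. There one has to track the boundary term at $x=0$ correctly and recognize it as the action of $B_{\alpha}$, which is precisely the mechanism by which the unbounded input operator emerges; note in particular that $E_{\lambda}u$ fails to lie in $D(G_{\alpha})$ exactly because $(E_{\lambda}u)(0)=u\neq0$, so this boundary term is nonzero. Everything else reduces to the explicit shift formulas \dref{202062716}, \dref{202062754} and a single integration by parts.
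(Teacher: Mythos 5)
Your proposal is correct and follows essentially the same route as the paper: admissibility of $C_{\alpha}$ by the explicit right-shift formula, admissibility of $B_{\alpha}$ by duality through $B_{\alpha}^{*}f=f(0)$ and the left shift, the resolvent identity $(\lambda-\tilde G_{\alpha})E_{\lambda}=B_{\alpha}$ by pairing against $\phi\in D(G_{\alpha}^{*})$ and integrating by parts, and regularity from the explicit transfer function $e^{-\lambda\alpha}I_{U}$. The only difference is the order of the steps, which is immaterial.
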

\begin{proof}
It follows from \dref{20205302223} that the adjoint of $B_{\alpha}$
 satisfies
$B_{\alpha}^*\in \mathcal{L}( D(G_{\alpha} ^*),U)$ and
$B_{\alpha}^*f  =f (0)$ for any $ f \in D(G_{\alpha}^*) $. From   \dref{202062754},
we deduce that
 \begin{equation} \label{202074952}
\int^{\alpha}_0\|B_{\alpha}^*e^{G_{\alpha}^*t}f\|_U^2dt=
\int^{\alpha}_0\| f(t)\|_U^2dt=\|f\|^2_{L^2([0,\alpha];U)},\ \ f\in D(G_{\alpha}^*),
\end{equation}
which implies that  $B_{\alpha}^*$ is admissible for $e^{G_{\alpha}^*t}$ and thus,
$B_{\alpha} $ is admissible for $e^{G_{\alpha} t}$. Similarly, it follows from   \dref{202062716} and \dref{20205302118} that
$C_{\alpha}\in \mathcal{L}( D(G_{\alpha}  ),U)$ is admissible for $e^{G_{\alpha}t}$.

By a straightforward computation,   it follows that $\rho(G_{\alpha})=\rho(G_{\alpha}^*)=\mathbb{C}$ and
\begin{equation} \label{202074959}
\begin{array}{l}
\left\langle  (\lambda-\tilde{G}_{\alpha} ) E_{\lambda}u,\phi\right\rangle_{[D(G_{\alpha}^*)]',D(G_{\alpha}^*)}=
\left\langle E_{\lambda}u,\left(\overline{\lambda}- {G}_{\alpha}^*\right) \phi\right\rangle_{L^2([0,\alpha];U)}\crr
\disp =\int_0^{\alpha}\langle e^{- \lambda  x}u, \overline{\lambda}\phi(x)\rangle_Udx-
\int_0^{\alpha}\left\langle e^{- \lambda  x}u,  \frac{d}{dx}\phi(x)\right\rangle_Udx\crr
\disp ={\lambda}\int_0^{\alpha}\langle e^{- \lambda  x}u,  \phi(x)\rangle_Udx
+\langle u,\phi(0)\rangle_U - \lambda
\int_0^{\alpha}\langle e^{- \lambda  x}u, \phi(x)\rangle_Udx\crr
=\langle u,\phi(0)\rangle_U,\ \ \forall\ u\in U, \ \phi\in  D(G_{\alpha}^*), \ \lambda\in \mathbb{C},
\end{array}
\end{equation}
which, together with \dref{20205302223}, leads to  $ (\lambda-\tilde{G}_{\alpha} ) E_{\lambda}=B_{\alpha}$.
This means that  \dref{202062750} holds.
By  \dref{20205302118}, \dref{202062750} and \dref{2020732059},  we  conclude that
$C_{\alpha} (\lambda-\tilde{G}_{\alpha} )^{-1}B_{\alpha}u=C_{\alpha}E_{\lambda}u
=e^{-\lambda\alpha}u$ for any $u\in U$. This  implies that
$ C_{\alpha} (\lambda-\tilde{G}_{\alpha} )^{-1}B_{\alpha}\in \mathcal{L}( U)$
and $\lambda\to \|C_{\alpha} (\lambda-\tilde{G}_{\alpha} )^{-1}B_{\alpha}\|$
 is bounded on some right
half-plane.   Hence, $(G_{\alpha}, B_{\alpha},C_{\alpha})$ is a regular linear system.
\end{proof}

As in  \cite[Section 2.2]{Salamon1987},
we define a subspace of  $L^2([0,\alpha];U)$ by
\begin{equation} \label{202062291621}
G_{B_{\alpha}}=\left\{ f\in  L^2([0,\alpha];U) \ |\  \tilde{G}_{ \alpha}f+B_{\alpha}u \in L^2([0,\alpha];U), u\in U  \right\}.
\end{equation}
For any $f\in G_{B_{\alpha}}$, there exists a $u_f \in U$ such that  $ \tilde{G}_{ \alpha }f+B_{\alpha}u_f\in L^2([0,\alpha];U)  $.
This shows that  $\tilde{G}_{\alpha}^{-1}(\tilde{G}_{ \alpha }f+B_{\alpha}u_f) =
  f+\tilde{G}_{\alpha}^{-1}B_{\alpha}u_f
  \in D(G_{\alpha})$ and hence $f\in D(G_{\alpha})+\tilde{G}_{\alpha}^{-1}B_{\alpha}U$.
 So we obtain $G_{B_{\alpha}}\subset D(G_{\alpha})+\tilde{G}_{\alpha}^{-1}B_{\alpha}U$.
For any $g=g_1+\tilde{G}_{\alpha}^{-1}B_{\alpha}u_g \in D(G_{\alpha})+\tilde{G}_{\alpha}^{-1}B_{\alpha}U $ with $g_1\in D(G_{\alpha})$ and $u_g\in U$,
a simple computation shows that
$ \tilde{G}_{ \alpha}g+B_{\alpha}(-u_g)=  \tilde{G}_{ \alpha}g_1 \in
 L^2([0,\alpha];U)$,
which means that   $g\in  G_{B_{\alpha}}  $  and hence $D(G_{\alpha})+\tilde{G}_{\alpha}^{-1}B_{\alpha}U \subset G_{B_{\alpha}} $. Therefore,
 \begin{equation} \label{202062291827}
  G_{B_{\alpha}}= D(G_{\alpha})+\tilde{G}_{\alpha}^{-1}B_{\alpha}U.
 \end{equation}
By
  \cite[Section 2.2]{Salamon1987}, $G_{B_{\alpha}}$  with inner product
  \begin{equation} \label{20206291645}
\| f\|_{ G_{B_{\alpha}}}^2= \| f\|_{L^2([0,\alpha];U)}^2+
\|u_f\|_U^2+\| \tilde{G}_{ \alpha }f+B_{\alpha}u_f \|^2_{L^2([0,\alpha];U)}
\end{equation}
  is   a Hilbert space, where $u_f\in U$ such that
  $ \tilde{G}_{\alpha} f+  B_{\alpha}u_f
  \in L^2([0,\alpha];U) $.   It follows from Lemma \ref{lm202062747}
that $-\tilde{G}_{\alpha}^{-1}B_{\alpha}U=\{ c_u\ |\ c_u(x)\equiv u,\ x\in[0,\alpha], u\in U \}$ which, together with
   \dref{20205292012}, \dref{202062291827}  and \dref{20205302118}, gives
   \begin{equation} \label{20206292003}
  G_{B_{\alpha}}= H^1([0,\alpha];U)=D(C_{\alpha}).
 \end{equation}

\section{Sylvester equation associated with input  delay}
\label{Sylvesterinput}
In this section, we consider a Sylvester equation that is    closely related to  the input
 delay compensation.
Let    $A$ be  a  generator of the $C_0$-semigroup
 $e^{A t}$  on $Z$. Suppose that
    $B\in \mathcal{L}(U,[D(A^*)]')$ is admissible for $e^{-At}$.  Then,
 $B^*\in \mathcal{L}(D(A^*),U)$ is admissible for $e^{-A^*t}$.
By exploiting  \cite[Proposition 4.3.4, p.124]{TucsnakWeiss2009book},
  $B^*e^{A^* (\cdot-\tau  )} h=B^*e^{-A^* ( \tau-\cdot  )} h\in H^1([0,\tau  ];U)$
for any  $h\in D(A^*)$ and $\tau>0$.  As a consequence, we can  define
the operator $S_{\tau  } :  [H^1([0,\tau  ];U)]'   \to [D(A^*)]'$  by
\begin{equation} \label{20205312113}
 \langle S_{\tau  } f, z\rangle_{ [D(A^*)]',D(A^*)} =
\left \langle f,  B^*e^{A^* (\cdot-\tau  )} z \right\rangle_{[H^1([0,\tau  ];U)]' , H^1([0,\tau  ];U)  }
\end{equation}
for any  $z\in D(A^*)$ and $f\in [H^1([0,\tau  ];U)]' $.
Suppose that $g\in  L^2([0,\tau ];U)$. Then
\begin{equation} \label{20205312127}
  \begin{array}{l}
\disp\langle S_{\tau   } g, z\rangle_{ [D(A^*)]',D(A^*)}=
\left\langle g,  B^*e^{A^* (\cdot-\tau  )} z \right\rangle_{ L^2([0,\tau ];U) }  =
   \int_0^{\tau   }\left\langle  g(x),B^*e^{A^* (x-\tau  )}z \right\rangle_{U}dx\crr
\disp \hspace{2cm}=\left\langle  \int_0^{\tau   }e^{\tilde{A} (x-\tau   )}B g(x)dx,z \right\rangle_{[D(A^*)]',D(A^*)}
,\ \  \forall\  z\in D(A^*)\subset Z  .
\end{array}
  \end{equation}
Since   $B$ is  admissible  for $e^{-At}$, we have $\int_0^{\tau   }e^{\tilde{A} (x-\tau   )}B g(x)dx\in Z$  which, together with \dref{20205312127}, implies that
   $S_{\tau  }\in \mathcal{L}(  L^2([0,\tau ];U), Z)$ and
  \begin{equation} \label{20205302131}
 S_{\tau  } g=\int_0^{\tau  }e^{\tilde{A} (x-\tau  )}B g(x)dx,\ \ \ \forall\ g\in  L^2([0,\tau ];U) .
 \end{equation}
This means that $S_{\tau  } $ is an extension of  the controllability map of system $(-A,B)$,
as defined in
\cite[Definition 4.1.3, p.143]{Curtain1995}.
\begin{lemma}\label{lm20205302114}
Let  $(A,B,K)$ be  a regular linear system with the state space $Z$, input space $U$ and the output  space $U$.
 Suppose that $G_{\tau  }$,  $B_{\tau  }$ and $C_{\tau  }$ are  defined by \dref{20205292012},
\dref{20205302223}  and  \dref{20205302118} with $\alpha=\tau$, respectively.
 Then,   the operator $S_{\tau  }\in \mathcal{L}(  L^2([0,\tau ];U) , Z)$  defined by \dref{20205312113} satisfies:
  \begin{equation} \label{20205302207}
 S_{\tau  }B_{\tau  }= e^{ -\tilde{A}\tau   }B \in \mathcal{L}(U,[D(A^*)]')
 \end{equation}
and
 \begin{equation} \label{20206271839}
 \left\{\begin{array}{l}
\disp \tilde{A}S_{\tau  }f-S_{\tau  }\tilde{G}_{\tau  }f=BC_{\tau }f,\crr
\disp  K_{\Lambda}e^{A\tau } S_{\tau  }f \in U,
\end{array}\right.
\ \ \forall\ \ f\in G_{B_{\tau}} ,
\end{equation}
 where  $G_{B_{\tau}}$  is defined by \dref{202062291621} with $\alpha=\tau$.
 \end{lemma}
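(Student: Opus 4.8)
The plan is to establish the three assertions \dref{20205302207} and the two lines of \dref{20206271839} separately, leaning throughout on the explicit integral representation \dref{20205302131} of $S_{\tau}$ together with the structural identities \dref{202062291827} and \dref{20206292003} for $G_{B_{\tau}}$. I would dispose of \dref{20205302207} first, directly from the duality definition \dref{20205312113}. For $u\in U$ and $z\in D(A^*)$, pairing $S_{\tau}B_{\tau}u$ against $z$ and invoking the defining relation \dref{20205302223} of $B_{\tau}$ reduces the computation to evaluating the test function $x\mapsto B^*e^{A^*(x-\tau)}z$ at the endpoint $x=0$, since $B_{\tau}$ acts by point evaluation there; this produces $\langle u,B^*e^{-A^*\tau}z\rangle_{U}$. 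Rewriting this through the $B$--$B^*$ duality and the adjoint relation $\langle e^{-\tilde{A}\tau}Bu,z\rangle=\langle Bu,e^{-A^*\tau}z\rangle$ yields $S_{\tau}B_{\tau}u=e^{-\tilde{A}\tau}Bu$. Admissibility of $e^{-At}$ ensures $e^{-A^*\tau}z\in D(A^*)$, so every pairing is legitimate.

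For the Sylvester identity in \dref{20206271839} I would argue in two stages. On the subspace $D(G_{\tau})=\{f\in H^1([0,\tau];U):f(0)=0\}$, where $\tilde{G}_{\tau}f=-f'$ and $C_{\tau}f=f(\tau)$, I test the claimed identity against an arbitrary $z\in D(A^*)$ and convert both sides into genuine $U$-valued integrals through $B^*$ exactly as in \dref{20205312127}; a scalar integration by parts then produces a boundary term $Bf(\tau)$ at $x=\tau$ and a vanishing term at $x=0$ (because $f(0)=0$), giving $\tilde{A}S_{\tau}f-S_{\tau}\tilde{G}_{\tau}f=BC_{\tau}f$. To pass to a general $f\in G_{B_{\tau}}=H^1([0,\tau];U)$ I decompose $f=(f-f(0))+f(0)$ into a part in $D(G_{\tau})$ and a constant; on the constant, \dref{202062291827} gives $\tilde{G}_{\tau}(\mathrm{const})=-B_{\tau}(\cdot)$, so the already proved \dref{20205302207} together with the elementary evaluation $\tilde{A}S_{\tau}(\mathrm{const})=(I-e^{-\tilde{A}\tau})B(\cdot)$ verifies the identity there as well. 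Linearity then delivers it on all of $G_{B_{\tau}}$.

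The genuinely delicate point is the second line of \dref{20206271839}, the assertion that $K_{\Lambda}e^{A\tau}S_{\tau}f$ is a bona fide element of $U$. Using \dref{20205302131} and the group property one finds $e^{A\tau}S_{\tau}f=\int_0^{\tau}e^{\tilde{A}x}Bf(x)\,dx=\int_0^{\tau}e^{\tilde{A}(\tau-s)}Bv(s)\,ds$ with $v(s):=f(\tau-s)$; that is, $e^{A\tau}S_{\tau}f=z(\tau)$ where $z(t)=\int_0^t e^{\tilde{A}(t-s)}Bv(s)\,ds$ is the zero-initial-state trajectory of the regular system $(A,B,K)$ driven by $v$. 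Admissibility places this vector in $Z$, but membership in $D(K_{\Lambda})$ is not automatic: the regular-system output $y(t)=K_{\Lambda}z(t)$ is a priori defined only for almost every $t$, and the hard part will be upgrading this to the single instant $t=\tau$. Here I would exploit the $H^1$-regularity of $v$: since the input-output map of a regular linear system is time-invariant and proper, with $K_{\Lambda}(s-\tilde{A})^{-1}B$ bounded on a right half-plane by Proposition \ref{Regular}, differentiating $z$ and passing $v'\in L^2$ through the same bounded map forces $y\in H^1([0,\tau];U)\hookrightarrow C([0,\tau];U)$. Consequently $y$ admits a continuous representative, and its value $y(\tau)=K_{\Lambda}z(\tau)=K_{\Lambda}e^{A\tau}S_{\tau}f$ is the desired element of $U$. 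This mapping property, rather than the two algebraic identities, is the crux of the lemma.
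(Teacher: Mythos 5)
Your proposal is correct and follows essentially the same route as the paper: the duality computation through $B_{\tau}$'s point evaluation at $x=0$ for \dref{20205302207}, integration by parts on $D(G_{\tau})$ plus the splitting of $f\in G_{B_{\tau}}$ into a $D(G_{\tau})$ part and a constant (the paper writes the constant as $-\tilde{G}_{\tau}^{-1}B_{\tau}u_f=E_0u_f$, which is your $f(0)$) for the Sylvester identity, and the $H^1$-regularity of the zero-state response of the regular triple $(A,B,K)$ under $H^1$ inputs to conclude $K_{\Lambda}e^{A\tau}S_{\tau}f\in U$. The paper simply cites this last regularity property as a known fact rather than sketching its derivation, but the argument is the same.
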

\begin{proof}
  By  \dref{20205312113} and \dref{20205302223}, we deduce
\begin{equation} \label{20205311511}
 \begin{array}{rl}
\disp
 \left \langle S_{\tau  }B_{\tau  } u,       h
  \right\rangle_{[D(A^*)]',D(A^*)}&\disp =
 \left \langle B_{\tau  } u,  B^*e^{A^* (\cdot-\tau  )} h \right\rangle_{[H^1([0,\tau  ];U)]' , H^1([0,\tau  ];U)  }=
    \left\langle  u,B^*e^{-A^*  \tau   }h\right\rangle_{U }  \crr
    &\disp = \left\langle  e^{-\tilde{A}   \tau   }Bu,h\right \rangle_{[D(A^*)]',D(A^*) },\ \
   \forall\ \  u\in U,\ \ h\in D(A^*),
    \end{array}
\end{equation}
which leads to   \dref{20205302207} easily. It follows from \dref{20205292012},  \dref{20205302118} and \dref{20205302131}  that
 \begin{equation} \label{2020531919}
 \begin{array}{rl}
 S_{\tau  }G_{\tau  } g  &\disp =
 - \int_0^{\tau  } e^{\tilde{A} (x-\tau  )}B g  '(x)dx=
 -   B g  (\tau  )+ \tilde{A}\int_0^{\tau  } e^{\tilde{A} (x-\tau  )}B g  (x)dx\crr
 &\disp =-BC_{\tau  }g +\tilde{A}S_{\tau  }g ,\ \ \ \ \forall\ g\in  D(G_{\tau  } ),
 \end{array}
 \end{equation}
 which implies that the Sylvester equation $   \tilde{A}S_{\tau  }-  S_{\tau  }G_{\tau  }  =BC_{\tau  } $ holds on $D(G_{\tau})$.
For any $f\in G_{B_{\tau}} $, by
  \dref{202062291827},
  $f$ can be divided into two parts $f=g_f +\tilde{G}_{\alpha}^{-1}B_{\alpha}u_f$, where
  $g_f\in D(G_{\tau})$ and $u_f\in U$. By Lemma \ref{lm202062747},  $\tilde{G}_{\tau}^{-1}B_{\tau}u_f=-E_0u_f\equiv -u_f$.
Owing to \dref{202062291827}, \dref{2020531919} and $g_f\in D(G_{\tau})$,   the first equation of \dref{20206271839} holds  if we can prove that
\begin{equation} \label{20206292221}
 \disp  \tilde{A}S_{\tau} E_0 u_f- S_{\tau}\tilde{G}_{\tau} E_0 u_f
 =B C_{\tau}E_0 u_f  .
 \end{equation}
Actually, it  follows from \dref{20205302207} that
\begin{equation} \label{20207181018}
 -S_{\tau}\tilde{G}_{\tau} E_0u_f= S_{\tau}\tilde{G}_{\tau} (\tilde{G}_{\tau}^{-1}B_{\tau}u_f )=S_{\tau  }B_{\tau  }u_f
=e^{-\tilde{A}\tau}Bu_f.
\end{equation}
 By \dref{20205302118} and \dref{20205302131}, it follows that
 \begin{equation} \label{20206292214}
  B C_{\tau}E_0 u_f-\tilde{A}S_{\tau} E_0 u_f= Bu_f-\tilde{A}\int_0^\tau
 e^{\tilde{A} (x-\tau  )}B u_fdx=e^{-\tilde{A}\tau}Bu_f,
 \end{equation}
which, together with  \dref{20207181018},  leads to  \dref{20206292221}  easily.
Therefore, the first equation of \dref{20206271839} holds.

Now, we prove the remaining part  of \dref{20206271839}.
Since
$(A,B,K)$ is  regular, we have
 \begin{equation} \label{2020819929}
 K_{\Lambda}\int_0^{\cdot }e^{\tilde{A} (\cdot-s)}Bg( s)  ds\in H^1_{\rm loc}([0,+\infty);U),\ \ \forall\ g\in H^1_{\rm loc}([0,\infty);U).
  \end{equation}
  In particular,
  \begin{equation} \label{2020819937}
 K_{\Lambda}\int_0^{\tau }e^{\tilde{A} (\tau-s)}Bg( s)  ds\in  U ,\ \ \forall\ g\in H^1([0,\tau];U).
  \end{equation}
 For any $f\in G_{B_{\tau}}$,
it follows from \dref{20206292003} that
 $f(\tau-\cdot)\in   H^1([0,\tau];U) $. Since $B$ is admissible for $e^{-At}$,
 \begin{equation} \label{2020819920}
e^{A\tau} S_{\tau  } f= e^{A\tau}\int_0^{\tau  }e^{\tilde{A} (x-\tau  )}Bf(x)  dx =
  \int_0^{\tau  }e^{\tilde{A} x}Bf(x)  dx=\int_0^{\tau  }e^{\tilde{A} (\tau-x)}Bf(\tau-x)  dx
,
 \end{equation}
 which, together with \dref{2020819937}, leads to
 $K_{\Lambda}e^{A\tau } S_{\tau  }f \in U$.
 The proof is complete.


\end{proof}

\begin{lemma}\label{lm202061804}
Suppose that  $A$ is the   generator of a  $C_0$-semigroup
 $e^{A t}$  on $Z$,
  $B\in \mathcal{L}(U,[D(A^*)]')$ is admissible for
 $e^{A t}$ and $K\in \mathcal{L}( D(A ),U )$  is admissible for
 $e^{A t}$.
Then, for any $\tau  >0$,    $K\in \mathcal{L}( D(A ),U )$ stabilizes system $(A ,B)$  exponentially   if and only if
   $K e^{ A \tau  }\in \mathcal{L}( D(A ),U )$ stabilizes system  $(A, e^{ -\tilde{A}\tau   }B)$ exponentially.

\end{lemma}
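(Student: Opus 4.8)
The plan is to verify the three requirements in Definition \ref{De20204202110} for the pair $(A,\hat B)$ with feedback $\hat K$, where I write $\hat B:=e^{-\tilde A\tau}B$ and $\hat K:=Ke^{A\tau}$, by transporting each requirement from $(A,B)$ with feedback $K$ through the bounded operator $e^{A\tau}$. Two elementary identities drive everything. First, since $e^{A\tau}$ commutes with the resolvent $\lambda(\lambda-A)^{-1}$, the defining limit \dref{20205281601} gives $\hat K_\Lambda=(Ke^{A\tau})_\Lambda=K_\Lambda e^{A\tau}$ on $D(\hat K_\Lambda)=e^{-A\tau}D(K_\Lambda)$. Second, the standing assumptions (admissibility of $B$ for $e^{-At}$, which makes $\hat B=S_\tau B_\tau$ well defined by \dref{20205302207}) force $-A$ to generate a $C_0$-semigroup as well, so $A$ generates a $C_0$-group and $e^{A\tau}$ is boundedly invertible with $(e^{A\tau})^{-1}=e^{-A\tau}$; in particular $e^{A\tau}\hat B=e^{A\tau}e^{-\tilde A\tau}B=B$.

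With these in hand, the regularity of the triple $(A,\hat B,\hat K)$ (part (i) of Definition \ref{De20204202110}, i.e. Proposition \ref{Regular}) follows from that of $(A,B,K)$: admissibility of $\hat K=Ke^{A\tau}$ for $e^{At}$ is immediate from admissibility of $K$ and a shift of the time variable, $\int_0^T\|Ke^{A\tau}e^{At}x\|_U^2\,dt=\int_\tau^{T+\tau}\|Ke^{As}x\|_U^2\,ds$; admissibility of $\hat B$ for $e^{At}$ follows from admissibility of $B$ together with the boundedness of $e^{-A\tau}$, since $e^{\tilde A(T-\cdot)}\hat B=e^{-\tilde A\tau}e^{\tilde A(T-\cdot)}B$. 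The transfer functions even coincide:
\[
 \hat K_\Lambda(s-\tilde A)^{-1}\hat B=K_\Lambda e^{A\tau}(s-\tilde A)^{-1}e^{-\tilde A\tau}B=K_\Lambda(s-\tilde A)^{-1}B,\qquad s\in\rho(A),
\]
which settles conditions (iii)--(iv) of Proposition \ref{Regular} and, verbatim, condition (ii) of Definition \ref{De20204202110} (admissibility of $I$ as a feedback operator) simultaneously for both systems.

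It remains to treat condition (iii) of Definition \ref{De20204202110}, the exponential stability of the closed loops. Combining the two identities above yields the similarity of the closed-loop generators,
\[
 e^{-A\tau}\bigl(A+BK_\Lambda\bigr)e^{A\tau}=A+e^{-\tilde A\tau}B\,K_\Lambda e^{A\tau}=A+\hat B\hat K_\Lambda .
\]
Rather than manipulate these unbounded generators directly, I would make this rigorous at the level of trajectories: if $z_1(\cdot)$ is the (weak/mild) closed-loop solution of $\dot z=(A+BK_\Lambda)z$, then $z_2:=e^{-A\tau}z_1$ satisfies $\dot z_2=Az_2+\hat B\,\hat K_\Lambda z_2$, because $\hat K_\Lambda z_2=K_\Lambda e^{A\tau}e^{-A\tau}z_1=K_\Lambda z_1$ and $e^{-A\tau}B=\hat B$. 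Since $e^{A\tau}$ and $e^{-A\tau}$ are bounded and mutually inverse, this sets up a bijection $z_1\leftrightarrow z_2$ with $\|z_2(t)\|\le\|e^{-A\tau}\|\,\|z_1(t)\|$ and $\|z_1(t)\|\le\|e^{A\tau}\|\,\|z_2(t)\|$, so $e^{(A+BK_\Lambda)t}$ is exponentially stable if and only if $e^{(A+\hat B\hat K_\Lambda)t}$ is. Collecting the three conditions, and noting that the whole construction is symmetric under $\tau\mapsto-\tau$ (that is, $e^{A\tau}\leftrightarrow e^{-A\tau}$), proves the equivalence in both directions.

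The step I expect to cause the most trouble is precisely the closed-loop similarity: because $B$ takes values in $[D(A^*)]'$ and $K_\Lambda$ is only defined on $D(K_\Lambda)\subsetneq Z$, the formal conjugation must be justified on the correct domains, and the argument genuinely relies on $e^{A\tau}$ being boundedly invertible. Establishing $e^{A\tau}e^{-\tilde A\tau}B=B$ (equivalently $e^{-A^*\tau}e^{A^*\tau}=I$ on $D(A^*)$, used when pairing against test elements as in \dref{20205311511}) is where the group structure of $A$ is indispensable; once invertibility is available, the trajectory correspondence makes the stability transfer routine.
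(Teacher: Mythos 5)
Your proposal is correct and follows essentially the same route as the paper: admissibility of $Ke^{A\tau}$ and $e^{-\tilde A\tau}B$ by a time shift (the paper's Appendix Lemma \ref{lm2020632124}), the transfer-function identity $ (Ke^{A\tau})_{\Lambda}(\lambda-\tilde A)^{-1}e^{-\tilde A\tau}B=K_{\Lambda}(\lambda-\tilde A)^{-1}B$ (the paper's \dref{202061817}), and the closed-loop similarity $A+e^{-\tilde A\tau}BK_{\Lambda}e^{A\tau}=e^{-\tilde A\tau}(A+BK_{\Lambda})e^{\tilde A\tau}$ (the paper's \dref{202061816}), all matched against Definition \ref{De20204202110}. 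Your extra care with the trajectory-level justification of the similarity and the explicit reliance on the group property of $e^{At}$ are sound refinements rather than a different argument.
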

\begin{proof}
By Lemma \ref{lm2020632124} in Appendix,
 both $e^{ -\tilde{A}\tau   }B$    and   $K e^{ A \tau  }$ are  admissible for $e^{At}$.
 For any $\lambda\in \rho(A)$, a simple computation shows that
 \begin{equation} \label{202061817}
  K_{ \Lambda} (\lambda-\tilde{A} ) ^{-1}B=
   K_{\Lambda}\left[ e^{ A \tau  } (\lambda-\tilde{A} )^{-1}e^{-\tilde{A} \tau  }\right]B=
   (K e^{ A \tau  })_{\Lambda}(\lambda-\tilde{A} )^{-1}e^{-\tilde{A} \tau  }B,
 \end{equation}
 where $(K e^{ A \tau  })_{\Lambda}$ is the $\Lambda$-extension of $K e^{ A \tau  }$ with respect to $A$.
 By  Proposition \ref{Regular},   $(A ,B,K)$ is a regular triple if and only if
$(A ,  e^{-\tilde{A} \tau  }B, K e^{ A \tau  })$ is a regular triple.
 Moreover,  $I$   is an admissible feedback operator for    $K_{ \Lambda}(\lambda-\tilde{A} )^{-1}B$
  is equivalent to that
  $I$   is an admissible feedback operator for    $(K e^{ A \tau  })_{\Lambda}(\lambda-\tilde{A} )^{-1}e^{-\tilde{A} \tau  }B$.
Since $e^{ \tilde{A} \tau  }\in \mathcal{L}(Z)$ and
\begin{equation} \label{202061816}
\left(A  + e^{-\tilde{A} \tau  }B K_{\Lambda} e^{  {A} \tau  }\right)z=
\left(A + e^{-\tilde{A} \tau  }B K_{\Lambda} e^{ \tilde{A} \tau  }\right)z= e^{-\tilde{A}\tau   }(A+ BK_{ \Lambda})
e^{ \tilde{A} \tau  }z,\ \ \forall\ z\in Z,
 \end{equation}
  $A +e^{-\tilde{A} \tau  }B K_{\Lambda} e^{  {A} \tau  }$ is exponentially stable in $Z$ if and only if
  $A +BK_{ \Lambda}$ is exponentially stable in $Z$.
Finally,    the proof is completed  by Definition \ref{De20204202110}.
\end{proof}

%
%
%

\section{Input delays compensator design}\label{Actuator}
This section is devoted to the  input  delay compensation.
Let $Z$ and $U$ be Hilbert spaces.
Suppose that the operator $A: D(A)\subset Z\to Z$ generates a $C_0$-semigroup $e^{At}$ on   $Z$ and
$B\in \mathcal{L}(U,[D(A^*)]')$ is admissible for $e^{At}$.
 Consider the following linear   system:
  \begin{equation} \label{20205291857}
 \left.\begin{array}{l}
\dot{z} (t) = Az(t)+B   u(t-\tau),\ \ \tau>0,
\end{array}\right.
\end{equation}
where $z(t)$ is the state  and
 $u:[-\tau,\infty)\to  U $ is the  control  that is delayed
by $\tau$ units of time.
If we let
 \begin{equation} \label{20205292010}
 \phi(x,t)=u(t-  x) ,\ \ x\in[0,\tau ], \ t\geq 0,
 \end{equation}
then,  system \dref{20205291857}
 can be written  as the  delay free form:
 \begin{equation} \label{20205292011}
\left\{\begin{array}{l}
\disp  \dot{z} (t)=A z(t) + B \phi(\tau ,t),\crr
\disp    \phi_t(x,t)+ \phi_x(x,t)=0 \ \ \mbox{in}\ \ U,\ x\in(0,\tau), \crr
\disp \phi(0,t)=u(t).
\end{array}\right.
\end{equation}
We consider system \dref{20205292011} in the state space
$\mathcal{Z}_{\tau  }(U) =Z\times L^2([0,\tau];U)$ with the
inner product
 \begin{equation} \label{20207101553}
\langle (z_1,f_1)^{\top},(z_2,f_2)^{\top}\rangle_{\mathcal{Z}_{\tau  }(U) }=
\langle z_1,z_2\rangle_Z+\langle f_1,f_2\rangle_{L^2([0,\tau];U)},\ \ \forall\
(z_j,f_j)^{\top}\in \mathcal{Z}_{\tau  }(U), \ j=1,2,
\end{equation}
where $\langle \cdot,\cdot\rangle_{L^2([0,\tau];U)}$
is given by \dref{20205302222} with $\alpha=\tau$.
 In terms of     $G_{\tau  }$,  $B_{\tau  }$ and $C_{\tau  }$  defined by \dref{20205292012},
\dref{20205302223}  and  \dref{20205302118} with $\alpha=\tau$, respectively,
   system  \dref{20205292011} can be written  as the  abstract form
\begin{equation} \label{20205292013}
 \left\{\begin{array}{l}
 \dot{z} (t)=\tilde{A} z(t)+B C_{\tau\Lambda}\phi(\cdot,t),\crr
\phi_t(\cdot,t)= \tilde{G}_{\tau}   \phi(\cdot,t)+B_{\tau}u(t).
\end{array}\right.
\end{equation}
 Let $S_{\tau  } :  [H^1([0,\tau  ];U)]'   \to [D(A^*)]'$  be  defined by
\dref{20205312113} and
  \begin{equation} \label{20205302057S}
\disp    \mathbb{S}  \left( z , f \right)^{\top}= \left( z+S_{\tau}  f,\ f \right)^{\top},\ \ \forall\ ( z, f)^\top\in \mathcal{Z}_{\tau  }(U).
\end{equation}
By Lemma \ref{lm20205302114},
  $ \mathbb{S}\in \mathcal{L}(\mathcal{Z}_{\tau  }(U))$ is invertible and its inverse is given by
\begin{equation} \label{20206271218}
   \left.\begin{array}{l}
 \disp    \mathbb{S}^{-1} \left( z , f \right)^{\top}= \left(   z- S_{\tau}  f,
 f\right)^{\top},\ \ \forall \left(z , f \right)^{\top}\in  \mathcal{Z}_{\tau  }(U).
 \end{array}\right.
\end{equation}
Suppose that $(z,\phi) \in C([0,\infty);\mathcal{Z}_{\tau}(U))$ is a solution of system \dref{20205292013}.
Inspired by \cite{FPart1}, we introduce the   transformation
 \begin{equation} \label{20206251921}
 (\tilde{z} (t),\tilde{\phi}(\cdot,t))^{\top}=
\mathbb{S}( {z} (t), {\phi}(\cdot,t))^{\top}.
\end{equation}
By \dref{20205302207} and   \dref{20206271839}, the transformation
 \dref{20206251921}   converts
    system \dref{20205292013}     into
 \begin{equation} \label{202061959}
 \left\{\begin{array}{l}
 \dot{\tilde{z}} (t)=\tilde{A} \tilde{z}(t)+  e^{-\tilde{A}\tau}B u(t),\crr
\tilde{\phi}_t(\cdot,t)= \tilde{G}_{\tau}   \tilde{\phi}(\cdot,t)+B_{\tau}u(t),
\end{array}\right.
\end{equation}
provided
 ${\phi}(\cdot,t)\in G_{B_{\tau}}$.
Since  $G_{\tau}$ is already stable, the stabilization of
system
\dref{202061959}  amounts to the stabilization of  the pair $(A, e^{ -\tilde{A}\tau }B)$.
 By Lemma \ref{lm202061804}, the stabilizer of \dref{202061959} can be designed as
 \begin{equation} \label{202061953}
 u(t)= K_{\Lambda} e^{  {A} \tau}\tilde{z}(t),\ \ t\geq0,
 \end{equation}
 where   $K\in \mathcal{L}( D(A ),U )$ stabilizes system $(A ,B)$ exponentially.
 Under the feedback \dref{202061953},  we get the closed-loop system of \dref{202061959}
\begin{equation} \label{2020611004}
 \left\{\begin{array}{l}
 \dot{\tilde{z}} (t)=\tilde{A} \tilde{z}(t)+  e^{-\tilde{A}\tau}B K_{\Lambda} e^{  {A} \tau}\tilde{z}(t),\crr
\tilde{\phi}_t(\cdot,t)= \tilde{G}_{\tau} \tilde{  \phi}(\cdot,t)+B_{\tau}K_{\Lambda} e^{  {A} \tau}\tilde{z}(t).
\end{array}\right.
\end{equation}
This transformed system can be written abstractly as
 \begin{equation} \label{20206271127}
 \frac{d}{dt} (\tilde{z}(t),\tilde{\phi}(\cdot,t))^{\top}=
 \mathscr{A}_{\mathbb{S}} (\tilde{z}(t),\tilde{\phi}(\cdot,t))^{\top},
\end{equation}
where $\mathscr{A}_{\mathbb{S}} $ is given by
\begin{equation} \label{20206271128}
\mathscr{A}_{\mathbb{S}} =\begin{pmatrix}
             \tilde{A}+ e^{-\tilde{A}\tau}B K_{\Lambda} e^{  {A} \tau}&0 \\
           B_{\tau}K_{\Lambda} e^{  {A} \tau} & \tilde{G}_{\tau}
            \end{pmatrix}
\end{equation}
with
\begin{equation} \label{20206271129}
D(\mathscr{A}_{\mathbb{S}} )=\left\{
\begin{pmatrix}
 z\\
f
\end{pmatrix}
 \in \mathcal{Z }_{\tau}(U)\ \Big{|}\
 \begin{array}{l}
\disp \tilde{A}z+  e^{-\tilde{A}\tau}B K_{\Lambda} e^{  {A} \tau}z\in Z \\ \disp \tilde{G}_{\tau}f+B_{\tau}K_{\Lambda} e^{  {A} \tau} z\in L^2([0,\tau];U)
\end{array}  \right\}.
\end{equation}
 Combining \dref{20206251921}, \dref{202061953} and \dref{20205302131}, the
stabilizer of the original system  \dref{20205292013}  is
 \begin{equation} \label{2020611019}
 \begin{array}{rl}
 u(t)&\disp=\left( K_{\Lambda} e^{  {A} \tau},0\right) (\tilde{z} (t),\tilde{\phi}(\cdot,t))^{\top}
 =\left( K_{\Lambda} e^{  {A} \tau},0\right)\mathbb{S} ( {z} (t), {\phi}(\cdot,t))^{\top}\crr
 &\disp=
  K_{\Lambda} e^{  {A} \tau}\left[z(t)+  S_{\tau} \phi (\cdot,t)\right]
   \disp=K_{\Lambda} e^{  {A} \tau}z(t) +K_{\Lambda}   \int_0^{\tau} e^{\tilde{A}  x }B \phi (x,t)dx,
 \end{array}
 \end{equation}
under which
the closed-loop system of  \dref{20205292011} is:
\begin{equation} \label{2020611024}
 \left\{\begin{array}{l}
 \dot{z} (t)=A z(t)+B  \phi(\tau,t),\crr
\disp    \phi_t(x,t)+ \phi_x(x,t)=0 \ \ \mbox{in}\ \ U,\ x\in(0,\tau), \crr
\disp \phi(0,t)= K_{\Lambda}   \int_0^{\tau} e^{\tilde{A}  x }B \phi (x,t)dx+K_{\Lambda} e^{  {A} \tau}z(t).
\end{array}\right.
\end{equation}
Define
\begin{equation} \label{20206271137}
\mathscr{A}=\begin{pmatrix}
             \tilde{A}&BC_{\tau\Lambda} \\
             B_{\tau}K_{\Lambda} e^{  {A} \tau} & \tilde{G}_{\tau} +B_{\tau}K_{\Lambda} e^{  {A} \tau}
S_{\tau}
            \end{pmatrix}
\end{equation}
with
\begin{equation} \label{20206271138}
D(\mathscr{A})=\left\{
\begin{pmatrix}
  z \\
  f
\end{pmatrix}
 \in \mathcal{Z}_{\tau}(U)\ {\Large \Big|}\
 \begin{array}{l}
   \tilde{A}z+BC_{\tau\Lambda}f\in Z \\
   \tilde{G}_{\tau}f  +B_{\tau}K_{\Lambda} e^{  {A} \tau}\left(
S_{\tau} f+  z\right)\in  L^2([0,\tau];U)
 \end{array}
  \right   \}.
\end{equation}
Then,  the  closed-loop system   \dref{2020611024} can be written abstractly as
\begin{equation} \label{20206271139}
\frac{d}{dt}(z(t),\phi(\cdot,t))^{\top}=\mathscr{A} (z(t),\phi(\cdot,t))^{\top},\ \ t\geq0.
\end{equation}

\begin{theorem}\label{Th2020611038}
Let  $G_{\tau  }$,  $B_{\tau  }$ and $C_{\tau  }$ be  given  by \dref{20205292012},
\dref{20205302223}  and  \dref{20205302118} with $\alpha=\tau>0$, respectively.
 Suppose that  $S_{\tau}$ is given  by \dref{20205312113}  and   $K\in \mathcal{L}( D(A ),U )$ stabilizes system  $(A ,B)$  exponentially.
Then,
      the operator $\mathscr{A}$   defined by \dref{20206271137} generates an exponentially stable $C_0$-semigroup
      $e^{\mathscr{A}t}$ on  $ \mathcal{Z}_{\tau}(U)  $.
    As a result,  for any $(z(0),\phi(\cdot,0))^{\top}\in  \mathcal{Z}_{\tau}(U)  $, system \dref{2020611024} admits a unique solution $(z,\phi)^{\top}\in C([0,\infty); \mathcal{Z}_{\tau}(U)  )$
that decays to zero exponentially in $ \mathcal{Z}_{\tau}(U)  $ as $t\to\infty$.

\end{theorem}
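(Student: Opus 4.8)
The plan is to exploit the block-triangular decoupling already built into the transformed system. By \dref{20206271218}, $\mathbb{S}\in\mathcal{L}(\mathcal{Z}_{\tau}(U))$ is boundedly invertible, and the change of variables \dref{20206251921} intertwines \dref{20206271139} with \dref{20206271127}; hence $\mathscr{A}=\mathbb{S}^{-1}\mathscr{A}_{\mathbb{S}}\mathbb{S}$, so that $e^{\mathscr{A}t}=\mathbb{S}^{-1}e^{\mathscr{A}_{\mathbb{S}}t}\mathbb{S}$ once $e^{\mathscr{A}_{\mathbb{S}}t}$ is known to be a $C_0$-semigroup. It therefore suffices to prove that the lower-triangular operator $\mathscr{A}_{\mathbb{S}}$ of \dref{20206271128} generates an exponentially stable $C_0$-semigroup on $\mathcal{Z}_{\tau}(U)$; the corresponding statement for $\mathscr{A}$, together with the solvability claim for \dref{2020611024}, then follows at once through the bounded similarity.

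First I would treat the two diagonal blocks. For the $(1,1)$ block $\mathcal{A}_1:=\tilde{A}+e^{-\tilde{A}\tau}BK_{\Lambda}e^{A\tau}$, the identity \dref{202061816} gives $\mathcal{A}_1=e^{-\tilde{A}\tau}(A+BK_{\Lambda})e^{\tilde{A}\tau}$ on $Z$, exhibiting $\mathcal{A}_1$ as similar to the closed-loop generator $A+BK_{\Lambda}$ through the bounded, boundedly invertible operator $e^{\tilde{A}\tau}$. Since $K$ stabilizes $(A,B)$ exponentially, part (iii) of Definition \ref{De20204202110} ensures that $A+BK_{\Lambda}$ generates an exponentially stable semigroup, whence $\mathcal{A}_1$ generates $e^{\mathcal{A}_1 t}=e^{-\tilde{A}\tau}e^{(A+BK_{\Lambda})t}e^{\tilde{A}\tau}$, again exponentially stable. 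For the $(2,2)$ block, $\tilde{G}_{\tau}$ generates the vanishing right shift semigroup \dref{202062716}, which satisfies $e^{G_{\tau}t}=0$ for every $t\ge\tau$ and is thus trivially exponentially (indeed, nilpotently) stable.

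Next I would assemble $e^{\mathscr{A}_{\mathbb{S}}t}$ from the triangular structure by the Duhamel formula: its first component is $e^{\mathcal{A}_1 t}\tilde z_0$, and its second component is $e^{G_{\tau}t}\tilde\phi_0+\int_0^t e^{G_{\tau}(t-s)}B_{\tau}K_{\Lambda}e^{A\tau}e^{\mathcal{A}_1 s}\tilde z_0\,ds$. The first component decays exponentially by the preceding step. For the second, the driving signal $u(s):=K_{\Lambda}e^{A\tau}e^{\mathcal{A}_1 s}\tilde z_0=K_{\Lambda}e^{(A+BK_{\Lambda})s}e^{\tilde{A}\tau}\tilde z_0$ is, by admissibility of $K$ for the exponentially stable closed-loop semigroup $e^{(A+BK_{\Lambda})t}$, an $L^2$-in-time function that decays exponentially; together with the admissibility of $B_{\tau}$ for $e^{G_{\tau}t}$ from Lemma \ref{lm202062747} and the nilpotency $e^{G_{\tau}t}=0$ for $t\ge\tau$, this makes the convolution well defined, continuous with values in $\mathcal{Z}_{\tau}(U)$, and exponentially decaying. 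Checking that this family is a $C_0$-semigroup whose generator coincides with $\mathscr{A}_{\mathbb{S}}$ on the domain \dref{20206271129} then delivers the exponential stability of $e^{\mathscr{A}_{\mathbb{S}}t}$, hence of $e^{\mathscr{A}t}$; existence, uniqueness and exponential decay of the solution of \dref{2020611024} are the usual consequences of $\mathscr{A}$ generating an exponentially stable $C_0$-semigroup.

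The step I expect to be the main obstacle is the rigorous handling of the unbounded coupling operator $B_{\tau}K_{\Lambda}e^{A\tau}$: one must verify that the Duhamel family above is genuinely a $C_0$-semigroup whose generator equals $\mathscr{A}_{\mathbb{S}}$ on the domain \dref{20206271129}, rather than merely a mild-solution operator, which amounts to showing that the admissibility structure is compatible with the triangular coupling so that orbits launched from the domain stay classical. The two features that make this manageable, and that permit the conclusion without any Lyapunov functional, are the reduction of $\mathcal{A}_1$ to the already-stabilizing feedback generator $A+BK_{\Lambda}$ via \dref{202061816} and the finite-time nilpotency of the shift semigroup $e^{G_{\tau}t}$.
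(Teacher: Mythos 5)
Your proposal follows essentially the same route as the paper: conjugate $\mathscr{A}$ to the block-triangular $\mathscr{A}_{\mathbb{S}}$ via the bounded invertible $\mathbb{S}$, reduce the $(1,1)$ block to $A+BK_{\Lambda}$ through the identity \dref{202061816} (the content of Lemma \ref{lm202061804}), and conclude stability of the cascade from the exponential stability of the diagonal blocks together with the admissibility of $K_{\Lambda}e^{A\tau}$ for the closed-loop semigroup and the nilpotency of $e^{G_{\tau}t}$ (the paper packages this last step as Lemma \ref{Lm201912031723} in the Appendix, using the explicit transport solution rather than an abstract Duhamel argument, but the two are equivalent here). The one place where your write-up is materially thinner than the paper is the similarity itself: asserting $\mathscr{A}=\mathbb{S}^{-1}\mathscr{A}_{\mathbb{S}}\mathbb{S}$ for unbounded operators requires verifying $\mathbb{S}D(\mathscr{A})=D(\mathscr{A}_{\mathbb{S}})$, which is where the paper spends most of its proof --- it uses that any $(z,f)^{\top}$ in either domain has $f\in G_{B_{\tau}}$, that the Sylvester identity \dref{20206271839} holds on all of $G_{B_{\tau}}$ (not merely on $D(G_{\tau})$), and that $K_{\Lambda}e^{A\tau}S_{\tau}f\in U$, all supplied by Lemma \ref{lm20205302114}; your stated ``main obstacle'' (identifying the generator of the Duhamel family) is comparatively routine, so you should redirect that effort to the domain correspondence, which is fillable by exactly the computations \dref{2020630812} and \dref{2020630916}.
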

\begin{proof}
By Lemma  \ref{lm20205302114},
   the operator $S_{\tau}$    satisfies   \dref{20205302131}, \dref{20205302207} and
\dref{20206271839}.
 We first claim that
$\mathscr{A}$ is similar to $\mathscr{A}_{\mathbb{S}} $,
  i.e.,
 \begin{equation} \label{20206271213}
 \left.\begin{array}{l}
\disp \mathbb{S}\mathscr{A} \mathbb{S} ^{-1}=\mathscr{A}_{\mathbb{S}} \ \ \mbox{and}\ \  D(\mathscr{A}_{\mathbb{S}} ) = \mathbb{S}D(\mathscr{A}),
\end{array}\right.
\end{equation}
where $\mathbb{S}$ is given by \dref{20205302057S}.


For any $(z,f)^{\top}\in D(\mathscr{A}_{\mathbb{S}} )$,
\dref{20206271129} and \dref{202062291621} imply  that $f\in G_{B_{\tau}}$.
Moreover,
 $K_{\Lambda} e^{  {A} \tau}  S_{\tau}f\in  U$ due to Lemma
\ref{lm20205302114}. Hence,  it follows from   \dref{20206271218}
and \dref{20206271129} that
 \begin{equation} \label{20208191000}
(\tilde{G}_{\tau} +B_{\tau}K_{\Lambda} e^{  {A} \tau}S_{\tau})f+B_{\tau}K_{\Lambda} e^{  {A} \tau} (z-S_{\tau}f)
 =   \tilde{G}_{\tau} f+B_{\tau}K_{\Lambda} e^{  {A} \tau} z\in  L^2([0,\tau];U).
 \end{equation}
 Combine  \dref{20206271839},  \dref{20205302207},  \dref{20206271129}    and the fact
 $S_{\tau  }\in \mathcal{L}(  L^2([0,\tau ];U) , Z)$
   to get
 \begin{equation} \label{2020630812}
\left.\begin{array}{ll}
\disp \tilde{A}(z-S_{\tau}f)+BC_{\tau}f &\disp =
    \tilde{A} z  - S_{\tau}\tilde{G}_{\tau}f  =
    \tilde{A} z+e^{-\tilde{A}\tau}BK_{\Lambda}e^{A\tau}z-  S_{\tau}B_{\tau}K_{\Lambda}e^{A\tau}z
    - S_{\tau}\tilde{G}_{\tau}f\crr
    &\disp =
    (    \tilde{A} z+e^{-\tilde{A}\tau}BK_{\Lambda}e^{A\tau}z)-
     S_{\tau}(\tilde{G}_{\tau} f+B_{\tau}K_{\Lambda} e^{  {A} \tau} z)\in  Z,
\end{array}\right.
 \end{equation}
   which, together with \dref{20208191000}, \dref{20206271138} and \dref{20206271218}, yields
    $\mathbb{S}^{-1}(z,f) ^{\top} \in  D(\mathscr{A} ) $. Hence $D(\mathscr{A}_{\mathbb{S}})\subset \mathbb{S}D(\mathscr{A} )$ due to the arbitrariness of
$(f,z)^ {\top} \in   D(\mathscr{A}_{\mathbb{S}} ) $.

  On the other hand, for any $(z,f)^{\top}\in D(\mathscr{A}  )$,
  \dref{20206271138} and \dref{202062291621} imply  that $f\in G_{B_{\tau}}$ and
  \begin{equation} \label{20208191036}
\tilde{G}_{\tau} f+B_{\tau}K_{\Lambda} e^{  {A} \tau}\left(  S_{\tau }f
+  z\right)\in  L^2([0,\tau];U) .
 \end{equation}
  It follows from    \dref{20205302207},  \dref{20206271839}, \dref{20208191036}  and the fact
 $S_{\tau  }\in \mathcal{L}(  L^2([0,\tau ];U) , Z)$
   that
  \begin{equation} \label{2020630916}
\left.\begin{array}{l}
\disp \tilde{A}(z+S_{\tau}f)+ e^{-\tilde{A}\tau}BK_{\Lambda}e^{A\tau}(z+S_{\tau}f)  =
   \tilde{A}z+S_{\tau}\tilde{G}_{\tau}f+BC_{\tau}f+ S_{\tau}B_{\tau}K_{\Lambda}e^{A\tau}(z+S_{\tau}f)
     \crr
    \hspace{2cm}\disp = ( \tilde{A}z +BC_{\tau}f)+
    S_{\tau}\left[\tilde{G}_{\tau} f+B_{\tau}K_{\Lambda} e^{  {A} \tau}\left(  S_{\tau }f
+  z\right)\right]
    \in  Z.
\end{array}\right.
 \end{equation}
 We combine \dref{20208191036}, \dref{2020630916}, \dref{20205302057S} and \dref{20206271129} to get
   $  \mathbb{S} (z,f)^{\top}\subset  D(\mathscr{A}_{ \mathbb{S}} )$  and hence $  \mathbb{S} D(\mathscr{A} )\subset  D(\mathscr{A}_{ \mathbb{S}} )$.
   In summary, we arrive at
$  \mathbb{S} D(\mathscr{A} ) =  D(\mathscr{A}_{ \mathbb{S}} )$.
Moreover, for any
$(z,f)^{\top}\in D(\mathscr{A}_{\mathbb{S}})$, it follows from \dref{20206271129} and  \dref{202062291621} that
$f\in G_{B_{\tau}}$.
By virtue of \dref{20206271839}, a simple computation shows that
 $\mathbb{S}\mathscr{A} \mathbb{S} ^{-1}(z,f)^{\top}=\mathscr{A}_{\mathbb{S}}(z,f)^{\top}$ for   any
$(z,f)^{\top}\in D(\mathscr{A}_{\mathbb{S}})$. Hence, the similarity \dref{20206271213} holds.

Since   $K\in \mathcal{L}( D(A ),U )$ stabilizes $(A ,B)$ exponentially, it follows from  Lemma \ref{lm202061804}
that  $K e^{ A \tau  }\in \mathcal{L}( D(A ),U )$ stabilizes $(A, e^{ -\tilde{A}\tau   }B)$ exponentially. In particular,   $A +e^{-\tilde{A} \tau  }B K_{\Lambda} e^{  {A} \tau  }$
generates an exponentially   stable $C_0$-semigroup  $e^{(A +e^{-\tilde{A} \tau  }B K_{\Lambda} e^{  {A} \tau  })t}$ on $Z$ and  $K e^{ A \tau  }$ is admissible for
$e^{(A +e^{-\tilde{A} \tau  }B K_{\Lambda} e^{  {A} \tau  })t}$.
By   Lemma \ref{Lm201912031723} in Appendix,    the operator $\mathscr{A}_{\mathbb{S}}$ generates an exponentially stable
   $C_0$-semigroup  $e^{\mathscr{A}_{\mathbb{S}} t}$ on $ \mathcal{Z}_{\tau}(U)  $.
   Owing to the similarity of $\mathscr{A}_{\mathbb{S}}$ and $\mathscr{A}$,
    the operator $\mathscr{A} $ generates an exponentially stable
   $C_0$-semigroup  $e^{\mathscr{A}  t}$ on $ \mathcal{Z}_{\tau}(U)  $ as well.
 The proof is complete.
\end{proof}


\begin{remark}\label{Re2020716940}
When $A$ is a  matrix,  it follows from \dref{20205292010} that the  controller \dref{2020611019}
takes form:
 \begin{equation} \label{2020716942}
 \begin{array}{l}
 u(t)  \disp
 =K   \int_{t-\tau}^{t} e^{A (t-s) }B u (s)dx+K  e^{ A \tau}z(t),\ \ t\geq\tau,
 \end{array}
 \end{equation}
which is the same as
those
obtained from   the spectrum
assignment  approach in \cite{Kwon1980delay}, the ``reduction approach" in \cite{Artstein1982} and the PDE backstepping method in \cite{Krstic2008delay}.
  We  point out that
  the Lyapunov function has not been  used in
  the  stability analysis  of our    method. This avoids the difficulty about the Lyapunov-based technique for stabilization of PDEs with  delay.
  Another advantage of the proposed approach is that
 we never need the target system as that by the backstepping
approach. This avoids the possibility that when the target system  is not chosen properly, there is no state
feedback control and even if the target system is good enough, there is difficulty in solving PDE
kernel equation for the backstepping transformation.
  \end{remark}

\section{Output delays and
   Sylvester equation} \label{OupputSyl}

In this and next sections, we consider the   output   delay compensation,  which
is  the most common
dynamic phenomena arising in control engineering practice.
 Consider the following system in the state space $Z$, input space $U $ and the output space $Y $: \begin{equation} \label{wxh2020321102}
 \left.\begin{array}{l}
\dot{z} (t) = A z (t)+B  u(t),\ \
y(t)=C_{\Lambda} z (t-\mu ),\ \ \mu>0,
\end{array}\right.
\end{equation}
 where
$A : D(A )\subset Z \to Z $
is the system operator,
$B  \in \mathcal{L}(U , [D(A ^*)]')$ is the control operator,  $C \in \mathcal{L}(D(A) ,Y )$ is the  observation operator,
 $u (t)  $ is  the  control input,  and   $y(t)$ is the    measurement that is delayed
by $\mu $ units of time. Let $\psi(x,t)=C_{\Lambda}z(t-x)$ for $x\in[0,\mu ]$ and $t\geq \mu$.
Then,  system \dref{wxh2020321102} can be written as
\begin{equation} \label{wxh2020321104}
\left\{\begin{array}{l}
\disp  \dot{z} (t)=A z (t) + B u(t),\crr
\disp     \psi_t(x,t)+ \psi_x(x,t)=0\  \ \mbox{in}\ Y ,\ \ x\in[0,\mu ], \crr
\disp  \psi(0,t)=C_{\Lambda}z (t),\crr
\disp y(t)= \psi(\mu ,t).
\end{array}\right.
\end{equation}
We consider system \dref{wxh2020321104} in state space $\mathcal{Z}_{\mu}(Y)= Z\times  L^2([0,\mu];Y)$.  The inner product of $\mathcal{Z}_{\mu}(Y)$ is given by
\dref{20207101553} with $\alpha=\mu$ and $U=Y$.
 In terms of the operators $G_{\mu}$, $B_{\mu}$ and $C_\mu$, which are given by
 \dref{20205292012}, \dref{20205302223}  and  \dref{20205302118} with $\alpha=\mu$ and $U=Y$, respectively,
system  \dref{wxh2020321104} can be written  as
\begin{equation} \label{wxh2020321120}
 \left\{\begin{array}{l}
  \dot{z} (t)=\tilde{A} z (t)+B u(t),\crr
  \psi_t(\cdot,t)= \tilde{G}_{\mu} \psi(\cdot,t)+B_{\mu}C_{\Lambda}z(t),\crr
  y(t)=C_{\mu\Lambda}\psi(\cdot,t) .
\end{array}\right.
\end{equation}
The following Theorem guarantees  that    the mapping from each   initial data and control input signal to the   state and the output observation signal is continuous.

\begin{theorem}\label{th2020613745}
Let  $G_{\mu}$, $B_{\mu}$ and $C_\mu$ be  given by
 \dref{20205292012}, \dref{20205302223}  and  \dref{20205302118}  with $\alpha=\mu $ and   $U=Y$, respectively.
Suppose that $(A,B,C)$ is  a well-posed linear system in the sense of   Salamon in \cite{Salamon1987}. Then,
 system \dref{wxh2020321120}   is also well-posed: For any $(z(0),\psi(\cdot,0))^{\top}\in \mathcal{Z}_{\mu}(Y)$ and $u\in L^2_{\rm loc}([0,\infty);U)$, system \dref{wxh2020321120} admits a
 unique solution $(z ,\psi )^{\top}\in C([0,\infty);\mathcal{Z}_{\mu}(Y))$ that satisfies, for any
$ T>0$, there exists a positive constant
 $C_T$
  such that
 \begin{equation} \label{2020613754}
\int_0^T\|y(s)\|_Y^2ds+\|(z(T),\psi(\cdot,T))^{\top}\|_{ \mathcal{Z}_{\mu}(Y) }\leq
C_T\left[\int_0^T\|u(s)\|_U^2ds+\|(z(0),\psi(\cdot,0))^{\top}\|_{ \mathcal{Z}_{\mu}(Y) }\right].
\end{equation}
 \end{theorem}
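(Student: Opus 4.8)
The plan is to exploit the one-directional, lower-triangular coupling in \dref{wxh2020321120}: the $z$-equation $\dot z=\tilde A z+Bu$ is precisely the original well-posed system $(A,B,C)$ driven by $u$, with $C_\Lambda z$ playing the role of its output; this output is then fed as the input of the shift subsystem $(G_\mu,B_\mu,C_\mu)$, whose output $C_{\mu\Lambda}\psi$ is the measurement $y$. Since the $\psi$-dynamics never act back on $z$, I would solve the two blocks in series and then compose their well-posedness estimates. In effect the theorem asserts that the series interconnection of two well-posed systems is again well-posed, and I would establish it directly from the data of the two blocks rather than quoting an abstract cascade theorem.

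First I would solve the $z$-block. By hypothesis $(A,B,C)$ is well-posed in the sense of Salamon, so for $z(0)\in Z$ and $u\in L^2_{\mathrm{loc}}([0,\infty);U)$ there is a unique mild solution $z\in C([0,\infty);Z)$, and for every $T>0$ the intermediate signal $y_1:=C_\Lambda z$ lies in $L^2([0,T];Y)$ with a bound
\[
\int_0^T\|C_\Lambda z(s)\|_Y^2\,ds+\|z(T)\|_Z^2\le C_T^{(1)}\Big[\int_0^T\|u(s)\|_U^2\,ds+\|z(0)\|_Z^2\Big].
\]
This is just the input/output and input/state estimate packaged into the definition of well-posedness, and it is the only place where the hypothesis on $(A,B,C)$ enters.

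Next I would feed $y_1$ into the shift block. By Lemma \ref{lm202062747}, $(G_\mu,B_\mu,C_\mu)$ is a regular, hence well-posed, linear system, $B_\mu$ is admissible for the exponentially stable right shift semigroup $e^{G_\mu t}$, and $C_\mu$ is admissible as well. Consequently, for the initial state $\psi(\cdot,0)\in L^2([0,\mu];Y)$ and the $L^2_{\mathrm{loc}}$ input $y_1$, the variation-of-constants formula $\psi(\cdot,t)=e^{G_\mu t}\psi(\cdot,0)+\int_0^t e^{\tilde G_\mu(t-s)}B_\mu y_1(s)\,ds$ produces a unique $\psi\in C([0,\infty);L^2([0,\mu];Y))$. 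Because the semigroup is a pure transport I can make this explicit, $\psi(x,t)=y_1(t-x)$ for $t\ge x$ and $\psi(x,t)=\psi(x-t,0)$ for $x>t$, so that $y(t)=C_{\mu\Lambda}\psi(\cdot,t)$ equals the delayed signal $C_\Lambda z(t-\mu)$ for $t\ge\mu$ and is read off the initial data for $t<\mu$; this makes transparent both that $y\in L^2_{\mathrm{loc}}$ and the well-posedness estimate of the shift block, $\int_0^T\|y(s)\|_Y^2\,ds+\|\psi(\cdot,T)\|^2\le C_T^{(2)}\big[\int_0^T\|y_1(s)\|_Y^2\,ds+\|\psi(\cdot,0)\|^2\big]$.

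Finally I would chain the two estimates: the output of the first block is exactly the input of the second, so substituting the first bound into the second and recalling $\|(z,\psi)^\top\|_{\mathcal{Z}_{\mu}(Y)}^2=\|z\|_Z^2+\|\psi\|_{L^2}^2$ yields \dref{2020613754} with $C_T$ depending only on $C_T^{(1)}$ and $C_T^{(2)}$; uniqueness of the composite solution follows from uniqueness at each stage. The step I expect to be the genuine obstacle is the interface between the two blocks: the intermediate signal $C_\Lambda z$ is only an $L^2$ function and is not defined pointwise, since $C$ is unbounded, so I must verify that it is an \emph{admissible} input for the shift system and that the composite output $C_{\mu\Lambda}\psi$ remains a well-defined $L^2$ signal. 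This is precisely what the admissibility of $B_\mu$ and the regularity of $(G_\mu,B_\mu,C_\mu)$ from Lemma \ref{lm202062747} guarantee, so the series connection does not destroy well-posedness; checking that the admissibility constants combine into a single finite $C_T$ on each interval $[0,T]$ is the technical heart of the argument.
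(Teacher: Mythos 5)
Your proposal is correct and follows essentially the same route as the paper: exploit the one-directional coupling, solve the $z$-block by the well-posedness of $(A,B,C)$, write the transport solution explicitly as $\psi(x,t)=C_\Lambda z(t-x)$ for $t\ge x$ (initial data otherwise), and chain the two estimates. The paper simply carries out the change-of-variables computation for $\|\psi(\cdot,t)\|_{L^2([0,\mu];Y)}^2$ explicitly rather than invoking the well-posedness of the shift block as a separate packaged estimate, but the content is identical.
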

\begin{proof}
  Since $z$-subsystem  is independent of $\psi$-subsystem,
  the solution of \dref{wxh2020321120} can be expressed explicitly:
  \begin{equation} \label{2020613803}
z(t)=e^{At}z(0)+\int_0^te^{A(t-s)}Bu(s)ds,\ \ \psi(x,t)=\left\{\begin{array}{ll}
  C_{ \Lambda} z(t-x),&  t-x\geq0,\\
  \psi(x-t,0),&t-x<0,
\end{array}\right.
\end{equation}
where $x\in[0,\mu]$. Moreover,
\begin{equation} \label{2020613816}
y(t)= C_{\mu\Lambda}\psi(\cdot,t)=\psi(\mu,t)=\left\{\begin{array}{ll}
  C _{ \Lambda}z(t-\mu),&  t-\mu\geq0,\\
  \psi(\mu-t,0),&t-\mu<0
\end{array}\right.
\end{equation}
and
\begin{equation} \label{2020711953}
\begin{array}{ll}
\disp \|\psi(\cdot,t)\|_{L^2([0,\mu];Y)}^2
&\disp =\left\{\begin{array}{ll}
  \disp \int_0^t\| C _{ \Lambda}z(t-x)\|_Y^2dx+\int_t^\mu\|\psi(x-t,0)\|_Y^2dx,&  0\leq t<\mu \crr
  \disp \int_0^\mu\| C _{ \Lambda}z(t-x)\|_Y^2dx,&t\geq\mu
\end{array}\right.\crr
&\disp =\left\{\begin{array}{ll}
  \disp \int_0^t\| C _{ \Lambda}z( x)\|_Y^2dx+\int_0^{\mu-t} \|\psi(x ,0)\|_Y^2dx,&  0\leq t<\mu\crr
  \disp \int_{t-\mu}^t\| C _{ \Lambda}z( x)\|_Y^2dx,&t\geq\mu .
\end{array}\right.
\end{array}
\end{equation}
 Since  $(A,B,C)$ is    well-posed,
  for any $t>0$, there exists a $C_t>0$ such that
\begin{equation} \label{2020613804}
\int_{0}^{t}\|C_{ \Lambda} z(s)\|_Y^2ds+\|z(t)\|_Z^2\leq C_t\left[\|z(0)\|_Z^2+
\int_{0}^{t}\|u(s)\|_U^2ds \right],
\end{equation}
 which, together with  \dref{2020613803},  \dref{2020613816} and \dref{2020711953}, leads to \dref{2020613754} easily.
  The proof is complete.
\end{proof}

Let  $(A,C)$ be an observation   system with the state space $Z$ and output space $Y$.
Suppose that  $A$  generates a  $C_0$-semigroup $e^{At}$ on $Z$ and
$C\in \mathcal{ L}(D(A),Y)$ is admissible for $e^{At} $.  As defined in
\cite[Definition 4.1.12, p.154]{Curtain1995},   for   any $\mu>0$,
  the observability map  of system $(-A,-C)$  is
 \begin{equation} \label{20206141839}
\begin{array}{ll}
\Psi_{\mu} : &\disp Z\to L^2([0,\mu];Y)\crr
          &\disp    z\to -C_{ \Lambda}e^{  -{A} x}z ,\ \   x\in [0,\mu ],\ \ \forall\ z\in Z.
\end{array}
\end{equation}
Since $C$ is admissible for the $C_0$-semigroup $e^{-At}$,
  $\Psi_{\mu} \in\mathcal{L} (Z,  L^2([0,\mu];Y) )$.
For any     $f\in  D(G_{\mu}^*  )\subset H^1([0,\mu];Y)$,
    it follows from   \dref{20191121602},  \dref{20205302138},  \dref{20205302223}  and the fact
$ [H^1([0,\mu];Y)]'\subset[D(G_{\mu}^*)]'$  that
  \begin{equation} \label{202071817}
  \begin{array}{ll}
\disp \left\langle \tilde{G}_{\mu}\Psi_{\mu}z , f\right\rangle_{[D(G_{\mu}^*)]',D(G_{\mu}^*)}
&\disp =
 \left\langle - C_{\Lambda }e^{-A \cdot}z ,  {G}_{\mu} ^* f\right\rangle_{ L^2([0,\mu];Y) }\crr
&\hspace{-4cm}\disp =
 -\int_0^{\mu }\left\langle C_{\Lambda }e^{-A \sigma}z,\frac{d}{d\sigma}{f} (\sigma) \right\rangle_{Y}d\sigma
 \disp =
  \langle Cz, f(0) \rangle_{Y}-
 \int_0^{\mu }\langle C_{ \Lambda }e^{-A \sigma}Az,f(\sigma) \rangle_{Y}d\sigma
 \crr
 &\hspace{-4cm} \disp =
  \langle B_{\mu}Cz, f  \rangle_{[D(G_{\mu}^*)]',D(G_{\mu}^*)}+
  \langle \Psi_{\mu}A z,f  \rangle_{ L^2([0,\mu];Y) },\ \ \forall\  z\in D(A) .
 \end{array}
 \end{equation}
Owing to the arbitrariness of $f \in  D(G_{\mu}^*  )$,  \dref{202071817} implies that
 the following Sylvester equation holds in $[D(G_{\mu}^*)]'$:
\begin{equation} \label{202070812}
 \tilde{G}_{\mu}\Psi_{\mu}z-\Psi_{\mu}Az= B_{\mu}C_{\Lambda}z ,  \ \ \forall\ z\in D(A).
 \end{equation}

Suppose that   $(A,F_1,C)$ is   a  linear system with the state space $Z$, input space $Y$ and the output space $Y$.
We define a subspace of  $Z$ by
 \begin{equation} \label{20206301054}
 Z_{F_1} =\left\{z\in Z \ |\ \tilde{A}z+F_1y\in Z, y\in Y\right\}.
 \end{equation}
As in  \cite[Section 2.2]{Salamon1987},
  $Z_{F_1}$ with   inner product
  \begin{equation} \label{2020711135}
\| z\|_{ Z_{F_1}}^2= \| z\|_{Z}^2+
\|y_z\|_Y^2+\| \tilde{A}z+F_1y_z \|^2_{Z}
\end{equation}
  is   a Hilbert space, where $y_z \in Y$ such that
  $ \tilde{A}z+  F_1 y_z
  \in Z $.


\begin{lemma}\label{lm2020516938}
Let $Z$ and $Y$ be Hilbert spaces.
Suppose that $A$   generates    a $C_0$-semigroup $e^{At}$  on
$Z$, $C\in  \mathcal{L}(D(A),Y )$ is admissible for $e^{At}$ and  $F_1\in  \mathcal{L}(Y,[D(A^*)]' )$.
Then, $Z_{F_1}$ defined by  \dref{20206301054} satisfies
 \begin{equation} \label{20206301054**}
 Z_{F_1}   =D(A )+ (\lambda-\tilde{A} )^{-1}F_1Y,\ \ \lambda\in \rho(A ).
 \end{equation}
  Suppose that $ Z_{F_1}\subset D(C_{\Lambda})$  and $G_{\mu }$,  $B_{\mu  }$ and $C_{\mu  }$ are  defined by \dref{20205292012},
\dref{20205302223}  and  \dref{20205302118} with $\alpha=\mu$ and $U=Y$, respectively.
Define
the operator $P_{\mu}: (Z+F_1Y)\subset [D(A^*)]' \to [D(G_{\mu}^*)]'$ by
  \begin{equation} \label{202071733}
  P_{\mu} z =
     \left[B_{\mu}C_{\Lambda} +(\lambda-\tilde{G}_{\mu}) \Psi_{\mu}\right](\lambda-\tilde{A} )^{-1}z,\ \ \forall z\in (Z+F_1Y),
     \end{equation}
     where $\lambda\in \rho(A)$.
  Then,   the following assertions hold  true:

(i)~ $ P_{\mu} $  is independent of $\lambda$ and  is an extension of $\Psi_{\mu}$, i.e.,
\begin{equation} \label{202071808}
   P_{\mu}z
      = \Psi_{\mu}z,\ \ \forall\ \ z\in Z;
  \end{equation}

(ii)~ $P_{\mu}$ is satisfied by the following Sylvester equation on $Z_{F_1}$:
   \begin{equation} \label{2020711041}
 \tilde{G}_{\mu}P_{\mu}z-P_{\mu}\tilde{A}z= B_{\mu}C_{\Lambda}z ,  \ \ \forall\ z\in Z_{F_1};
  \end{equation}

(iii)~  $P_{\mu}$ and $C_{\mu\Lambda}$ satisfy:
\begin{equation} \label{202071900}
  C_{\mu\Lambda}P_{\mu}= -C_{\Lambda}e^{-A\mu }\in \mathcal{L}(D(A),Y ).
   \end{equation}


 \end{lemma}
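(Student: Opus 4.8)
The plan is to handle the four claims in turn, in each case reducing matters to the Sylvester identity \dref{202070812} that $\Psi_{\mu}$ already satisfies on $D(A)$. For \dref{20206301054**} I would copy the argument that produced \dref{202062291827}. Fix $\lambda\in\rho(A)$ and use that $(\lambda-\tilde{A})^{-1}$ carries $Z$ into $D(A)$ together with $(\lambda-\tilde{A})^{-1}\tilde{A}=\lambda(\lambda-\tilde{A})^{-1}-I$ on $[D(A^*)]'$. Given $z\in Z_{F_1}$ with $\tilde{A}z+F_1y_z\in Z$, applying $(\lambda-\tilde{A})^{-1}$ and the resolvent relation isolates $z-(\lambda-\tilde{A})^{-1}F_1y_z\in D(A)$, giving one inclusion; conversely, for $z=z_1+(\lambda-\tilde{A})^{-1}F_1y$ with $z_1\in D(A)$ one computes $\tilde{A}z+F_1y=Az_1+\lambda(\lambda-\tilde{A})^{-1}F_1y\in Z$. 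As a by-product, $(\lambda-\tilde{A})^{-1}(Z+F_1Y)\subset Z_{F_1}\subset D(C_{\Lambda})$, which is exactly what makes the term $B_{\mu}C_{\Lambda}(\lambda-\tilde{A})^{-1}z$ in \dref{202071733} meaningful.

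For (i) I would first settle the extension property on $Z$: for $z\in Z$ set $w=(\lambda-\tilde{A})^{-1}z\in D(A)$, so $Aw=\lambda w-z$, and substitute $w$ into \dref{202070812}; rearranging gives exactly $[B_{\mu}C_{\Lambda}+(\lambda-\tilde{G}_{\mu})\Psi_{\mu}]w=\Psi_{\mu}z$, i.e.\ $P_{\mu}z=\Psi_{\mu}z$, which proves \dref{202071808} and $\lambda$-independence on $Z$ at once. Independence on $F_1Y$ I would obtain from the resolvent identity: writing $P_{\mu}^{(\lambda)}$ for the right-hand side of \dref{202071733} for a given $\lambda$, and $v_{\lambda}=(\lambda-\tilde{A})^{-1}F_1y$, one has $v_{\lambda}-v_{\beta}=(\beta-\lambda)(\lambda-\tilde{A})^{-1}v_{\beta}\in D(A)$ with $A(v_{\lambda}-v_{\beta})=\lambda v_{\lambda}-\beta v_{\beta}$; feeding $v_{\lambda}-v_{\beta}$ into \dref{202070812} makes all terms cancel, so $P_{\mu}^{(\lambda)}F_1y=P_{\mu}^{(\beta)}F_1y$.

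Claim (ii) then follows by splitting $z\in Z_{F_1}$ according to \dref{20206301054**} as $z=z_1+v_{\lambda}$ with $z_1\in D(A)$ and $v_{\lambda}=(\lambda-\tilde{A})^{-1}F_1y$. On $z_1$ the identity is \dref{202070812} combined with $P_{\mu}z_1=\Psi_{\mu}z_1$ and $P_{\mu}\tilde{A}z_1=\Psi_{\mu}Az_1$; on $v_{\lambda}$ it is merely the defining formula \dref{202071733} of $P_{\mu}F_1y$ rewritten through $\tilde{A}v_{\lambda}=\lambda v_{\lambda}-F_1y$, whence $P_{\mu}\tilde{A}v_{\lambda}=\lambda\Psi_{\mu}v_{\lambda}-P_{\mu}F_1y$ and the Sylvester relation $\tilde{G}_{\mu}P_{\mu}v_{\lambda}-P_{\mu}\tilde{A}v_{\lambda}=B_{\mu}C_{\Lambda}v_{\lambda}$ drops out. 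Adding the two contributions gives \dref{2020711041}.

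For (iii) I would restrict to $z\in D(A)$, where $P_{\mu}z=\Psi_{\mu}z$ is the function $x\mapsto-C_{\Lambda}e^{-Ax}z=-Ce^{-Ax}z$; as $x\mapsto e^{-Ax}z$ is continuous into $D(A)$ and $C\in\mathcal{L}(D(A),Y)$, this function is continuous on $[0,\mu]$. Using the explicit shift resolvent $[(\nu-\tilde{G}_{\mu})^{-1}f](x)=\int_0^x e^{-\nu(x-s)}f(s)\,ds$ afforded by Lemma \ref{lm202062747}, the $\Lambda$-limit becomes $C_{\mu\Lambda}f=\lim_{\nu\to\infty}\nu\int_0^{\mu}e^{-\nu(\mu-s)}f(s)\,ds$, and recognizing $\nu e^{-\nu(\mu-\cdot)}$ as a one-sided approximate identity at $s=\mu$ shows $C_{\mu\Lambda}f=f(\mu)$ for $f$ continuous at $\mu$; with $f=\Psi_{\mu}z$ this yields $C_{\mu\Lambda}P_{\mu}z=-Ce^{-A\mu}z=-C_{\Lambda}e^{-A\mu}z$, and membership in $\mathcal{L}(D(A),Y)$ is immediate since $e^{-A\mu}$ maps $D(A)$ continuously into itself. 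I expect this last step to be the main obstacle: justifying that the $\Lambda$-extension $C_{\mu\Lambda}$ collapses to the endpoint value, which relies on the continuity (equivalently, $H^1$-regularity near $\mu$) of $\Psi_{\mu}z$ for $z\in D(A)$ and on treating the $\Lambda$-extension rather than a naive trace.
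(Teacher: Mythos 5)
Your proposal is correct, and for \dref{20206301054**}, part (i) and part (ii) it follows essentially the same route as the paper: the same resolvent manipulation for the decomposition of $Z_{F_1}$, the same substitution of $(\lambda-A)^{-1}z$ into \dref{202070812} to get $P_{\mu}z=\Psi_{\mu}z$, and the same splitting $z=z_1+(\lambda-\tilde{A})^{-1}F_1y$ with the Sylvester relation on the second summand extracted from the defining formula \dref{202071733}. Two points where you diverge are worth noting. First, the paper defers the $\lambda$-independence of $P_{\mu}$ to Remark \ref{Re2020727931}, where it is verified by a longer telescoping computation; your resolvent-identity argument (feeding $v_{\lambda}-v_{\beta}\in D(A)$ into \dref{202070812}) is a cleaner way to get the same cancellation. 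Second, for part (iii) the paper argues that $\tilde{G}_{\mu}\Psi_{\mu}z-B_{\mu}C_{\Lambda}z=\Psi_{\mu}Az\in L^2([0,\mu];Y)$ places $\Psi_{\mu}z$ in $G_{B_{\mu}}=H^1([0,\mu];Y)=D(C_{\mu})$ via the identification \dref{20206292003}, and then simply evaluates at $x=\mu$; you instead compute $C_{\mu\Lambda}$ directly from the explicit resolvent of $G_{\mu}$ and a one-sided approximate-identity argument. Your route has the merit of actually justifying that the $\Lambda$-extension collapses to the endpoint trace on $H^1$ functions, a step the paper uses implicitly; the paper's route has the merit of reusing the already-established structure of $G_{B_{\mu}}$. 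Both are sound, so there is no gap to repair.
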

\begin{proof}
For any $z\in Z_{F_1}\subset Z$, there exists a $y  \in Y$ such that
 $ \tilde{A} z+F_1y \in Z  $ and hence $ (\lambda-\tilde{A}) z-F_1y \in Z  $ for any $\lambda\in\rho(A)$.
As  a result,   $z-(\lambda-\tilde{A})^{-1} F_1y  \in D(A) $  and $z\in D(A)+(\lambda-\tilde{A})^{-1} F_1Y$.
So $Z_{F_1}\subset D(A)+(\lambda-\tilde{A})^{-1} F_1Y$.
For any $z=z_1+(\lambda-\tilde{A})^{-1} F_1y_z \in D(A)+(\lambda-\tilde{A})^{-1} F_1Y$, where $z_1\in D(A)$ and $y_z\in Y$,
a simple computation shows that
$ (\lambda-\tilde{A})z+F_1(-y_z)=  (\lambda-\tilde{A} )z_1 \in
 Z$ and hence $  \tilde{A} z+F_1 y_z   \in
 Z$. By \dref{20206301054},
   $z\in  Z_{F_1}  $  and hence $D(A)+(\lambda-\tilde{A})^{-1} F_1Y \subset Z_{F_1} $. Therefore, \dref{20206301054**} holds.

Proof of (i).
 Since  $ (\lambda-\tilde{A} )^{-1}z=(\lambda- {A} )^{-1}z\in D(A)$   for any $z\in Z$,
  it  follows from \dref{202070812} that
   \begin{equation} \label{202071808***}
   \begin{array}{ll}
  P_{\mu}z &\disp  =
      \left[B_{\mu}C_{\Lambda} +(\lambda-\tilde{G}_{\mu}) \Psi_{\mu}\right](\lambda- {A} )^{-1}z\crr
      &\disp  =
      - \Psi_{\mu}A    (\lambda- {A} )^{-1}z+\lambda \Psi_{\mu} (\lambda- {A} )^{-1}z\crr
  &\disp  =
        \Psi_{\mu}(\lambda-A)    (\lambda- {A} )^{-1}z= \Psi_{\mu}z,\ \ \forall\ \ z\in Z.
      \end{array}
  \end{equation}
Hence,   $P_{\mu} $ is an extension of $\Psi_{\mu}$.

 Proof of (ii). For any $z_{F_1}\in Z_{F_1}$, by \dref{20206301054**}, there exist   $z\in  D(A   )$ and
 $ y\in Y$ such that $z_{F_1}=z+ (\lambda-\tilde{A} )^{-1}F_1y$ for some $\lambda\in \rho(A )$.
 Thanks to \dref{202070812} and \dref{202071808},  it suffices  to prove
\begin{equation} \label{20207181138}
\disp   \tilde{G}_{\mu}P_{\mu}[(\lambda-\tilde{A} )^{-1}F_1 y]  -B_{\mu}C_{\Lambda}[(\lambda-\tilde{A} )^{-1}F_1 y]
 =   P_{\mu } \tilde{A} [ (\lambda-\tilde{A})^{-1}F_1 y].
  \end{equation}
Actually, it follows \dref{202071733} and \dref{202071808} that
  \begin{equation} \label{2020711104}
  P_{\mu} F_1y =
    B_{\mu}C_{\Lambda}(\lambda-\tilde{A})^{-1}F_1y +\lambda P_{\mu}(\lambda-\tilde{A})^{-1}F_1y
     -\tilde{G}_{\mu}P_{\mu}(\lambda-\tilde{A} )^{-1}F_1y,
\end{equation}
which yields
  \begin{equation} \label{20205151747}
  \begin{array}{l}
\disp   \tilde{G}_{\mu}P_{\mu}[(\lambda-\tilde{A} )^{-1}F_1 y]  -B_{\mu}C_{\Lambda}[(\lambda-\tilde{A} )^{-1}F_1 y]
 =
 -P_{\mu }F_1 y+ \lambda P_{\mu}(\lambda-\tilde{A})^{-1}F_1 y\crr
 =
 -P_{\mu }(\lambda-\tilde{A}) (\lambda-\tilde{A})^{-1}F_1 y+ \lambda P_{\mu}(\lambda-\tilde{A})^{-1}F_1 y
 =   P_{\mu } \tilde{A} [ (\lambda-\tilde{A})^{-1}F_1 y].
 \end{array}
 \end{equation}
Hence, \dref{2020711041} can be obtained by  \dref{202070812},  \dref{20205151747} and the fact
$z_{F_1}=z+ (\lambda-\tilde{A} )^{-1}F_1y$
easily.

Proof of (iii). It follows from  \dref{202070812} that
 \begin{equation} \label{2020711452}
 \tilde{G}_{\mu}\Psi_{\mu}z-B_{\mu}C_{\Lambda}z
 =\Psi_{\mu}Az\in L^2([0,\mu];Y) ,  \ \ \forall\ z\in D(A),
 \end{equation}
which, together with \dref{202062291621}, \dref{20206292003}    and \dref{202071808},  yields
$P_{\mu}z=\Psi_{\mu}z\in G_{B_{\mu}}= H^1([0,\mu];Y)=D(C_{\mu})$, where $G_{B_{\mu}}$ is defined by \dref{202062291621} with $\alpha=\mu$. Owing to \dref{20206141839} and \dref{20205302118}, we arrive at
$ C_{\mu\Lambda}P_{\mu}z=C_{\mu\Lambda}\Psi_{\mu}z= -C_{\Lambda}e^{-A\mu }z\in Y $. So
\dref{202071900} holds.
%
%
%
 The proof is complete.
 \end{proof}

\begin{remark}\label{Re2020727931}
 We claim  that  $P_{\mu}$ is independent of the choice of $\lambda$.
 So   the  notation  $P_{\mu}$ that is absent of $\lambda$ does not cause any
confusion.
  Indeed, for any $\lambda_1,\lambda_2\in \rho(A)$ and $\lambda_1\neq\lambda_2$,
   a simple computation shows that
   \begin{equation} \label{202071752}
\left.\begin{array}{l}
  B_{\mu}C_{\Lambda}\left[ (\lambda_1-\tilde{A})^{-1}-  (\lambda_2-\tilde{A})^{-1}\right]
 =     (\lambda_2-\lambda_1) B_{\mu}C_{\Lambda}(\lambda_1-\tilde{A})^{-1}(\lambda_2-\tilde{A})^{-1}
  \end{array}\right.
  \end{equation}
  and
  \begin{equation} \label{202071754}
\left.\begin{array}{l}
 -\tilde{G}_{\mu}\Psi_{\mu}\left[ (\lambda_1-\tilde{A})^{-1}-  (\lambda_2-\tilde{A})^{-1}\right]
 =     -(\lambda_2-\lambda_1)  \tilde{G}_{\mu}\Psi_{\mu}(\lambda_1-\tilde{A})^{-1}(\lambda_2-\tilde{A})^{-1}.
  \end{array}\right.
  \end{equation}
Notice that $(\lambda_1-\tilde{A})^{-1} (\lambda_2-\tilde{A})^{-1}z \in D(A) $ for any $z\in Z+F_1 Y \subset [D(A^*)]'$, it follows from \dref{202070812} that
\begin{equation} \label{202071838}
 \tilde{G}_{\mu}\Psi_{\mu}\hat{z}-\Psi_{\mu}A\hat{z}= B_{\mu}C_{\Lambda}\hat{z} ,  \ \
 \hat{z}=(\lambda_1-\tilde{A})^{-1} (\lambda_2-\tilde{A})^{-1}z\in D(A).
 \end{equation}
Combining \dref{202070812}, \dref{202071808}, \dref{202071752},  \dref{202071754} and \dref{202071838},
for any $z\in Z+F_1 Y $,
we obtain
 \begin{equation} \label{202071736}
\left.\begin{array}{l}
\left[B_{\mu}C_{\Lambda} +(\lambda_1-\tilde{G}_{\mu}) \Psi_{\mu}\right](\lambda_1-\tilde{A} )^{-1}z-
\left[B_{\mu}C_{\Lambda} +(\lambda_2-\tilde{G}_{\mu}) \Psi_{\mu}\right](\lambda_2-\tilde{A} )^{-1}z\crr
=  \disp  (\lambda_2-\lambda_1)\left[ B_{\mu}C_{\Lambda}\hat{z}- \tilde{G}_{\mu}\Psi_{\mu}\hat{z}\right]
+\lambda_1\Psi_{\mu} (\lambda_1-\tilde{A} )^{-1}z-\lambda_2\Psi_{\mu} (\lambda_2-\tilde{A} )^{-1}z\crr
=  \disp  -(\lambda_2-\lambda_1) \Psi_{\mu}A\hat{z}
+\lambda_1\Psi_{\mu} (\lambda_1-\tilde{A} )^{-1}z-\lambda_2\Psi_{\mu} (\lambda_2-\tilde{A} )^{-1}z\crr
=  \disp  - P_{\mu}\tilde{A} \left[(\lambda_1-\tilde{A} )^{-1}z-(\lambda_2-\tilde{A} )^{-1}z \right]
+\lambda_1P_{\mu} (\lambda_1-\tilde{A} )^{-1}z-\lambda_2P_{\mu} (\lambda_2-\tilde{A} )^{-1}z\crr
=  \disp    P_{\mu}(\lambda_1- \tilde{A})  (\lambda_1-\tilde{A} )^{-1}z-   P_{\mu}(\lambda_2- \tilde{A})  (\lambda_2-\tilde{A} )^{-1}z= P_{\mu}z-P_{\mu}z=0.
  \end{array}\right.
  \end{equation}
Therefore, $P_{\mu}$ is independent of the choice of $\lambda$.
\end{remark}

\begin{lemma}\label{lm2020613824}
 Suppose that $A$ is the generator of the $C_0$-semigroup $e^{At}$ acting on
$Z$ and $C\in  \mathcal{L}(D(A),Y )$  is admissible for $e^{At}$. Then for any $\mu> 0$,
 $F\in \mathcal{L}(Y,[D(A^*)]')$ detects system $(A,C)$  exponentially  if and only if
   $ e^{ \tilde{A}\mu }F$ detects system $(A, C_{\Lambda}e^{-A\mu })$ exponentially.

\end{lemma}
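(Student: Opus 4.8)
The statement is precisely the output/detection counterpart of Lemma \ref{lm202061804}, so the plan is to transcribe that proof, replacing the feedback pair $(B,K)$ by the injection--observation pair $(F,C)$ and interchanging the exponents $e^{A\tau},e^{-\tilde A\tau}$ with $e^{-A\mu},e^{\tilde A\mu}$. By Definition \ref{De20204191058}, to prove the equivalence I must match three things between the triple $(A,F,C)$ and the transformed triple $(A,\,e^{\tilde A\mu}F,\,C_\Lambda e^{-A\mu})$: regularity of the triple, admissibility of $I$ as a feedback operator, and exponential stability of the closed-loop generator. I would establish each in turn, after first noting (as in Lemma \ref{lm2020516938}(iii)) that $C_\Lambda e^{-A\mu}\in\mathcal L(D(A),Y)$ is a legitimate observation operator.

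For the first two conditions the decisive point is that the two triples share the same transfer function. Since the resolvent commutes with the semigroup, one has $e^{-A\mu}(\lambda-\tilde A)^{-1}e^{\tilde A\mu}=(\lambda-\tilde A)^{-1}$ for $\lambda\in\rho(A)$, whence
\[
C_\Lambda(\lambda-\tilde A)^{-1}F
=C_\Lambda\left[e^{-A\mu}(\lambda-\tilde A)^{-1}e^{\tilde A\mu}\right]F
=\bigl(C_\Lambda e^{-A\mu}\bigr)(\lambda-\tilde A)^{-1}\bigl(e^{\tilde A\mu}F\bigr),
\]
the exact analogue of \dref{202061817}. Here I would first record the identity $(Ce^{-A\mu})_\Lambda=C_\Lambda e^{-A\mu}$, which holds because $e^{-A\mu}$ commutes with $\lambda(\lambda-A)^{-1}$ in the defining limit \dref{20205281601}. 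By Proposition \ref{Regular}, this equality of transfer functions shows that $(A,F,C)$ is regular if and only if $(A,e^{\tilde A\mu}F,C_\Lambda e^{-A\mu})$ is, and simultaneously that $I$ is an admissible feedback operator for $C_\Lambda(\lambda-\tilde A)^{-1}F$ if and only if it is one for $(C_\Lambda e^{-A\mu})(\lambda-\tilde A)^{-1}(e^{\tilde A\mu}F)$.

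The third condition follows from a similarity transform mirroring \dref{202061816}: for every $z\in Z$,
\[
\left(A+e^{\tilde A\mu}F\bigl(C_\Lambda e^{-A\mu}\bigr)_\Lambda\right)z
=\left(A+e^{\tilde A\mu}F C_\Lambda e^{-A\mu}\right)z
=e^{\tilde A\mu}\bigl(A+F C_\Lambda\bigr)e^{-A\mu}z,
\]
so the two closed-loop generators are conjugate through the boundedly invertible operator $e^{A\mu}$ (with inverse $e^{-A\mu}$); hence $A+FC_\Lambda$ generates an exponentially stable semigroup on $Z$ exactly when $A+e^{\tilde A\mu}F(C_\Lambda e^{-A\mu})_\Lambda$ does. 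Combining the three equivalences with Definition \ref{De20204191058} would complete the argument; alternatively the whole statement could be obtained by applying Lemma \ref{lm202061804} to the dual pair $(A^\ast,C^\ast)$ with $K=F^\ast$ and $\tau=\mu$ and then dualizing. The one preliminary that must be secured first is the admissibility of both transformed operators $e^{\tilde A\mu}F$ and $C_\Lambda e^{-A\mu}$ for $e^{At}$, which I would draw from the appendix lemma (Lemma \ref{lm2020632124}) just as in Lemma \ref{lm202061804}. I expect this admissibility step, together with the clean handling of the operator identities at the level of the $\Lambda$-extension and the tacit use of $e^{-A\mu}$ (which presupposes $e^{-At}$ is available, e.g. $A$ generating a group, as is the case in the wave-equation application), to be the only genuinely delicate point; the algebraic core is a routine transcription of the input case.
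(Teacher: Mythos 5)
Your proposal is correct and follows essentially the same route as the paper: admissibility of $e^{\tilde A\mu}F$ and $C_\Lambda e^{-A\mu}$ via the appendix lemma, the resolvent identity $C_\Lambda(\lambda-\tilde A)^{-1}F=(C_\Lambda e^{-A\mu})(\lambda-\tilde A)^{-1}(e^{\tilde A\mu}F)$ to transfer regularity and admissible-feedback of $I$, the similarity of the closed-loop generators, and then Definition \ref{De20204191058}. Your explicit conjugation $A+e^{\tilde A\mu}FC_\Lambda e^{-A\mu}=e^{\tilde A\mu}(A+FC_\Lambda)e^{-A\mu}$ is in fact a cleaner rendering of the paper's \dref{20205161634}, and your remark that $e^{-A\mu}$ tacitly requires the semigroup to be invertible is a fair observation the paper leaves implicit.
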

\begin{proof}
By Lemma \ref{lm2020632124} in Appendix,
 both
$C_{\Lambda}e^{-A\mu }$   and $e^{ \tilde{A}\mu }F$ are admissible for $e^{At}$.
 Similarly to \dref{202061817},  a simple computation shows that
   \begin{equation} \label{20205161622}
 C_{\Lambda}e^{-A\mu } (\lambda-\tilde{A} )^{-1}e^{ \tilde{A} \mu }F= C_{ \Lambda} (\lambda-\tilde{A} ) ^{-1}F,
 \ \forall\ \lambda\in\rho(A ),
 \end{equation}
 which implies that $(A ,F,C )$ is a regular triple if and only if
$(A ,e^{ \tilde{A} \mu }F, C_{\Lambda}e^{-A\mu })$ is a regular triple.
 Moreover,  $I$   is an admissible feedback operator for    $C_{ \Lambda}(s-\tilde{A} )^{-1}F$
  is equivalent to that
  $I$   is an admissible feedback operator for    $ C_{\Lambda}e^{-A\mu }(\lambda-\tilde{A} )^{-1}e^{ \tilde{A} \mu }F$. Since $e^{ \tilde{A} \mu  }\in \mathcal{L}(Z)$ and
  \begin{equation} \label{20205161634}
A +e^{ \tilde{A} \mu }F C_{\Lambda}e^{-A\mu }= A +e^{ \tilde{A} \mu }F C_{ \Lambda}e^{ -\tilde{A} \mu }
=
A +F C_{ \Lambda} ,
 \end{equation}
  $A+F C_{ \Lambda}$ is exponentially stable if and only if
  $A +e^{-\tilde{A} \mu }F C_{\Lambda}e^{A\mu }$ is exponentially stable.
  The proof is complete due to Definition \ref{De20204191058}.
\end{proof}


\section{ Luenberger-like observer} \label{observer}
In this section, we will design the observer  for  system  \dref{wxh2020321120} and prove the well-posedness.
  We begin with the  following    infinite-dimensional Luenberger-like observer:
\begin{equation} \label{20206112004}
 \left\{\begin{array}{l}
  \dot{\hat{z}} (t)=A \hat{z} (t)-F_1[y(t)-C_{\mu\Lambda}\hat{\psi}(\cdot,t)]+B u(t),\crr
  \hat{\psi}_t(\cdot,t)= G_{\mu} \hat{\psi}(\cdot,t)+B_{\mu}C_{ \Lambda}\hat{z}(t)-F_2[y(t)-C_{\mu\Lambda}\hat{\psi}(\cdot,t)],
  \end{array}\right.
\end{equation}
where $F_1\in \mathcal{L}(Y,[D(A^*)]') $ and $F_2\in \mathcal{L}(Y,[D(G_\mu^*)]') $ are
  tuning  operators to be determined.
If we let the errors be
\begin{equation} \label{20206112014}
\tilde{z} (t)=z(t)-\hat{z} (t),\ \  \tilde{\psi}(\cdot,t)=\psi(\cdot,t)-\hat{\psi}(\cdot,t),
\end{equation}
then they are  governed by
\begin{equation} \label{20206112015}
 \left\{\begin{array}{l}
\disp \dot{\tilde{z}} (t) =A\tilde{z} (t)+F_1C_{\mu\Lambda}\tilde{\psi}(\cdot,t),\crr
\disp  \dot{\tilde{\psi}}_t(\cdot,t) = ( G_{\mu}+F_2C_{\mu\Lambda}) \tilde{\psi}(\cdot,t)+B_{\mu}C_{ \Lambda}\tilde{z} (t).
\end{array}\right.
\end{equation}
Similarly to \dref{20205302057S} and \dref{20206271218}, if $ Z_{F_1}\subset D(C_{\Lambda})$,
 we can define   the transformation
 \begin{equation} \label{20206141610}
 \mathbb{P} \left( z , f \right)^{\top}= \left( z,\ f +P_{\mu}  z\right)^{\top},\ \ \forall\ ( z, f)^\top\in \mathcal{Z}_{\mu}(Y) ,
\end{equation}
where the operator  $P_{\mu} $  is given  by \dref{202071733}.
 It is easy to see that $ \mathbb{P}\in \mathcal{L}( \mathcal{Z}_{\mu}(Y) )$ is invertible and its inverse is given by
\begin{equation} \label{20206301044}
   \left.\begin{array}{l}
 \disp    \mathbb{P}^{-1} \left( z , f \right)^{\top}= \left(   z ,
 f-P_{\mu}  z\right)^{\top},\ \ \forall \ \left(z , f \right)^{\top}\in  \mathcal{Z}_{\mu}(Y) .
 \end{array}\right.
\end{equation}
Let
 \begin{equation} \label{20206301059}
 \left( \check{z}(t) , \check{\psi}(\cdot,t) \right)^{\top}= \mathbb{P} \left( \tilde{z}(t) , \tilde{\psi}(\cdot,t) \right)^{\top},\ \ t\geq0 .
\end{equation}
By \dref{2020711041},   the  transformation \dref{20206301059} can convert  system \dref{20206112015}
into
  \begin{equation} \label{20206301102}
 \left\{\begin{array}{l}
\disp \dot{\check{z}} (t) =({A}-F_1C_{\mu\Lambda}P_{\mu})\check{z} (t)+F_1C_{\mu\Lambda}\check{\psi}(\cdot,t),\crr
\disp  \dot{\check{\psi}}_t(\cdot,t) = (  G_{\mu}+F_2C_{\mu\Lambda}+P_{\mu}F_1C_{\mu\Lambda}) \check{\psi}(\cdot,t)
-(P_{\mu}F_1C_{\mu\Lambda}P_{\mu}+F_2C_{\mu\Lambda}P_{\mu})\check{z} (t),
\end{array}\right.
\end{equation}
provided $\tilde{z}(t)\in Z_{F_1} $.
 Choosing  specially $ F_2=-P_{\mu}F_1$,
system  \dref{20206301102} is reduced to
 \begin{equation} \label{20206301105}
 \left\{\begin{array}{l}
\disp \dot{\check{z}} (t) =({A}-F_1C_{\mu\Lambda}P_{\mu})\check{z} (t)+F_1C_{\mu\Lambda}\check{\psi}(\cdot,t),\crr
\disp  \dot{\check{\psi}}_t(\cdot,t) =    \tilde{G}_{\mu} \check{\psi}(\cdot,t)
,
\end{array}\right.
\end{equation}
which is a simple cascade system and
 can be written as
\begin{equation} \label{20206301116}
\frac{d}{dt}(\check{z} (t),\check{\psi} (\cdot,t))^{\top}
= \mathscr{A}_{\mathbb{P}}(\check{z} (t),\check{\psi} (\cdot,t))^{\top} ,
\end{equation}
where
  \begin{equation} \label{20206301117}
  \left\{\begin{array}{l}
 \disp   \mathscr{A}_{\mathbb{P}}= \begin{pmatrix}
\tilde{A}-F_1C_{\mu\Lambda}P_{\mu} &   F_1C_{\mu\Lambda} \\
0 &   \tilde{G}_{\mu}
\end{pmatrix},\crr
\disp  D(  \mathscr{A}_{\mathbb{P}})=\left\{
\begin{pmatrix}
z\\
\psi
\end{pmatrix}\in  \mathcal{Z}_{\mu}(Y) \ \Big{|}\
\begin{array}{l}
\disp (\tilde{A}-F_1C_{\mu\Lambda}P_{\mu})z+F_1C_{\mu\Lambda}\psi\in Z \\
\disp      \tilde{G}_{\mu} \psi \in  L^2([0,\mu];Y)
\end{array} \right \}  .
\end{array}\right.
\end{equation}
With the setting  $ F_2=-P_{\mu}F_1$,
  the   observer \dref{20206112004}
    is reduced to be
 \begin{equation} \label{20206142114}
 \left\{\begin{array}{l}
\disp \dot{\hat{z}} (t)=\tilde{A} \hat{z} (t)
-F_1[y(t)-C_{\mu\Lambda}\hat{\psi}(\cdot,t)]+B u(t),\crr
\hat{\psi}_t(\cdot,t)= \tilde{G}_{\mu} \hat{\psi}(\cdot,t)+B_{\mu}C_{ \Lambda}\hat{z}(t)
+P_{\mu} F_1
[y(t)-C_{\mu\Lambda}\hat{\psi}(\cdot,t)].
\end{array}\right.
\end{equation}
 System \dref{20206142114} can be written as
\begin{equation} \label{2020613845}
\frac{d}{dt}(\hat{z} (t),\hat{\psi} (\cdot,t))^{\top}
= \mathscr{A}(\hat{z} (t),\hat{\psi} (\cdot,t))^{\top}+ \mathscr{F}y(t) +(B,0)^{\top }u(t),
\end{equation}
where
  \begin{equation} \label{2020613844}
 \disp  {\mathscr{A}}= \begin{pmatrix}
  \tilde{A} &F_1 C_{\mu\Lambda} \\
B_{\mu}C_{ \Lambda} &  \tilde{G}_{\mu}-
P_{\mu}F_1 C_{\mu\Lambda}
\end{pmatrix}\ \ \mbox{and}\ \
{\mathscr{F}}=\begin{pmatrix}
   -F_1     \\
 P_{\mu}F_1
\end{pmatrix}
\end{equation}
with
  \begin{equation} \label{2020613844D}
\disp  D(  \mathscr{A} )=\left\{
\begin{pmatrix}
z\\
\psi
\end{pmatrix}\in  \mathcal{Z}_{\mu}(Y) \ \Big{|}\
\begin{array}{l}
\disp  \tilde{A}z+F_1C_{\mu\Lambda}\psi\in Z \\
\disp    B_{\mu}C_{ \Lambda}z+  (\tilde{G}_{\mu}-P_{\mu}F_1 C_{\mu\Lambda}) \psi \in  L^2([0,\mu];Y)
\end{array} \right \}  .
\end{equation}

\begin{lemma}\label{Lm2020615849}

Let $Z$ and $Y$ be Hilbert spaces.
Suppose that $A$ is a generator of the $C_0$-semigroup $e^{At}$ acting on
$Z$, $C\in  \mathcal{L}(D(A),Y )$ is admissible for $e^{At}$,  $F_1\in  \mathcal{L}(Y,[D(A^*)]' )$
and  $Z_{F_1}$ defined by  \dref{20206301054} satisfies $ Z_{F_1}\subset D(C_{\Lambda})$.
  Suppose that   $G_{\mu }$,  $B_{\mu  }$ and $C_{\mu  }$ are  defined by \dref{20205292012},
\dref{20205302223}  and  \dref{20205302118} with $\alpha=\mu$, respectively.
 Let $ \mathscr{A}$ and $ \mathscr{A}_{\mathbb{P}}$ be given by \dref{2020613844} and \dref{20206301117}, respectively. Then,
 \begin{equation} \label{20206301557}
 \left.\begin{array}{l}
\disp \mathbb{P}\mathscr{A} \mathbb{P} ^{-1}=\mathscr{A}_{\mathbb{P}} \ \ \mbox{and}\ \  D(\mathscr{A}_{\mathbb{P}} ) = \mathbb{P}D(\mathscr{A}),
\end{array}\right.
\end{equation}
  where $\mathbb{P}$ is given by \dref{20206141610}.
\end{lemma}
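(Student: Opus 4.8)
The plan is to follow the same scheme as in the proof of Theorem \ref{Th2020611038}. Since $\mathbb{P}$ and $\mathbb{P}^{-1}$ are bounded and given explicitly by \dref{20206141610} and \dref{20206301044}, it suffices to verify the domain identity $D(\mathscr{A}_{\mathbb{P}})=\mathbb{P}D(\mathscr{A})$ together with the intertwining $\mathbb{P}\mathscr{A}\mathbb{P}^{-1}=\mathscr{A}_{\mathbb{P}}$ on that common domain. The single analytical input driving every cancellation is the Sylvester equation \dref{2020711041}, namely $\tilde{G}_{\mu}P_{\mu}z-P_{\mu}\tilde{A}z=B_{\mu}C_{\Lambda}z$ for $z\in Z_{F_1}$.

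First I would prove $\mathbb{P}D(\mathscr{A})\subset D(\mathscr{A}_{\mathbb{P}})$. Given $(z,f)^{\top}\in D(\mathscr{A})$, the first condition in \dref{2020613844D} reads $\tilde{A}z+F_1C_{\mu\Lambda}f\in Z$; comparing with the defining set \dref{20206301054} of $Z_{F_1}$ (with $y=C_{\mu\Lambda}f\in Y$) shows $z\in Z_{F_1}$, so \dref{2020711041} is available. For $\mathbb{P}(z,f)^{\top}=(z,f+P_{\mu}z)^{\top}$ the first defining condition of $D(\mathscr{A}_{\mathbb{P}})$ in \dref{20206301117} collapses to $\tilde{A}z+F_1C_{\mu\Lambda}f$ once the two $F_1C_{\mu\Lambda}P_{\mu}z$ terms cancel, hence lies in $Z$. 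For the second condition I would write $\tilde{G}_{\mu}(f+P_{\mu}z)=\tilde{G}_{\mu}f+P_{\mu}\tilde{A}z+B_{\mu}C_{\Lambda}z$ via \dref{2020711041}, then eliminate $\tilde{G}_{\mu}f$ using the second condition of $D(\mathscr{A})$; the residual equals $P_{\mu}(\tilde{A}z+F_1C_{\mu\Lambda}f)$ modulo a term already in $L^2([0,\mu];Y)$. Since $\tilde{A}z+F_1C_{\mu\Lambda}f\in Z$ while $P_{\mu}$ coincides on $Z$ with $\Psi_{\mu}\in\mathcal{L}(Z,L^2([0,\mu];Y))$ by \dref{202071808} and \dref{20206141839}, this residual lies in $L^2([0,\mu];Y)$. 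The reverse inclusion is obtained identically using $\mathbb{P}^{-1}$ from \dref{20206301044}, which gives $\mathbb{P}D(\mathscr{A})=D(\mathscr{A}_{\mathbb{P}})$.

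For the intertwining, I would take $(z,f)^{\top}\in D(\mathscr{A}_{\mathbb{P}})$, form $\mathbb{P}^{-1}(z,f)^{\top}=(z,f-P_{\mu}z)^{\top}$ and compute $\mathscr{A}(z,f-P_{\mu}z)^{\top}$ blockwise from \dref{2020613844}. The top entry simplifies directly to $(\tilde{A}-F_1C_{\mu\Lambda}P_{\mu})z+F_1C_{\mu\Lambda}f$, matching the top row of $\mathscr{A}_{\mathbb{P}}$ and left unchanged by the final application of $\mathbb{P}$. The bottom entry, after substituting $\tilde{G}_{\mu}P_{\mu}z=P_{\mu}\tilde{A}z+B_{\mu}C_{\Lambda}z$ from \dref{2020711041}, organizes into $\tilde{G}_{\mu}f-P_{\mu}\bigl[(\tilde{A}-F_1C_{\mu\Lambda}P_{\mu})z+F_1C_{\mu\Lambda}f\bigr]$; applying $\mathbb{P}$ adds exactly $P_{\mu}$ of the (already computed) top entry, so the $P_{\mu}[\cdots]$ telescopes away and the bottom entry becomes $\tilde{G}_{\mu}f$, the bottom row of $\mathscr{A}_{\mathbb{P}}$. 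This establishes \dref{20206301557}.

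The routine parts are purely algebraic; the step requiring care is the $L^2$-membership in the domain transfer, because it is the only place where one must simultaneously invoke that $z\in Z_{F_1}$ (to legitimize \dref{2020711041}, whose terms live a priori only in $[D(G_{\mu}^*)]'$) and that the extension $P_{\mu}$ restricts on $Z$ to the bounded map $\Psi_{\mu}$ into $L^2([0,\mu];Y)$. Keeping track of which ambient space each term occupies, and checking that the standing hypothesis $Z_{F_1}\subset D(C_{\Lambda})$ makes $C_{\Lambda}z$ meaningful wherever it appears, is the main bookkeeping obstacle.
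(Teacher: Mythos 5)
Your proposal is correct and follows essentially the same route as the paper: the upper-triangular transform $\mathbb{P}$, the Sylvester identity \dref{2020711041} as the sole analytic input, a two-way domain chase, and the telescoping blockwise computation for the intertwining. One small point where the paper is more careful than you: in the inclusion $\mathbb{P}D(\mathscr{A})\subset D(\mathscr{A}_{\mathbb{P}})$ you verify the first (top-row) condition before the second, but the cancellation of the two $F_1C_{\mu\Lambda}P_{\mu}z$ terms presupposes $P_{\mu}z\in D(C_{\mu\Lambda})$, which is not known a priori for $z\in Z_{F_1}\setminus D(A)$; the paper first establishes $\tilde{G}_{\mu}(f+P_{\mu}z)\in L^2([0,\mu];Y)$, deduces $f+P_{\mu}z\in D(G_{\mu})\subset D(C_{\mu\Lambda})$ and hence $P_{\mu}z\in D(C_{\mu\Lambda})$ (since $f\in D(C_{\mu\Lambda})$ already), and only then performs the cancellation. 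Reordering your two verifications accordingly closes this gap; everything else matches the paper's argument.
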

\begin{proof}
By Lemma \ref{lm2020516938},   the operator
 $P_{\mu}  $   is well defined via \dref{202071733}.
  For any $(z,\psi)^{\top}\in D(\mathscr{A}_{\mathbb{P}} )$,  it follows from \dref{20206301117} and \dref{20206301054} that   $z\in Z_{F_1}$ and
  \begin{equation} \label{20208201003}
    \tilde{A} z     +  F_1 C_{\mu\Lambda} (\psi-P_{\mu}z)\in Z  .
 \end{equation}
  By \dref{20206301117}, \dref{202071808}, \dref{2020711041} and \dref{20206141839},  it follows that \begin{equation} \label{2020630200}
\left.\begin{array}{l}
  \disp   B_{\mu}C_{ \Lambda}z+  (\tilde{G}_{\mu}-P_{\mu}F_1 C_{\mu\Lambda}) (\psi-P_{\mu}z) \disp  =-P_{\mu}\tilde{A}z+   \tilde{G}_{\mu}\psi
  -P_{\mu}F_1 C_{\mu\Lambda}  (\psi-P_{\mu}z)\crr
   \hspace{2cm}\disp  =   \tilde{G}_{\mu}\psi-P_{\mu}\left[\tilde{A} z     +  F_1 C_{\mu\Lambda} (\psi-P_{\mu}z)\right]\crr
   \hspace{2cm}\disp =
\tilde{G}_{\mu}\psi-\Psi_{\mu}\left[\tilde{A} z     +  F_1 C_{\mu\Lambda} (\psi-P_{\mu}z)\right]
\in L^2([0,\mu];Y).
\end{array}\right.
 \end{equation}
 We combine \dref{2020613844D}, \dref{20208201003} and \dref{2020630200} to get
 $\mathbb{P}^{-1}( {z}, { \psi })^{\top}\in D(\mathscr{A} )$.
   Consequently,  $D(\mathscr{A}_{\mathbb{P}})\subset \mathbb{P}D(\mathscr{A} )$ due to the arbitrariness of $(z,\psi)^{\top}\in D(\mathscr{A}_{\mathbb{P}} )$.

 On the other hand, for any $(z, \psi )^{\top}\in D(\mathscr{A}  )$,
\dref{2020613844D} and \dref{20206301054} imply    that $z\in Z_{F_1}$. Furthermore, it  follows from \dref{2020711041}  that
\begin{equation} \label{2020712118}
 \begin{array}{ll}
  \disp
    \tilde{G}_{\mu }  (\psi+P_{\mu}z)&\disp= \tilde{G}_{\mu }  \psi +P_{\mu}\tilde{A}z +
  B_{\mu}C_{\Lambda}z\crr
  &\disp= \tilde{G}_{\mu }  \psi -P_{\mu}F_1C_{\mu\Lambda}\psi+
  B_{\mu}C_{\Lambda}z+P_{\mu}(\tilde{A} z+F_1C_{\mu\Lambda} \psi),
  \end{array}
 \end{equation}
 which, together with  \dref{2020613844D}, \dref{20206141839} and \dref{202071808}, leads to
 \begin{equation} \label{20208201058}
 \tilde{G}_{\mu }  (\psi+P_{\mu}z) \in  L^2([0,\mu ];Y).
 \end{equation}
 This implies that $(\psi+P_{\mu}z)\in D(C_{\mu\Lambda})$.
 Since
$ \psi\in D(C_{\mu\Lambda})$, we have   $P_{\mu}z\in D(C_{\mu\Lambda})$. As a result,
    it follows from \dref{2020613844D} that
 \begin{equation} \label{20206271450Ad630}
(\tilde{A}-F_1C_{\mu\Lambda}P_{\mu})z+F_1C_{\mu\Lambda}(\psi+P_{\mu}z)=
\tilde{A} z+F_1C_{\mu\Lambda} \psi \in Z.
\end{equation}
Combining \dref{20206301117}, \dref{20208201058} and \dref{20206271450Ad630}, we arrive at
   $ \mathbb{P}( {z}, { \psi }) ^{\top}\in D(\mathscr{A}_{ \mathbb{P}} )$ and hence $  \mathbb{P} D(\mathscr{A} )\subset  D(\mathscr{A}_{ \mathbb{P}} )$.

To sum up, we   thus obtain
$  \mathbb{P} D(\mathscr{A} ) =  D(\mathscr{A}_{ \mathbb{P}} )$.
For any
$(z, \psi )^{\top}\in D(\mathscr{A}_{\mathbb{P}})$, it follows from \dref{20206301117} and  \dref{20206301054} that
$ z \in Z_{F_1}$.
By virtue of \dref{2020711041}, a simple computation shows that
 $\mathbb{P}\mathscr{A} \mathbb{P} ^{-1}(z, \psi )^{\top}=\mathscr{A}_{\mathbb{P}}(z, \psi )^{\top}$ for   any
$(z, \psi )^{\top}\in D(\mathscr{A}_{\mathbb{P}})$. Hence, the similarity \dref{20206301557} holds.
The proof is complete.
\end{proof}

By Lemma \ref{Lm2020615849}, the observer \dref{20206142114} is  convergent provided
 $\mathscr{A}_{\mathbb{P}}$ is stable.
Owing to the upper-block-triangular structure of $\mathscr{A}_{\mathbb{P}}$ and since $\tilde{G}_{\mu}$ is exponentially stable already, we only need to choose
$F_1$ such that $\tilde{A}-F_1C_{\mu\Lambda}P_{\mu}$ stable.  By   Lemma \ref{lm2020613824}
and \dref{202071900}, we can choose
$F_1\in \mathcal{L}(Y,[D(A^*)]')$
by the following scheme:
\begin{equation} \label{20206301524}
\left\{\begin{array}{ll}
{\rm (i)}   & {\rm   choose } \  F \in \mathcal{L}(Y,[D(A^*)]') \  {\rm   to\  detects }
 \ (A,C)\ {\rm  exponentially};\crr
{\rm (ii)}  &  {\rm   let }\  F_1= e^{ \tilde{A}\mu }F .
\end{array}\right.
\end{equation}
Under  \dref{20206301524}, the observer \dref{20206142114}
is found to be
\begin{equation} \label{202072805}
 \left\{\begin{array}{l}
\disp \dot{\hat{z}} (t)=\tilde{A} \hat{z} (t)
- e^{ \tilde{A}\mu }F[y(t)-C_{\mu\Lambda}\hat{\psi}(\cdot,t)]+B u(t),\crr
\hat{\psi}_t(\cdot,t)= \tilde{G}_{\mu} \hat{\psi}(\cdot,t)+B_{\mu}C_{ \Lambda}\hat{z}(t)
+P_{\mu}  e^{ \tilde{A}\mu }F
[y(t)-C_{\mu\Lambda}\hat{\psi}(\cdot,t)],
\end{array}\right.
\end{equation}
 or equivalently,
 \begin{equation} \label{20206142120}
 \left\{\begin{array}{l}
\disp \dot{\hat{z}} (t)=A \hat{z} (t)-e^{ \tilde{A}\mu }F[y(t)- \hat{\psi}(\mu,t)]+B u(t),\crr
\hat{\psi}_t(x,t)+ \hat{\psi}_x(x,t)=P_{\mu}e^{ \tilde{A}\mu }F
[y(t)- \hat{\psi}(\mu,t)],\crr
\disp \hat{\psi}(0,t)=C_{ \Lambda}\hat{z}(t),
\end{array}\right.
\end{equation}
where $P_{\mu}$ is given by \dref{202071733} and  $G_{\mu }$,  $B_{\mu  }$ and $C_{\mu  }$  are  defined by \dref{20205292012},
\dref{20205302223}  and  \dref{20205302118} with $\alpha=\mu$, respectively.

\begin{theorem}\label{Th2020731704}
Let  $(A ,B ,C )$ be   a regular linear system with the  state space $Z$, input space $U$ and output space $Y$. Suppose that $\mu>0$,
  $F \in \mathcal{L}(Y,[D(A^*)]')$    detects system  $(A,C)$ exponentially  and
  \begin{equation} \label{202082011061}
  (s-\tilde{A})^{-1}e^{\tilde{A}\mu}F\subset D(C_{\Lambda}) \ \ \mbox{for some }\  s\in \rho(A).
  \end{equation}
  Then,
      the observer \dref{20206142120}  of system \dref{wxh2020321104}  is well-posed:
  For any $(\hat{z} (0),\hat{\psi} (\cdot,0))^{\top}\in
  \mathcal{Z}_{\mu}(Y)  $ and $u\in L^2_{\rm loc}([0,\infty);U)$,
 the observer \dref{20206142120}   admits a unique solution $(\hat{z} ,\hat{\psi} )^{\top}\in C([0,\infty); \mathcal{Z}_{\mu}(Y)  )$  such that
\begin{equation} \label{2020615851}
 e^{\omega t} \|(z(t)-\hat{z} (t), \psi(\cdot,t)-\hat{\psi}(\cdot ,t))^{\top}\|_{ \mathcal{Z}_{\mu}(Y)  }\to  0
  \ \ \mbox{as}\ \ t\to\infty,
\end{equation}
   where       $\omega $  is a positive constant that is independent of $t$.
\end{theorem}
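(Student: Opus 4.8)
The plan is to push everything through the decoupling transformation $\mathbb{P}$ of \dref{20206141610} and to work with the upper-block-triangular operator $\mathscr{A}_{\mathbb{P}}$ of \dref{20206301117}. Hypothesis \dref{202082011061} is precisely what forces $Z_{F_1}\subset D(C_{\Lambda})$ for $F_1=e^{\tilde{A}\mu}F$ (through the characterization of $Z_{F_1}$ in Lemma \ref{lm2020516938}), so that $P_{\mu}$ and the bounded invertible map $\mathbb{P}$ are well defined and Lemma \ref{Lm2020615849} applies, yielding $\mathbb{P}\mathscr{A}\mathbb{P}^{-1}=\mathscr{A}_{\mathbb{P}}$. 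Since $\mathbb{P},\mathbb{P}^{-1}\in\mathcal{L}(\mathcal{Z}_{\mu}(Y))$, the operator $\mathscr{A}$ generates a $C_0$-semigroup exactly when $\mathscr{A}_{\mathbb{P}}$ does, with identical growth bound via $e^{\mathscr{A}t}=\mathbb{P}^{-1}e^{\mathscr{A}_{\mathbb{P}}t}\mathbb{P}$; it therefore suffices to prove that $\mathscr{A}_{\mathbb{P}}$ generates an exponentially stable semigroup on $\mathcal{Z}_{\mu}(Y)$.

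I would first analyze the two diagonal blocks. For the $(1,1)$-entry, the choice $F_1=e^{\tilde{A}\mu}F$ from \dref{20206301524} and the identity $C_{\mu\Lambda}P_{\mu}=-C_{\Lambda}e^{-A\mu}$ from \dref{202071900} give $\tilde{A}-F_1C_{\mu\Lambda}P_{\mu}=\tilde{A}+e^{\tilde{A}\mu}F\,C_{\Lambda}e^{-A\mu}$, which, on its natural domain, coincides with the detected closed-loop operator $A+LC_{\Lambda}$ of Definition \ref{De20204191058} with $L=e^{\tilde{A}\mu}F$ and observation $C_{\Lambda}e^{-A\mu}$. Because $F$ detects $(A,C)$ exponentially, Lemma \ref{lm2020613824} shows $e^{\tilde{A}\mu}F$ detects $(A,C_{\Lambda}e^{-A\mu})$ exponentially, so this block generates an exponentially stable $C_0$-semigroup on $Z$ for which $e^{\tilde{A}\mu}F$ is an admissible control operator. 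The $(2,2)$-entry $\tilde{G}_{\mu}$ generates the vanishing shift semigroup $e^{G_{\mu}t}$, which is exponentially stable on $L^2([0,\mu];Y)$, and $C_{\mu}$ is admissible for it by Lemma \ref{lm202062747}.

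With both diagonal generators exponentially stable and the admissibility of $C_{\mu}$ and of $e^{\tilde{A}\mu}F$ secured, the remaining task is to absorb the unbounded off-diagonal coupling $F_1C_{\mu\Lambda}$. Here the autonomous stable $\check{\psi}$-equation $\dot{\check{\psi}}=\tilde{G}_{\mu}\check{\psi}$ feeds the $\check{z}$-equation only through the admissible output $C_{\mu\Lambda}\check{\psi}$ and the admissible input $F_1$, so I would invoke the triangular cascade result (the upper-block-triangular counterpart of Lemma \ref{Lm201912031723}, which handled the lower-triangular cascade in the proof of Theorem \ref{Th2020611038}) to conclude that $\mathscr{A}_{\mathbb{P}}$, and hence $\mathscr{A}$, generates an exponentially stable $C_0$-semigroup $e^{\mathscr{A}t}$ with $\|e^{\mathscr{A}t}\|_{\mathcal{L}(\mathcal{Z}_{\mu}(Y))}\le Me^{-\omega_0t}$ for some $M,\omega_0>0$.

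Finally I would read off both conclusions. The error $(\tilde z,\tilde\psi)$ of \dref{20206112014} satisfies the homogeneous equation $\frac{d}{dt}(\tilde z,\tilde\psi)^{\top}=\mathscr{A}(\tilde z,\tilde\psi)^{\top}$ (this is \dref{20206112015} specialized to $F_2=-P_{\mu}F_1$), hence $(\tilde z,\tilde\psi)(t)=e^{\mathscr{A}t}(\tilde z(0),\tilde\psi(0))^{\top}$ and $\|(\tilde z,\tilde\psi)(t)\|_{\mathcal{Z}_{\mu}(Y)}\le Me^{-\omega_0t}\|(\tilde z,\tilde\psi)(0)\|_{\mathcal{Z}_{\mu}(Y)}$; taking any $\omega\in(0,\omega_0)$ yields \dref{2020615851}. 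For well-posedness of the observer itself, the cleanest route is the decomposition $(\hat z,\hat\psi)=(z,\psi)-(\tilde z,\tilde\psi)$: since $y$ is the output of \dref{wxh2020321104}, the plant state $(z,\psi)\in C([0,\infty);\mathcal{Z}_{\mu}(Y))$ exists by Theorem \ref{th2020613745}, the error $(\tilde z,\tilde\psi)$ lies in the same space by the semigroup just built, and one checks directly that their difference solves \dref{20206142120}, uniqueness following from that of the plant and error solutions. Equivalently, one may treat \dref{2020613845} as an inhomogeneous equation and verify admissibility of $\mathscr{F}$ and $(B,0)^{\top}$ for $e^{\mathscr{A}t}$. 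The main obstacle is the cascade step: one must ensure that the unbounded coupling $F_1C_{\mu\Lambda}$ neither obstructs generation nor destroys the exponential decay, and this hinges exactly on the admissibilities furnished by Lemma \ref{lm202062747} and Lemma \ref{lm2020613824} together with hypothesis \dref{202082011061}.
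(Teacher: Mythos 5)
Your proposal is correct and follows essentially the same route as the paper: hypothesis \dref{202082011061} gives $Z_{F_1}\subset D(C_{\Lambda})$ so that $P_{\mu}$ and the similarity $\mathbb{P}\mathscr{A}\mathbb{P}^{-1}=\mathscr{A}_{\mathbb{P}}$ of Lemma \ref{Lm2020615849} are available, exponential stability of the upper-triangular $\mathscr{A}_{\mathbb{P}}$ follows from Lemma \ref{lm2020613824} together with the cascade result (which is exactly Lemma \ref{Lm20207211026} in the Appendix, the upper-triangular counterpart you anticipated), and well-posedness plus the decay \dref{2020615851} are obtained by writing $(\hat z,\hat\psi)=(z,\psi)-(\tilde z,\tilde\psi)$ with the plant handled by Theorem \ref{th2020613745}. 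No gaps.
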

 \begin{proof}
 Let $F_1= e^{ \tilde{A}\mu }F$. Then \dref{202082011061} implies that
 $ Z_{F_1}\subset D(C_{\Lambda})$, where $Z_{F_1}$  is defined by  \dref{20206301054}.
 By Lemma \ref{lm2020516938}, the operator $P_{\mu}$ in \dref{20206142120} is well defined.

 Since $F  $    detects system  $(A,C)$ exponentially,
 it follows from
Lemma \ref{lm2020613824} that $ e^{ \tilde{A}\mu }F$ detects $(A,C_{\Lambda}e^{-A\mu })$ exponentially.
As a result,  the operator $\tilde{A}-F_1C_{\mu\Lambda}P_{\mu}$
generates an exponentially stable $C_0$-semigroup $e^{(\tilde{A}-F_1C_{\mu\Lambda}P_{\mu})t}$ on $Z$ and moreover, $F_1$ is admissible for  $e^{(\tilde{A}-F_1C_{\mu\Lambda}P_{\mu})t}$.
Since $\tilde{G}_{\mu}$ is exponentially stable already and $C_{\mu}$ is
admissible for  $e^{\tilde{G}_{\mu}t}$, it follows from Lemma \ref{Lm20207211026} in Appendix  that the operator
$\mathscr{A}_{\mathbb{P}}$ defined by \dref{20206301117} generates
an exponentially stable $C_0$-semigroup $e^{ \mathscr{A}_{\mathbb{P}} t}$ on $ \mathcal{Z}_{\mu}(Y) $.
 By Lemma \ref{Lm2020615849}, $\mathscr{A} $ and  $\mathscr{A}_{\mathbb{P}}$ are similar each other.
Therefore, the operator
$\mathscr{A} $ defined by \dref{2020613844} generates
an exponentially stable $C_0$-semigroup $e^{ \mathscr{A}  t}$ on $ \mathcal{Z}_{\mu}(Y) $.
As a result, the following system
\begin{equation} \label{2020731622}
 \left\{\begin{array}{l}
\disp \dot{\tilde{z}} (t) =A\tilde{z} (t)+e^{ \tilde{A}\mu }FC_{\mu\Lambda}\tilde{\psi}(\cdot,t),\crr
\disp  \dot{\tilde{\psi}}_t(\cdot,t) = ( G_{\mu}-P_{\mu} e^{ \tilde{A}\mu }F   C_{\mu\Lambda}) \tilde{\psi}(\cdot,t)+B_{\mu}C_{ \Lambda}\tilde{z} (t)
\end{array}\right.
\end{equation}
   with  initial state
\begin{equation} \label{2020731625}
 \left.\begin{array}{l}
\disp  \tilde{z}(0)=z(0)-\hat{z}(0), \ \ \
\tilde{\psi}  (\cdot,0)= \psi  (\cdot,0)-\hat{\psi}  (\cdot,0)
\end{array}\right.
\end{equation}
admits a unique solution $(\tilde{z},\tilde{\psi})^{\top}\in C([0,\infty);  \mathcal{Z}_{\mu}(Y)  )$ that decays exponentially to zero in $ \mathcal{Z}_{\mu}(Y) $  as $t\to\infty$.

 By Theorem \ref{th2020613745},
   system \dref{wxh2020321120}  admits a
 unique solution $(z ,\psi )^{\top}\in C([0,\infty);\mathcal{Z}_{\mu}(Y))$  for any $(z(0),\psi(\cdot,0))^{\top}\in \mathcal{Z}_{\mu}(Y)$ and $u\in L^2_{\rm loc}([0,\infty);U)$.
Let   $\hat{z}(t)=z(t)-\tilde{z}(t)$ and $\hat{\psi}(t)=\psi(t)-\tilde{\psi}(t)$. Then, such a defined $(\hat{z}(t),\hat{\psi}(t))^{\top}$ is a solution of  observer  \dref{20206142120}.
 Owing to the linearity of     \dref{20206142120}, the solution is unique.
Since system \dref{2020731622} happens to be the error system between   system
\dref{wxh2020321104} and its observer  \dref{20206142120},   the convergence \dref{2020615851}  holds.
The proof is complete.
\end{proof}
The assumption \dref{202082011061}  seems  a bit awkward.
When
$F$ or $C$ is a bounded operator, this assumption  can be   deduced from   that
 $F \in \mathcal{L}(Y,[D(A^*)]')$    detects system  $(A,C)$ exponentially.
When $F$ and $C$ are unbounded, we have the following Corollary:
 \begin{corollary}\label{Co2020615849}
Let  $(A ,B ,C )$ be   a regular linear system with the  state space $Z$, input space $U$ and output space $Y$. Suppose that
  $F \in \mathcal{L}(Y,[D(A^*)]')$    detects system  $(A,C)$ exponentially.
  Then,
      the observer \dref{20206142120}  of system \dref{wxh2020321104}   is well-posed for almost every $\mu>0$. That is,
  for any $(\hat{z} (0),\hat{\psi} (\cdot,0))^{\top}\in
  \mathcal{Z}_{\mu}(Y)  $ and $u\in L^2_{\rm loc}([0,\infty);U)$,
 the observer \dref{20206142120}   admits a unique solution $(\hat{z} ,\hat{\psi} )^{\top}\in C([0,\infty); \mathcal{Z}_{\mu}(Y)  )$  such that
\dref{2020615851}
 holds for some       positive constant $\omega $.
\end{corollary}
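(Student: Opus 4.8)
The plan is to show that the only hypothesis of Theorem \ref{Th2020731704} beyond exponential detectability, namely the domain condition \dref{202082011061}, is automatically satisfied for almost every $\mu>0$; once this is in hand the conclusion follows verbatim from Theorem \ref{Th2020731704}. Writing $F_1=e^{\tilde{A}\mu}F$ and recalling from Lemma \ref{lm2020516938} (equation \dref{20206301054**}) that $Z_{F_1}=D(A)+(s-\tilde{A})^{-1}F_1Y$, together with the always-valid inclusion $D(A)\subset D(C_\Lambda)$, the requirement $Z_{F_1}\subset D(C_\Lambda)$ is equivalent to the single range inclusion $(s-\tilde{A})^{-1}e^{\tilde{A}\mu}FY\subset D(C_\Lambda)$. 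So the whole matter reduces to verifying this inclusion off a null set of $\mu$.

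Two ingredients come for free. First, since $F$ detects $(A,C)$ exponentially, the triple $(A,F,C)$ is regular (Definition \ref{De20204191058}), so by Proposition \ref{Regular} the transfer function $\mathbf{G}(\lambda):=C_\Lambda(\lambda-\tilde{A})^{-1}F$ exists for every $\lambda\in\rho(A)$, lies in $\mathcal{L}(Y,Y)$, and is bounded on a right half-plane; in particular $(s-\tilde{A})^{-1}FY\subset D(C_\Lambda)$. Second, because $C$ is admissible for $e^{At}$, for each fixed $w\in Z$ the map $\mu\mapsto C_\Lambda e^{A\mu}w$ belongs to $L^2_{\rm loc}([0,\infty);Y)$, so $e^{A\mu}w\in D(C_\Lambda)$ for a.e. $\mu$. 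Taking $w=(s-\tilde{A})^{-1}Fy$ and using the commutation $e^{A\mu}(s-\tilde{A})^{-1}F=(s-\tilde{A})^{-1}e^{\tilde{A}\mu}F$, this already gives, for each fixed $y\in Y$, that $(s-\tilde{A})^{-1}e^{\tilde{A}\mu}Fy\in D(C_\Lambda)$ for a.e. $\mu$.

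The hard part is upgrading this ``for each $y$, a.e. $\mu$'' to ``a.e. $\mu$, for all $y$,'' since the exceptional null set a priori depends on $y$ while $D(C_\Lambda)$ is not closed in $Z$. I would resolve it through the operator-valued function $\Theta(\mu):=C_\Lambda e^{A\mu}(s-\tilde{A})^{-1}F$, whose Laplace transform is computed, using the resolvent identity and the interchange with $C_\Lambda$ justified by admissibility of $C$, as
\begin{equation}
\int_0^{\infty}e^{-\lambda\mu}\Theta(\mu)\,d\mu
=C_\Lambda(\lambda-\tilde{A})^{-1}(s-\tilde{A})^{-1}F
=\frac{\mathbf{G}(\lambda)-\mathbf{G}(s)}{s-\lambda},\ \ \Re\lambda\ \mbox{large}.
\end{equation}
The right-hand side is an $\mathcal{L}(Y,Y)$-valued analytic function that is bounded and decays like $|\lambda|^{-1}$ on a right half-plane, hence belongs to an operator Hardy class; a Paley--Wiener identification then represents $\Theta$ a.e. by an $\mathcal{L}(Y,Y)$-valued $L^2$ function, whence $\Theta(\mu)\in\mathcal{L}(Y,Y)$ and $(s-\tilde{A})^{-1}e^{\tilde{A}\mu}FY\subset D(C_\Lambda)$ for a.e. $\mu$. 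I expect this uniformity/Paley--Wiener step to be the main obstacle when $\dim Y=\infty$, because $\mathcal{L}(Y,Y)$ is not a Hilbert space; when $\dim Y<\infty$ --- the setting of all the applications, including the wave equation of Section \ref{Application} --- it is entirely elementary, since one simply applies the second ingredient to each vector of a finite basis of $Y$ and discards the resulting finite union of null sets, after which $\Theta(\mu)$ is a bounded (finite) matrix for a.e. $\mu$. With \dref{202082011061} verified off a null set, Theorem \ref{Th2020731704} delivers, for those $\mu$, the unique solution in $C([0,\infty);\mathcal{Z}_{\mu}(Y))$ together with the exponential error estimate \dref{2020615851}, completing the proof.
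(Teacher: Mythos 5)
Your reduction is exactly the paper's: set $F_1=e^{\tilde{A}\mu}F$, note via \dref{20206301054**} and $D(A)\subset D(C_\Lambda)$ that the only hypothesis of Theorem \ref{Th2020731704} left to check is the inclusion \dref{202082011061} for a.e.\ $\mu$, and then quote that theorem. Where you diverge is in how that inclusion is justified, and here you have put your finger on a real issue that the paper's own proof passes over: the paper writes the chain \dref{202072826} and rests on the claim that admissibility of $C$ gives $e^{A\mu}Z\subset D(C_\Lambda)$ for almost every $\mu$, but admissibility only yields, for each fixed $w\in Z$, that $e^{A\mu}w\in D(C_\Lambda)$ for a.e.\ $\mu$ with a null set depending on $w$, and since $D(C_\Lambda)$ is not closed the uniform statement does not follow formally. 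Your finite-dimensional-$Y$ argument (discard the finite union of null sets attached to a basis of $Y$, then use linearity of $D(C_\Lambda)$) is complete and covers the application of Section \ref{Application}, and your Laplace-transform computation $\int_0^\infty e^{-\lambda\mu}\Theta(\mu)\,d\mu=(\mathbf{G}(\lambda)-\mathbf{G}(s))/(s-\lambda)$ is correct. However, the Paley--Wiener step does not close the gap when $\dim Y=\infty$, and not only because $\mathcal{L}(Y,Y)$ fails to be a Hilbert space: even granting an $L^2$ operator-valued representative $\tilde\Theta$, the identification $\tilde\Theta(\mu)y=\Theta(\mu)y$ holds a.e.\ in $\mu$ only for each fixed $y$, so it still does not deliver $e^{A\mu}(s-\tilde{A})^{-1}Fy\in D(C_\Lambda)$ for \emph{all} $y$ off a single null set, which is what membership of $(s-\tilde{A})^{-1}e^{\tilde{A}\mu}FY$ in $D(C_\Lambda)$ requires. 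So for general infinite-dimensional $Y$ your proof is incomplete at exactly this point --- but so, in an even stronger form, is the paper's; a fully rigorous statement would either restrict to $\dim Y<\infty$ (or bounded $C$ or bounded $F$, as the paper notes before the corollary) or simply assume \dref{202082011061} as Theorem \ref{Th2020731704} does.
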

\begin{proof}
  Since  $(A,F,C)$
is a  regular linear system, $C$ is admissible for $e^{ At}$.
This implies that $e^{ A\mu}Z \subset  D(C_{\Lambda}) $ for almost every $\mu>0$.
Consequently,
   \begin{equation} \label{202072826}
   (s-\tilde{A})^{-1}e^{\tilde{A}\mu}FY= e^{  {A}\mu}(s-\tilde{A})^{-1}FY
  \subset e^{ {A}\mu}Z
  \subset
  D(C_{\Lambda}),\ \ \forall\ s\in\rho(A),
  \end{equation}
 which, together with Theorem \ref{Th2020731704}, completes the proof.
\end{proof}



\begin{remark}\label{Re2020715178}
When $A$ is a  matrix,  it follows from \dref{20206141839} and \dref{202071808} that
 the  observer \dref{20206142120} takes form
 \begin{equation} \label{20207161051}
 \left\{\begin{array}{l}
\disp \dot{\hat{z}} (t)=A \hat{z} (t)-e^{A \mu }F[y(t)- \hat{\psi}(\mu,t)]+B u(t),\crr
\hat{\psi}_t(x,t)+ \hat{\psi}_x(x,t)= -C  e^{ A(\mu-x) }F
[y(t)- \hat{\psi}(\mu,t)],\crr
\disp \hat{\psi}(0,t)=C \hat{z}(t),
\end{array}\right.
\end{equation}
which is the same as the observer in \cite{Krstic2008delay}.
In contrast to the PDE backstepping method used in \cite{Krstic2008delay},
we never need the target system.
 Moreover,
  the Lyapunov function  has not used  in the
    proof of observer convergence. Once again as  Remark
\ref{Re2020716940},  this  gives a way to avoid the difficulties in construction of the Lyapunov
functional  for PDEs with  delay.
 \end{remark}

\section{Application to 1-D wave equation}\label{Application}
To show the effectiveness of the developed approach, we apply the abstract results to the benchmark wave equation:
   \begin{equation} \label{2020515700}
\left\{\begin{array}{l}
     \disp  z_{tt}(\sigma,t)=z_{\sigma\sigma}(\sigma,t),\ \ \sigma\in(0,1),\crr
  \disp     z(0,t)=0,\ z_\sigma(1,t)=u(t-\tau),
        \end{array}\right.
 \end{equation}
 where $u(t)$  is the control input which suffers from
 a time-delay $\tau >0$.
 The input space is  $  \R$ and
 the state space  is
 $Z=\{(f,g)\in H^1(0,1)\times L^2(0,1)\ |\  f(0)=0\}$
  with the inner product
  \begin{equation} \label{20205131429}
 \langle (f_1,g_1),(f_2,g_2)\rangle_{Z}=\int_{0}^1 f_1'(x)  {f_2'(x)}
 +g_1(x) {g_2(x)}dx,\ \
 \forall\ (f_i,g_i)\in Z,\ i=1,2.
 \end{equation}
Define the operator    $A : D(A )\subset Z\to Z$ by
 \begin{equation} \label{2020514722}
\left\{\begin{array}{l}
     \disp    A (f,g)^{\top}=(g,f'')^{\top},\ \ \forall\ (f,g)^{\top}\in D(A ),\crr
  \disp      D(A )=\{ (f,g)\in H^2(0,1)\times H^1(0,1)  \ |\  f(0)=g(0)=0,f'(1)=0\}.
       \end{array}\right.
 \end{equation}
In view of \dref{20205292010},
  system \dref{2020515700} can be written as the form
 \begin{equation} \label{20207141102}
\left\{\begin{array}{l}
\disp \frac{d}{dt}(z(\cdot,t),z_t(\cdot,t))^{\top}=A(z(\cdot,t),z_t(\cdot,t))^{\top}+B\phi(\tau,t),\crr
\disp    \phi_t(x,t)+ \phi_x(x,t)=0,\ \  x\in(0,\tau), \crr
\disp \phi(0,t)=u(t),
\end{array}\right.
\end{equation}
  where
  the control operator  $B =(0,\delta(\cdot-1))^{\top}$ and $\delta(\cdot)$ is the Dirac distribution.
Let
\begin{equation} \label{2020716720}
K (f,g)^{\top}=-k_1g(1)\ \ \mbox{ for any}\ \  (f,g)^{\top}\in D(A ),\ \ k_1>0.
\end{equation}
Then,   $K=-k_1B ^*$ and   it is well known that
 $K $ stabilizes system  $(A,B)$ exponentially. In view of \dref{2020611024}, we obtain the feedback
  \begin{equation} \label{202075909}
u(t)=K_{\Lambda}  \int_0^{\tau} e^{\tilde{A}  x} B \phi (x,t)dx+K_{\Lambda} e^{  {A} \tau}({z}(\cdot,t),z_t(\cdot,t))^{\top}.
\end{equation}
We next seek the analytic form of the feedback.
Let $z_{\delta}=(\sigma,0)^{\top}$ with $\sigma\in [0,1]$.
  A simple computation shows that
    $\tilde{A}z_{\delta}=B$ and \begin{equation}\label{2020751456}
e^{Ax}z_{\delta}
=\left(\sum\limits_{n=0}^{\infty}(-1)^{n}\frac{2}{\omega_n^{2}}\cos\omega_nx
\sin\omega_n\sigma,  \sum\limits_{n=0}^{\infty}(-1)^{n+1}\frac{2}{\omega_n}
\sin\omega_nx\sin \omega_n\sigma \right)^{\top},
\end{equation}
where
\begin{equation} \label{2020781634}
\omega_n=\frac{(2n+1)\pi  }{2}, \ \ \sigma\in [0,1],\   x\in[0,\tau],\ \ n=0,1,2,\cdots.
\end{equation}
Moreover, it follows from   \dref{2020751456}   that
\begin{equation} \label{2020781646}
\int_0^{\tau} e^{Ax}z_{\delta} \phi (x,t)dx=
\left(
\sum\limits_{n=0}^{\infty}(-1)^{n}\frac{2\alpha_n(t)}{\omega_n^{2}}
\sin\omega_n\sigma , \sum\limits_{n=0}^{\infty}(-1)^{n+1}\frac{2\beta_n(t)}{\omega_n}
 \sin \omega_n\sigma
\right)^{\top} ,
\end{equation}
where
\begin{equation} \label{2020781649}
\alpha_n(t)=\int_0^{\tau}
 \cos\omega_nx \phi(x,t)dx\ \ \
\beta_n(t)=\int_0^{\tau}
 \sin\omega_nx \phi(x,t)dx.
\end{equation}
Since $B$ is admissible for $e^{At}$ and $\phi(\cdot,t)\in L^2(0,\tau)$, we have
\begin{equation} \label{202076949}
\begin{array}{ll}
\disp \int_0^{\tau} e^{ \tilde{A}x }B\phi (x,t)dx&\disp=\int_0^{\tau} e^{ \tilde{A}x }\tilde{A}\tilde{A}^{-1}B\phi (x,t)dx=  \int_0^{\tau}e^{ \tilde{A}x }\tilde{A} z_{\delta}\phi (x,t)dx\crr
&\disp =  \int_0^{\tau} \tilde{A} e^{ {A}x }z_{\delta}\phi (x,t)dx
=  \tilde{A}\int_0^{\tau} e^{ {A}x }z_{\delta}\phi (x,t)dx\in Z,
\end{array}
\end{equation}
which implies that
\begin{equation} \label{20207141500}
\int_0^{\tau} e^{Ax}z_{\delta} \phi (x,t)dx\in D(A).
\end{equation}
  Combining \dref{2020514722}, \dref{2020716720},   \dref{20207141500},
  \dref{202076949}, \dref{2020781646} and \dref{2020781649}, we arrive at
\begin{equation} \label{2020781713}
\begin{array}{ll}
\disp    K_{\Lambda}\int_0^{\tau} e^{\tilde{A}  x} B \phi (x,t)dx
&\disp =    K_{\Lambda} A \int_0^{\tau} e^{ {A}x }z_{\delta} \phi (x,t)dx
 = -2k_1\sum\limits_{n=0}^{\infty}  \alpha_n (t) .
 \end{array}
\end{equation}
By a straightforward  computation, we have
\begin{equation} \label{2020781719}
\begin{array}{ll}
\disp  K_{\Lambda} e^{  {A} \tau}({z}(\cdot,t),z_t(\cdot,t))^{\top}&\disp =
 -k_1
 \sum\limits_{n=0}^{\infty} (-1)^n\omega_n\left[ \zeta_n(t)\cos\omega_n\tau-\gamma_n(t) \sin\omega_n\tau\right],
\end{array}
\end{equation}
where
\begin{equation} \label{2020781725}
\gamma_n(t)=2\int_0^1z (\sigma,t)\sin\omega_n\sigma d\sigma ,\ \
 \zeta_n(t) =\frac2{\omega_n}\int_0^1z_{t}(\sigma,t)\sin\omega_n\sigma d\sigma ,\  n=0,1,2,\cdots.
\end{equation}

By \dref{202075909}, \dref{2020781713} and  \dref{2020781719}, we get the closed-loop system \begin{equation} \label{2020781730}
 \left\{\begin{array}{l}
 \disp  z_{tt}(\sigma,t)=z_{\sigma\sigma}(\sigma,t),\ \ \sigma\in(0,1),\crr
  \disp     z(0,t)=0,\ z_\sigma(1,t)=\phi( \tau,t),  \crr
\disp    \phi_t(x,t)+ \phi_x(x,t)=0, \ \   x\in(0,\tau), \crr
\disp \phi(0,t)
\disp =-2k_1 \sum\limits_{n=0}^{\infty}  \alpha_n (t)
\disp  -k_1
 \sum\limits_{n=0}^{\infty} (-1)^n\omega_n\left[ \zeta_n(t)\cos\omega_n\tau-\gamma_n(t) \sin\omega_n\tau\right],
 \end{array}\right.
\end{equation}
where   $k_1>0$, $\alpha_n(t)$ is given by  \dref{2020781649}     and  $\zeta_n(t),\gamma_n(t)$
are given  by \dref{2020781725}. By Theorem \ref{Th2020611038}, the solution of   closed-loop system  \dref{2020781730} is well posed and decays to zero exponentially as $t\to\infty$.

\begin{remark}\label{Re20207111229}
The infinite  series  in the closed-loop system  \dref{2020781730}
   can also  be written  as a  dynamic form.  Actually,
   a simple computation shows that
 \begin{equation} \label{20207271004}
v_1(\cdot,\tau;t)=  \int_0^{\tau} e^{\tilde{A}  s} B \phi (s,t)ds,
\end{equation}
   where
 \begin{equation} \label{20207271005}
 \left\{\begin{array}{l}
  \disp v_{1xx}(\sigma ,x;t)=v_{1\sigma \sigma }(\sigma ,x;t),  \ \ \sigma \in (0,1), \ 0<x\leq\tau,  \crr
 \disp v_{1}(0,x;t)=0,\ v_{1\sigma }(1,x;t)=\phi(\tau-x,t),\crr
   (v_1(\sigma ,0;t), v_{1x}(\sigma ,0;t))\equiv(0,0),\ \ \sigma \in [0,1].
\end{array}\right.
\end{equation}
The    notation  $v_1(\cdot,\cdot;t)$ means that the function $v_1$  depends on the time $t$. If we let
\begin{equation} \label{2020720919}
v_2(\cdot,x;t)=    e^{\tilde{A}  x}  ({z}(\cdot,t),z_t(\cdot,t))^{\top},
\end{equation}
  then it is  governed by
\begin{equation} \label{2020720921}
 \left\{\begin{array}{l}
  \disp v_{2xx}(\sigma ,x;t)=v_{2\sigma \sigma }(\sigma ,x;t),  \ \ \sigma \in (0,1), \ x>0,  \crr
 \disp v_{2}(0,x;t)= v_{2\sigma }(1,x;t)=0,\crr
 (v_2(\sigma ,0;t), v_{2x}(\sigma ,0;t))=({z}(\sigma ,t),z_t(\sigma ,t)),\ \ \sigma \in [0,1].
\end{array}\right.
\end{equation}
 Combining \dref{20207271004}, \dref{20207271005}, \dref{2020720921}, \dref{202075909} and \dref{2020716720}, we obtain the following closed-loop system:
 \begin{equation} \label{202075909dynamics}
 \left\{\begin{array}{l}
   \disp  z_{tt}(\sigma,t)=z_{\sigma\sigma}(\sigma,t),\ \ \sigma\in(0,1), \crr
  \disp     z(0,t)=0,\ z_\sigma(1,t)=\phi( \tau,t),   \crr
\disp    \phi_t(x,t)+ \phi_x(x,t)=0, \ \ \ x\in(0,\tau), \crr
\disp \phi(0,t)=-k_1v_{1x}(1,\tau;t) -k_1v_{2x}(1,\tau;t),\ \ k_1>0,\crr
  v_{1}(\cdot,\cdot;t),v_2(\cdot,\cdot;t)\  \mbox{are given by}\  \dref{20207271005} \ \mbox{and}\  \dref{2020720921}.
\end{array}\right.
\end{equation}

\end{remark}

\vskip0.3cm
 Now we consider the output delay compensation for the wave equation in \dref{2020515700}.  Suppose that   we can measure
the average velocity  around $\sigma_0\in(0,1)$ and the output is
  \begin{equation} \label{2020772152}
  y(t)=\int_0^1m(\sigma)z_t(\sigma,t-\mu)d\sigma,\ \ \mu>0,
  \end{equation}
 where   $m\in L^2(0,1)$ is the shaping function
 around   the sensing point
$\sigma_0$. System \dref{2020515700} with output \dref{2020772152} can be  written as
 \begin{equation} \label{2020941115}
\left\{\begin{array}{l}
     \disp  z_{tt}(\sigma,t)=z_{\sigma\sigma}(\sigma,t),\ \ \sigma\in(0,1),\crr
  \disp     z(0,t)=0,\ z_\sigma(1,t)=u(t-\tau),\ \ \tau\geq0,\crr
\disp     \psi_t(x,t)+ \psi_x(x,t)=0\  \  \ \ x\in[0,\mu ], \ \ \mu>0,\crr
\disp  \psi(0,t)=\int_0^1m(s)z_t(s,t)ds,\crr
\disp y(t)= \psi(\mu ,t).
\end{array}\right.
\end{equation}
The observation
 operator $C$ is given by
\begin{equation} \label{2020772222}
 C: (f,g)^{\top}\to \int_0^1m(\sigma)g(\sigma) d\sigma,\ \ \forall\ \ (f,g)^{\top}\in Z.
  \end{equation}
It is evident that  $C$ is
bounded. We choose $m$ such that $(A,C)$ is exactly observable.
 If we let $F=-k_2C^*$, $k_2>0$, then $F\in\mathcal{L}(\R,Z)$ is given by $Fq=-k_2q(0,m(\cdot))^{\top}$
 for any $q\in \R$. Since  $F$ detects system $(A,C)$ exponentially \cite{LiuKS1997},
 by \dref{20206142120}, the observer of system \dref{2020941115} is
\begin{equation} \label{2020751656}
 \left\{\begin{array}{l}
\disp \frac{d}{dt}(\hat{z}(\cdot,t),\hat{z}_t(\cdot,t))^{\top}=A (\hat{z}(\cdot,t),\hat{z}_t(\cdot,t))^{\top}-e^{  {A}\mu }F[y(t)-\hat{\psi}(\mu,t)]+B u(t-\tau),\crr
\hat{\psi}_t(x,t)+ \hat{\psi}_x(x,t)=P_{\mu}e^{  {A}\mu }F
[y(t)-\hat{\psi}(\mu,t)],\crr
\disp \hat{\psi}(0,t)=\int_0^1m(\sigma)\hat{z}_t(\sigma,t)d\sigma.
\end{array}\right.
\end{equation}
Since
  \begin{equation} \label{20205141058}
e^{ \tilde{A}x }F=e^{ {A}x }F=-k_2\left(  \sum_{n=0}^{\infty}
f_n\sin\omega_n x \sin \omega_n\sigma,
\sum_{n=0}^{\infty}
f_n \omega_n  \cos\omega_n x \sin \omega_n\sigma
\right)^{\top},
\end{equation}
 where $0\leq x\leq\mu $, $0\leq \sigma\leq1$, $\omega_n$ is given by \dref{2020781634} and
  \begin{equation} \label{2020781420}
f_n= \frac{2}{\omega_n}\int_{0}^{1}m(\sigma)
\sin \omega_n\sigma  d\sigma,\ n=0,1,2,\cdots,
\end{equation}
it follows from  \dref{202071808} and \dref{20206141839} that
   \begin{equation} \label{2020751019}
   \begin{array}{rl}
\disp P_{\mu }e^{ \tilde{A}\mu }F&\disp =  - Ce^{A(\mu-x)}F=
k_2\int_{0}^{1}
\left(\sum_{n=0}^{\infty}f_n\omega_n\cos  \omega_n(\mu-x) \sin \omega_n\sigma \right)m(\sigma)d\sigma\crr
&\disp =k_2 \sum_{n=0}^{\infty}\frac{f_n^2\omega_n^2}{2} \cos  \omega_n(\mu-x).
\end{array}
\end{equation}
Since
\begin{equation} \label{2020781517}
  \sum_{n=0}^{\infty}\frac{f_n^2\omega_n^2}{2}  =
  2\sum_{n=0}^{\infty} \left|\int_0^1m(\sigma)\sin\omega_n\sigma d\sigma\right|^2<+\infty,
\end{equation}
the series in \dref{2020751019} is convergent.
Combining  \dref{20205141058} and \dref{2020751019}, the observer \dref{2020751656}  becomes
\begin{equation} \label{2020781535}
 \left\{\begin{array}{l}
\disp  \hat{z}_{1t}(\sigma,t)=\hat{z}_2(\sigma,t)+k_2
 \left(  \sum_{n=0}^{\infty}
f_n\sin\omega_n \mu \sin \omega_n\sigma\right)[y(t)-\hat{\psi}(\mu,t)],\crr
\disp  \hat{z}_{2t}(\sigma,t)= \hat{z}_{1\sigma\sigma}(\sigma,t)+k_2
\left(\sum_{n=0}^{\infty}
f_n \omega_n  \cos\omega_n \mu \sin \omega_n\sigma\right)[y(t)-\hat{\psi}(\mu,t)]+  u(t-\tau),\crr
 \disp\hat{\psi}_t(x,t)+  \hat{\psi}_x(x,t)=k_2\left[\sum_{n=0}^{\infty}\frac{f_n^2\omega_n^2}{2} \cos  \omega_n(\mu-x)\right]
[y(t)-\hat{\psi}(\mu,t)],\crr
\disp \hat{\psi}(0,t)=\int_0^1m(\sigma)\hat{z}_t(\sigma,t)d\sigma,
\end{array}\right.
\end{equation}
where $k_2>0$, $\hat{z}_1(\sigma,t)=\hat{z}(\sigma,t)$ and
$\hat{z}_2(\sigma,t)=\hat{z}_t(\sigma,t)$ for any  $\sigma\in[0,1]$ and $t\geq0$.
By Theorem \ref{Th2020731704},  $(\hat{z}_1(\cdot,t),\hat{z}_2(\cdot,t)) $ converges to
$(z(\cdot,t),z_t(\cdot,t)) $ exponentially in $Z$ as $t\to\infty$.

\begin{remark}\label{Re2020720936}
By  our abstract theory, the proposed approach is still working for the
  delayed  boundary output $y(t)=z(1,t-\mu)$. In this case, we can choose
 $F =-k_2(0,\delta(\cdot-1))^{\top}$ with $k_2>0$. However,
since $F$ is unbounded now,  it is not easy to obtain the analytic forms
of  the   gain operators  $e^{ \tilde{A}\mu }F$ and
$P_{\mu}e^{ \tilde{A}\mu }F$
 in the state space. Therefore, a further effort is   still needed
  for the observer design of infinite-dimensional systems with   unbounded   delayed output.

 \end{remark}



\section{Appendix}
\begin{lemma}\label{lm2020632124}
Suppose that $(A,B,C)$ is a linear system with the
state space $Z$, input space $U$ and the output space $Y$. For any $0\neq q\in\R$,
 the following assertions are true:

(i).  If  $C\in \mathcal{L}(D(A),Y)$ is admissible for
$e^{At}$,  then  $Ce^{Aq} \in \mathcal{L}(D(A),Y) $ is admissible for
$e^{At}$ as well;

(ii). If $B\in \mathcal{L}(U, [D(A^*)]')$  is admissible for
$e^{At}$, then    $ e^{\tilde{A}q}B \in \mathcal{L}(U, [D(A^*)]') $ is admissible for
$e^{At}$ as well.

\end{lemma}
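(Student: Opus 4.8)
The plan is to reduce both assertions to the admissibility of the original operators by commuting the fixed exponential factor through the semigroup and then exploiting the boundedness of $e^{Aq}$ on $Z$. Recall that $C\in\mathcal{L}(D(A),Y)$ is admissible for $e^{At}$ iff for some (hence every) $T>0$ there is $K_T>0$ with $\int_0^T\|Ce^{At}x\|_Y^2\,dt\le K_T\|x\|_Z^2$ for all $x\in D(A)$, and that $B\in\mathcal{L}(U,[D(A^*)]')$ is admissible iff $\int_0^T e^{\tilde A(T-t)}Bu(t)\,dt\in Z$ with $Z$-norm bounded by $K_T\|u\|_{L^2([0,T];U)}$ for all $u$. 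The single fact that drives everything is that $e^{Aq}\in\mathcal{L}(Z)$, that it leaves $D(A)$ invariant, and that it commutes with $e^{At}$ on $Z$ and with $e^{\tilde A t}$ on $[D(A^*)]'$.

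For (i) I would fix $x\in D(A)$, use $e^{Aq}e^{At}=e^{At}e^{Aq}$ together with $e^{Aq}x\in D(A)$, and write
\[ \int_0^T\|(Ce^{Aq})e^{At}x\|_Y^2\,dt=\int_0^T\|Ce^{At}(e^{Aq}x)\|_Y^2\,dt\le K_T\|e^{Aq}x\|_Z^2\le K_T\|e^{Aq}\|^2\|x\|_Z^2, \]
where the first inequality is simply the admissibility of $C$ applied to the initial state $e^{Aq}x\in D(A)$. This exhibits $Ce^{Aq}$ as admissible with constant $K_T\|e^{Aq}\|^2$.

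For (ii) I would argue dually on the control side. Since $e^{\tilde A(T-t)}e^{\tilde A q}=e^{\tilde A(T-t+q)}$ on $[D(A^*)]'$ and $e^{\tilde A q}$ is continuous, the fixed factor may be pulled out of the Duhamel integral,
\[ \int_0^T e^{\tilde A(T-t)}(e^{\tilde A q}B)u(t)\,dt=e^{\tilde A q}\int_0^T e^{\tilde A(T-t)}Bu(t)\,dt. \]
By admissibility of $B$ the inner integral lies in $Z$, and on $Z$ the extension $e^{\tilde A q}$ coincides with $e^{Aq}\in\mathcal{L}(Z)$; hence the whole expression is in $Z$ with norm at most $\|e^{Aq}\|K_T\|u\|_{L^2([0,T];U)}$, which is exactly the admissibility of $e^{\tilde A q}B$.

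The routine points to check are the commutation relations and the invariance $e^{Aq}D(A)\subset D(A)$, all immediate from the (semi)group law. The one genuine obstacle is the sign of $q$: for $q>0$ the operator $e^{Aq}$ is just the semigroup evaluated at time $q$, so its boundedness and the domain invariance are automatic and no extra hypothesis is needed. For $q<0$, however, $e^{Aq}$ is a backward-time operator, so the argument tacitly relies on $e^{Aq}$ being a bounded operator on $Z$, i.e.\ on $A$ generating a $C_0$-group (as it does in the wave-equation application of Section~\ref{Application}, and as is implicit wherever objects such as $e^{-\tilde A\tau}B$ are formed). I would therefore record the boundedness of $e^{Aq}$ at the outset, noting it is free for $q>0$ and supplied by the group structure for $q<0$, after which both estimates above go through verbatim.
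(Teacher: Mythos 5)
Your proof is correct and, for part (i), essentially coincides with the paper's: the paper substitutes $s=t+q$ to rewrite $\int_0^\tau\|Ce^{Aq}e^{At}z_0\|_Y^2\,dt$ as $\int_q^{q+\tau}\|Ce^{As}z_0\|_Y^2\,ds$ and then enlarges the domain of integration to $[0,q+\tau]$, whereas you keep the time interval $[0,T]$ fixed and apply the admissibility estimate to the shifted initial state $e^{Aq}x\in D(A)$; these are the same computation viewed from two sides. For part (ii) you genuinely diverge: the paper dualizes (admissibility of $B$ for $e^{At}$ gives admissibility of $B^*$ for $e^{A^*t}$, part (i) applied to the pair $(A^*,B^*)$ gives admissibility of $B^*e^{A^*q}$, and taking adjoints returns $e^{\tilde Aq}B$), while you verify the input-map characterization directly by pulling the bounded factor $e^{\tilde Aq}$ out of the Duhamel integral and using $e^{\tilde Aq}|_Z=e^{Aq}\in\mathcal{L}(Z)$. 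Both are valid; the duality route is shorter once (i) is in hand, while your direct route is self-contained and makes the admissibility constant $\|e^{Aq}\|K_T$ explicit. Your closing remark on the sign of $q$ is in fact an improvement on the paper: the paper's inequality $\int_q^{q+\tau}\leq\int_0^{q+\tau}$ presupposes $q>0$, and for $q<0$ the integrand $Ce^{As}z_0$ is not even defined for a mere semigroup, yet the lemma is stated for all $q\neq0$ and is invoked in Lemma \ref{lm202061804} with the negative exponent $e^{-\tilde A\tau}B$; your observation that the $q<0$ case tacitly requires $e^{At}$ to extend to a group (so that $e^{Aq}$ is bounded and leaves $D(A)$ invariant) makes explicit a hypothesis the paper leaves unstated.
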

\begin{proof}
  For any  $z_0\in D(A)$ and
  $\tau>0$,  we have
\begin{equation} \label{2020632130}
      \int_{0}^{\tau}\|C  e^{Aq }e^{At}z_0\|_Y^2dt
   =\int_{q}^{q+\tau}\|C  e^{As }z_0\|_Y^2ds
   \leq
   \int_{0}^{q+\tau }\|C  e^{A  s  }z_0\|_Y^2ds.
  \end{equation}
Since $C\in \mathcal{L}(D(A),Y)$ is admissible for
$e^{At}$,   there exists an  $M>0$ such that
 \begin{equation} \label{2020632141}
       \int_{0}^{\tau+q}\|C  e^{As }z_0\|_Y^2ds\leq M\|z_0\|_Z^2,
  \end{equation}
which, together with \dref{2020632130},  shows  that
$Ce^{Aq} \in \mathcal{L}(D(A),Y) $ is admissible for
$e^{At}$.

  Since $B $  is admissible for
$e^{At}$, $B^*  $  is admissible for
$e^{A^*t}$. By  (i) just proved, $B^*  e^{A^*q }$ is admissible for
$e^{A^*t}$. Hence,  $  e^{\tilde{A} q }B$ is admissible for
$e^{A t}$. This completes the proof.
\end{proof}

 \begin{lemma}\label{Lm201912031723}
 Let  $(A,C) $ be a linear system with the state space $Z$ and output space $U$.
 Suppose that $C$  is admissible for $e^{At}$.
 Let $G_{\alpha}$  and $B_{\alpha}$   be   given by
 \dref{20205292012} and \dref{20205302223},   respectively.
  Define  $\mathcal{Z}_{\tau  }(U) =Z\times L^2([0,\tau];U)$ and
  \begin{equation} \label{2020428921}
 \A =\begin{pmatrix}
\tilde{A} &0\\B_{\alpha} C_{ \Lambda}&\tilde{G}_{\alpha}
\end{pmatrix},
\left.\begin{array}{l}
 \disp D(\A )=\left\{\begin{pmatrix}
z\\g\end{pmatrix} \in \mathcal{Z}_{\tau  }(U)  \ \Big{|} \ \begin{array}{l}
 \tilde{A}z \in Z \\
  \tilde{G}_{\alpha} g+B_{\alpha}C_{\Lambda}z\in L^2([0,\alpha];U)
  \end{array}  \right\}.
 \end{array}\right.
 \end{equation}
  Then, the operator $\A $   generates a
 $C_0$-semigroup $e^{\A  t}$ on $\mathcal{Z}_{\tau  }(U)$.
 Moreover, if we suppose  further that  $e^{A t}$ is
   exponentially stable in  $Z$, then  $e^{\A  t}$ is  exponentially stable in
   $\mathcal{Z}_{\tau  }(U)$.
\end{lemma}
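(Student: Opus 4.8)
The plan is to exploit the lower block-triangular (cascade) structure of $\A$: in the abstract Cauchy problem the first component evolves autonomously under $e^{At}$ and feeds the second component through the admissible coupling $B_{\alpha}C_{\Lambda}$. I would therefore construct the semigroup explicitly from this cascade and then read off both the generation and the exponential stability from the two admissibility hypotheses, rather than trying to treat $B_{\alpha}C_{\Lambda}$ as a perturbation of the diagonal generator directly.

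First I would solve the first equation, $z(t)=e^{At}z_{0}$, and set $v(t):=C_{\Lambda}e^{At}z_{0}$. Since $C\in\mathcal{L}(D(A),Y)$ is admissible for $e^{At}$, the map $v$ lies in $L^{2}_{\rm loc}([0,\infty);U)$ with $\|v\|_{L^{2}([0,T];U)}\le c_{T}\|z_{0}\|_{Z}$ for every $T>0$. By Lemma \ref{lm202062747}, $B_{\alpha}$ is an admissible control operator for the shift semigroup $e^{G_{\alpha}t}$, so the variation-of-constants formula
\[
g(t)=e^{G_{\alpha}t}g_{0}+\int_{0}^{t}e^{G_{\alpha}(t-s)}B_{\alpha}v(s)\,ds
\]
defines $g\in C([0,\infty);L^{2}([0,\alpha];U))$ and $z_{0}\mapsto g(t)$ is bounded. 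Setting $T(t)(z_{0},g_{0})^{\top}=(e^{At}z_{0},g(t))^{\top}$ gives a family of bounded operators on $\mathcal{Z}_{\tau}(U)$; the identity $T(t+s)=T(t)T(s)$ follows from the cocycle relation $v(s+\cdot)=C_{\Lambda}e^{A\cdot}(e^{As}z_{0})$, and strong continuity from that of $e^{At}$ and $e^{G_{\alpha}t}$ together with continuity of the input integral at $t=0$.

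The step I expect to be the main obstacle is identifying the generator of $T(t)$ precisely with the operator $\A$ and the domain $D(\A)$ in \dref{2020428921}, because the coupling $B_{\alpha}C_{\Lambda}$ naturally acts across the extrapolation spaces $[D(A^{*})]'$ and $[D(G_{\alpha}^{*})]'$. For $(z_{0},g_{0})^{\top}$ in the domain of the generator, the first component forces $z_{0}\in D(A)$, so $v$ is continuous with $v(0)=Cz_{0}$; differentiating the formula for $g$ at $t=0$ then produces the limit $\tilde{G}_{\alpha}g_{0}+B_{\alpha}C_{\Lambda}z_{0}$, which exists in $L^{2}([0,\alpha];U)$ exactly under the membership condition in \dref{2020428921}. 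Here I would invoke the description of $G_{B_{\alpha}}$ and the identity \dref{20206292003} to check that this limit lies in $L^{2}$ iff $g_{0}\in G_{B_{\alpha}}=H^{1}([0,\alpha];U)$ with the compatible trace; alternatively one computes the block-triangular resolvent $(\lambda-\A)^{-1}$ in closed form and verifies it is the Laplace transform of $T(t)$. This matching of domains is the delicate point, but once it is in place $\A$ generates $e^{\A t}=T(t)$.

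Finally, for exponential stability I would use that $e^{G_{\alpha}t}$ is exponentially stable; indeed it vanishes for $t\ge\alpha$ by \dref{202062716}, so only the coupling term must be controlled. For $t\ge\alpha$ the transport structure gives $g(x,t)=v(t-x)$ on $[0,\alpha]$, whence
\[
\|g(\cdot,t)\|_{L^{2}([0,\alpha];U)}^{2}=\int_{t-\alpha}^{t}\|C_{\Lambda}e^{As}z_{0}\|_{U}^{2}\,ds.
\]
Writing $e^{As}=e^{Ar}e^{A(t-\alpha)}$ with $r=s-(t-\alpha)\in[0,\alpha]$, admissibility of $C$ on $[0,\alpha]$ bounds the integral by $M_{\alpha}\|e^{A(t-\alpha)}z_{0}\|_{Z}^{2}$, and the assumed exponential decay of $e^{At}$ then yields exponential decay of $\|g(\cdot,t)\|_{L^{2}}$. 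Together with the decay of $\|e^{At}z_{0}\|_{Z}$ and the uniform boundedness of $T(t)$ on the compact interval $[0,\alpha]$, this gives a global estimate $\|T(t)\|\le Me^{-\omega t}$, completing the argument.
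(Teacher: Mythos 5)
Your proposal is correct and follows essentially the same route as the paper: both exploit the cascade structure, represent the second component explicitly as the shifted output trace $C_{\Lambda}e^{At}z_{0}$ (the paper's formula \dref{20208141843}), and obtain exponential stability from admissibility of $C$ over a window of length $\alpha$ combined with the decay of $e^{At}$ --- the windowing estimate you derive by hand is exactly what the paper outsources to \cite[Proposition 4.3.6, p.124]{TucsnakWeiss2009book}. The only cosmetic difference is in how generation is concluded: the paper produces classical solutions for initial data in $D(\mathcal{A})$ and invokes \cite[Theorem 1.3, p.102]{Pazy1983Book}, whereas you construct the mild-solution semigroup directly and then identify its generator and domain, which is a sound (if slightly longer) alternative.
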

\begin{proof}

The operator $\A $ is associated with the following system:
\begin{equation}\label{20191311712FHAD}
\left\{\begin{array}{ll}
  \disp  \dot{z} (t)=A z(t),\crr
   \disp   {\phi}_t (\cdot,t)+{\phi}_x(\cdot,t)=0\ \ \mbox{in}\ \ U,\ \ \
   \phi(0,t)=C_{\Lambda}z(t).
\end{array}\right.
\end{equation}
Since $C$  is admissible for  the semigroup $e^{At}$, for any  $(z(0),\phi(\cdot,0)) ^{\top}\in D(\A)$,
we have $z(0)\in D(A)$, $z\in C^1([0,\infty); Z)  $,  $ C_{\Lambda}z \in {H}^1_{\rm loc}([0,\infty); U) $ and
\begin{equation}\label{20208141843}
\phi(x,t)=\left\{\begin{array}{ll}
   C_{\Lambda}z(t-x),& t-x\geq0,\\
  \phi(x-t,0),&t-x<0,
\end{array}\right. \ \ x\in[0,\alpha] .
\end{equation}
Therefore, system \dref{20191311712FHAD} admits  a unique continuously differentiable  solution  $(z,\phi)^\top\in C^1([0,\infty);\mathcal{Z}_{\tau}(U))$ for    any   $(z(0),\phi(\cdot,0)) ^{\top}\in D(\A)$. By \cite[Theorem 1.3, p.102]{Pazy1983Book}, the operator $\A $   generates a  $C_0$-semigroup $e^{\A  t}$ on   $ \mathcal{Z}_{\tau  }(U)$.

Finally, we show  the exponential stability.
 Suppose that $(z,\phi)^{\top}\in C([0,\infty);\mathcal{Z}_{\tau  }(U))$ is a solution of system
  \dref{20191311712FHAD}.
  Since
 $e^{A t}$  is  exponentially stable on $Z$ and  $\phi$-subsystem is independent of the $z$-subsystem,
  there exist two positive constants
$\omega_A$ and $L_A$ such that
  \begin{equation}\label{201994835FHAD}
\|z(t)\|_{Z}\leq L_Ae^{-\omega_At}\|z(0)\|_{Z},\ \ \forall\ t\geq0.
\end{equation}
Moreover, it follows from  \cite[Proposition 4.3.6, p.124]{TucsnakWeiss2009book} that
     \begin{equation}\label{201994731FHAD}
 v_{\omega}\in L^2([0, \infty );U),\ \ v_{\omega}(t)= e^{\omega t}C_{\Lambda}z(t),\ \ 0<\omega<\omega_A  ,
\end{equation}
 which, together with \dref{20208141843}, implies that $\phi(\cdot,t)$ decays to zero exponentially as $t\to\infty$. So $(z,\phi)$ decays to zero exponentially in $\mathcal{Z}_{\tau  }(U)$.
 The proof is complete.
\end{proof}

\begin{lemma}\label{Lm20207211026}
 Let  $(A,B) $ be a linear system with the state space $Z$ and input space $U$.
 Suppose that $B$  is admissible for $e^{At}$.
 Let $G_{\alpha}$   and $C_{\alpha}$ be   given by
  \dref{20205292012}  and \dref{20205302118},   respectively.
  Define  $\mathcal{Z}_{\tau  }(U) =Z\times L^2([0,\tau];U)$ and
  \begin{equation} \label{20207211026}
 \A =\begin{pmatrix}
\tilde{A} &BC_{\alpha\Lambda}\\0&\tilde{G}_{\alpha}
\end{pmatrix},
\left.\begin{array}{l}
 \disp D(\A )=\left\{\begin{pmatrix}
z\\g\end{pmatrix} \in \mathcal{Z}_{\tau  }(U)  \ \Big{|} \ \begin{array}{l}
 \tilde{A}z +BC_{\alpha\Lambda}g\in Z \\
  \tilde{G}_{\alpha} g \in L^2([0,\alpha];U)
  \end{array}  \right\}.
 \end{array}\right.
 \end{equation}
  Then,  $\A $   generates a
 $C_0$-semigroup $e^{\A  t}$ on $\mathcal{Z}_{\tau  }(U)$.
 Moreover, if we suppose  further that  $e^{A t}$ is
   exponentially stable in $Z$, then  $e^{\A  t}$ is  exponentially stable in
   $\mathcal{Z}_{\tau  }(U)$.
\end{lemma}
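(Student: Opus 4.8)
The plan is to mirror the argument of Lemma \ref{Lm201912031723}, exploiting that here $\A$ is \emph{upper}-block-triangular, so the roles of the two subsystems are interchanged: the shift component is now autonomous and drives the $z$-component through the output operator $C_{\alpha\Lambda}$. Concretely, $\A$ is associated with the cascade
\begin{equation*}
\dot z(t)=\tilde A z(t)+B\,\phi(\alpha,t),\qquad \phi_t(x,t)+\phi_x(x,t)=0,\quad \phi(0,t)=0,
\end{equation*}
in which the $\phi$-subsystem is independent of $z$. First I would solve the $\phi$-subsystem explicitly. For $(z(0),\phi(\cdot,0))^{\top}\in D(\A)$ the domain requirement $\tilde G_\alpha\phi(\cdot,0)\in L^2([0,\alpha];U)$ forces $\phi(\cdot,0)\in D(G_\alpha)$, and by the shift formula \dref{202062716} we get $\phi(\cdot,t)=e^{G_\alpha t}\phi(\cdot,0)\in C^1([0,\infty);L^2([0,\alpha];U))$ with $\phi(x,t)=\phi(x-t,0)$ for $x\ge t$ and $\phi(x,t)=0$ otherwise. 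Its output $v(t):=C_{\alpha\Lambda}\phi(\cdot,t)=\phi(\alpha,t)$ equals $\phi(\alpha-t,0)$ for $0\le t\le\alpha$ and vanishes for $t>\alpha$; since $\phi(\cdot,0)\in H^1([0,\alpha];U)$ this gives $v\in H^1_{\rm loc}([0,\infty);U)$, compactly supported in $[0,\alpha]$, with $v(0)=\phi(\alpha,0)$.

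Next I would feed $v$ into the $z$-subsystem. Because $B$ is admissible for $e^{At}$, the mild solution $z(t)=e^{At}z(0)+\int_0^t e^{\tilde A(t-s)}B v(s)\,ds$ lies in $C([0,\infty);Z)$; the remaining domain condition defining $D(\A)$ is exactly the compatibility relation $\tilde A z(0)+B v(0)\in Z$, which together with $v\in H^1_{\rm loc}$ upgrades $z$ to a continuously differentiable solution with $\dot z(t)=\tilde A z(t)+B v(t)\in Z$ and $(z(t),\phi(\cdot,t))^{\top}\in D(\A)$ for all $t$. Thus for every initial datum in $D(\A)$ the cascade admits a unique $C^1$ solution, and \cite[Theorem 1.3, p.102]{Pazy1983Book} then yields that $\A$ generates a $C_0$-semigroup $e^{\A t}$ on $\mathcal{Z}_{\tau}(U)$. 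I expect this regularity upgrade to be the main obstacle: showing that admissibility of $B$, combined with the $H^1$ regularity of the driving output $v$ and the compatibility built into $D(\A)$, produces a genuine $C^1$ solution. It is the precise dual of the step in Lemma \ref{Lm201912031723} where admissibility of $C$ guaranteed $C_\Lambda z\in H^1_{\rm loc}$.

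Finally, for exponential stability I would use the vanishing-shift structure decisively. Since the right shift empties the interval, $\phi(\cdot,t)\equiv 0$ and $v(t)=0$ for every $t\ge\alpha$, so for $t\ge\alpha$ the full state is $\big(e^{A(t-\alpha)}z(\alpha),\,0\big)^{\top}$. Writing $\int_0^t e^{\tilde A(t-s)}Bv(s)\,ds=e^{A(t-\alpha)}\int_0^{\alpha}e^{\tilde A(\alpha-s)}Bv(s)\,ds=e^{A(t-\alpha)}w$ with $w\in Z$ (admissibility of $B$), exponential stability of $e^{At}$ on $Z$ forces $z(\alpha)$, and hence $z(t)=e^{A(t-\alpha)}z(\alpha)$, to decay exponentially. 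Combining this decay with the uniform bound of $e^{\A t}$ on the compact interval $[0,\alpha]$ gives the exponential decay of $\|(z(t),\phi(\cdot,t))^{\top}\|_{\mathcal{Z}_{\tau}(U)}$, completing the proof. As an alternative to the direct regularity argument one could instead pass to the adjoint: the reflection $x\mapsto\alpha-x$ conjugates $\A^{*}$ into exactly the lower-triangular form of Lemma \ref{Lm201912031723} for the pair $(A^{*},B^{*})$, and since adjoints of $C_0$-semigroup generators on a Hilbert space are again generators, the conclusions transfer back to $\A$.
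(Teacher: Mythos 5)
Your proposal is correct and follows essentially the same route as the paper, which simply says the generation part is ``almost the same as Lemma \ref{Lm201912031723}'' and derives exponential stability from the fact that the vanishing right-shift semigroup empties after time $\alpha$ while $e^{At}$ is exponentially stable. You supply the details the paper leaves implicit (the $H^1$ regularity of the driving output $v$, the compatibility condition in $D(\A)$ upgrading the mild solution to a classical one, and the explicit formula $z(t)=e^{A(t-\alpha)}z(\alpha)$ for $t\ge\alpha$), all consistent with the paper's intended argument.
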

\begin{proof}
 Almost  the same as Lemma \ref{Lm201912031723},
 we can prove that the operator $\A $   generates a
 $C_0$-semigroup $e^{\A  t}$ on $\mathcal{Z}_{\tau  }(U)$. It suffices to prove the exponential
 stability.
 Consider the classical solution of the following system:
\begin{equation}\label{20207211035}
\left\{\begin{array}{ll}
  \disp  \dot{z} (t)=A z(t)+B\phi(\alpha,t),\crr
   \disp   {\phi}_t (\cdot,t)+{\phi}_x(\cdot,t)=0\ \ \mbox{in}\ \ U,\ \ \
   \phi(0,t)=0.
\end{array}\right.
\end{equation}
Since $e^{\tilde{G}_{\alpha}t}$ vanishes  after time $\alpha$ and $e^{A t}$ is
   exponentially stable in $Z$,  the solution  $(z,\phi)$ decays to zero exponentially in $\mathcal{Z}_{\tau  }(U)$ as $t\to\infty$.
 The proof is complete.
  \end{proof}
\end{document}